\journalname{Information Retrieval}
\renewcommand{\log}{\lg}
\newcommand{\CSA}
    {\ensuremath{\mathsf{CSA}}}
\newcommand{\ILCP}
    {\ensuremath{\mathsf{ILCP}}}
\newcommand{\LCP}
    {\ensuremath{\mathsf{LCP}}}
\newcommand{\LILCP}
    {\ensuremath{\mathsf{LILCP}}}
\newcommand{\lookup}[1]
    {\ensuremath{\mathsf{lookup}\!\left( {#1} \right)}}
\newcommand{\occ}
    {\ensuremath{\mathit{occ}}}
\newcommand{\Oh}[1]
    {\ensuremath{\mathcal{O}\!\left( {#1} \right)}}
\newcommand{\SA}
    {\ensuremath{\mathsf{SA}}}
\newcommand{\DA}
    {\ensuremath{\mathsf{DA}}}
\newcommand{\search}[1]
    {\ensuremath{\mathsf{search}\!\left( {#1} \right)}}
\newcommand{\VILCP}
    {\ensuremath{\mathsf{VILCP}}}
\newcommand{\set}[1]{\ensuremath{\{ #1 \}}}
\newcommand{\abs}[1]{\ensuremath{\lvert #1 \rvert}}
\newcommand{\avg}[1]{\ensuremath{\overline{ #1 }}}
\newcommand{\findq}{\ensuremath{\mathsf{find}}}
\newcommand{\locateq}{\ensuremath{\mathsf{locate}}}
\newcommand{\countq}{\ensuremath{\mathsf{count}}}
\newcommand{\rank}{\ensuremath{\mathsf{rank}}}
\newcommand{\select}{\ensuremath{\mathsf{select}}}
\newcommand{\rmq}{\ensuremath{\mathsf{rmq}}}
\newcommand{\tf}{\ensuremath{\mathsf{tf}}}
\newcommand{\df}{\ensuremath{\mathsf{df}}}
\newcommand{\tfidf}{\ensuremath{\textsf{tf-idf}}}
\newcommand{\Brute}{\textsf{Brute}}
\newcommand{\BruteL}{\textsf{Brute-L}}
\newcommand{\BruteD}{\textsf{Brute-D}}
\newcommand{\Sada}{\textsf{Sada}}
\newcommand{\SadaCL}{\textsf{Sada-L}} 
\newcommand{\SadaCD}{\textsf{Sada-D}} 
\newcommand{\SadaIL}{\textsf{ILCP-L}} 
\newcommand{\SadaID}{\textsf{ILCP-D}} 
\newcommand{\WT}{\textsf{WT}}
\newcommand{\Grammar}{\textsf{Grammar}}
\newcommand{\LZ}{\textsf{LZ}}
\newcommand{\PDL}{\textsf{PDL}}
\newcommand{\PDLBC}{\textsf{PDL-BC}}
\newcommand{\PDLRP}{\textsf{PDL-RP}}
\newcommand{\PDLtopk}[1]{\textsf{PDL--#1}}
\newcommand{\SURF}{\textsf{SURF}}
\newcommand{\SadaR}{\textsf{Sada-RR}} 
\newcommand{\SadaPG}{\textsf{Sada-P-G}} 
\newcommand{\SadaPR}{\textsf{Sada-P-RR}} 
\newcommand{\SadaRG}{\textsf{Sada-RR-G}} 
\newcommand{\SadaRR}{\textsf{Sada-RR-RR}} 
\newcommand{\SadaG}{\textsf{Sada-Gr}} 
\newcommand{\sadaR}{\textsf{Sada-RS}} 
\newcommand{\sadaRS}{\textsf{Sada-RS-S}} 
\newcommand{\sadaD}{\textsf{Sada-RD}} 
\newcommand{\sadaDS}{\textsf{Sada-RD-S}} 
\newcommand{\sadaS}{\textsf{Sada-S-S}} 
\newcommand{\sadaSS}{\textsf{Sada-S}} 
\newcommand{\wt}{\textsf{ILCP}} 
\newcommand{\Enwiki}{\textsf{Enwiki}}
\newcommand{\Page}{\textsf{Page}}
\newcommand{\Revision}{\textsf{Revision}}
\newcommand{\Influenza}{\textsf{Influenza}}
\newcommand{\Swissprot}{\textsf{Swissprot}}
\newcommand{\DNA}{\textsf{DNA}}
\newcommand{\Concat}{\textsf{Concat}}
\newcommand{\Version}{\textsf{Version}}
\newcommand{\Wiki}{\textsf{Wiki}}
\begin{document}

\title{Document Retrieval on Repetitive String Collections
\thanks{Preliminary partial versions of this paper appeared in Proc. CPM 2013,
Proc. ESA 2014, and Proc. DCC 2015. Part of this work was done while the first 
author was at the University of Helsinki and the third author was at Aalto 
University, Finland.}
}


\author{Travis Gagie       \and
	Aleksi Hartikainen \and
        Kalle Karhu        \and
	Juha K{\"a}rkk{\"a}inen \and
        Gonzalo Navarro    \and
        Simon J. Puglisi   \and
        Jouni Sir\'en      \and
}

\authorrunning{T. Gagie et al.} 

\institute{
   Travis Gagie \at
   CeBiB --- Center of Biotechnology and Bioengineering, \\
   School of Computer Science and Telecommunications, Diego Portales University, Chile\\
   \email{travis.gagie@gmail.com}
\and
   Aleksi Hartikainen \at
   Google Inc, USA\\
   \email{ahartik@gmail.com}
\and
   Kalle Karhu \at
   Research and Technology, Planmeca Oy, Finland\\
   \email{kalle.karhu@iki.fi}
\and
   Juha K{\"a}rkk{\"a}inen \at
   Helsinki Institute of Information Technology, Department of Computer Science, University of Helsinki, Finland\\
   \email{tpkarkka@cs.helsinki.fi}
\and
   Gonzalo Navarro \at
   CeBiB --- Center of Biotechnology and Bioengineering, \\
   Department of Computer Science, University of Chile, Chile\\
   \email{gnavarro@dcc.uchile.cl}
\and
   Simon J. Puglisi \at
   Helsinki Institute of Information Technology, Department of Computer Science, University of Helsinki, Finland\\
   \email{puglisi@cs.helsinki.fi}
\and
   Jouni Sir\'en (\Letter) \at
   Wellcome Trust Sanger Institute, UK\\
   \email{jouni.siren@iki.fi}
}

\date{Received: date / Accepted: date}

\maketitle

\begin{abstract}
Most of the fastest-growing string collections today are repetitive, that is,
most of the constituent documents are similar to many others. As these collections keep growing,
a key approach to handling them is to exploit their repetitiveness, which can reduce 
their space usage by orders of magnitude.
We study the problem of indexing repetitive string collections in order to perform 
efficient document retrieval operations on them. Document retrieval problems 
are routinely solved by search engines on large natural language collections,
but the techniques are less developed on generic string collections. The case 
of repetitive string collections is even less understood, and there are very few existing
solutions. We develop two novel ideas, {\em interleaved LCPs} and {\em
precomputed document lists}, that yield highly compressed indexes solving
the problem of document listing (find all the documents where a string
appears), top-$k$ document retrieval (find the $k$ documents where a string 
appears most often), and document counting (count the number of documents
where a string appears). We also show that a classical data structure 
supporting the latter query becomes highly compressible on repetitive data. 
Finally, we show how the tools we developed can be combined to solve ranked 
conjunctive and disjunctive multi-term queries under the simple tf-idf model 
of relevance. We thoroughly evaluate the resulting techniques in various 
real-life repetitiveness scenarios, and recommend the best choices for
each case.

\keywords{Repetitive string collections \and Document retrieval on strings \and
Suffix trees and arrays}
\end{abstract}

\section{Introduction}

Document retrieval on natural language text collections is a routine activity 
in web and enterprise search engines. It is solved with variants of the 
inverted index \citep{BCC10,BYRN11}, an immensely successful technology that 
can by now be considered mature. The inverted index has well-known limitations,
however: the text must be easy to
parse into {\em terms} or {\em words}, and queries must be sets of words or
sequences of words ({\em phrases}). Those limitations are acceptable in most
cases when {\em natural language} text collections are indexed, and they enable
the use of an extremely simple index organization that is efficient and 
scalable, and that has been the key to the success of Web-scale information
retrieval. 

Those limitations, on the other hand, hamper the use of the inverted
index in other kinds of string collections where partitioning the text into
words and limiting queries to word sequences is inconvenient, difficult, or 
meaningless:
DNA and protein sequences, source code, music streams, and even some East Asian
languages. Document retrieval queries are of interest in those string 
collections, but the state of the art about alternatives to the inverted index
is much less developed \citep{HPSTV13,NavACMcs14}.

In this article we focus on {\em repetitive string collections}, where most of
the strings are very similar to many others.
These types of collections arise naturally in scenarios like versioned
document collections (such as Wikipedia\footnote{{\tt www.wikipedia.org}} or 
the Wayback Machine\footnote{From the Internet Archive,
{\tt www.archive.org/web/web.php}}),
versioned software repositories, periodical data publications in text form
(where very similar data is published over and over),
sequence databases with genomes of individuals of the same species (which
differ at relatively few positions), and so on.
Such collections are the fastest-growing ones today. For example, genome
sequencing data is expected to grow at least as fast as astronomical,
YouTube, or Twitter data by 2025, exceeding Moore's Law rate by a wide
margin \citep{Plos15}.
This growth brings new scientific opportunities but it also creates new 
computational problems.  

A key tool for handling this kind of growth is to exploit repetitiveness to obtain size
reductions of orders of magnitude. An appropriate Lempel-Ziv compressor%
\footnote{Such as p7zip, {\tt http://p7zip.sourceforge.net}} can successfully
capture such repetitiveness, and version control systems have offered direct 
access to any version since their beginnings, by means of storing the edits of a
version with respect to some other version that is stored in full \citep{Roc75}.
However, document retrieval requires much more than retrieving individual
documents. In this article we focus on three basic document retrieval
problems on string collections:

\begin{description}
\item[{\em Document Listing:}] Given a string $P$, list the identifiers of all the $\df$ 
documents where $P$ appears.
\item[{\em Top-$k$ Retrieval:}] Given a string $P$ and $k$, list $k$ documents
where $P$ appears most often.
\item[{\em Document Counting:}] Given a string $P$, return the number $\df$ of
documents where $P$ appears.
\end{description}

Apart from the obvious case of information retrieval on East Asian and other
languages where separating words is difficult, these queries are relevant in
many other applications where string collections are maintained.
For example, in pan-genomics \citep{Marschall043430} we index the genomes of all the strains of an organism.  The index can be either a specialized data structure, such as a colored de Bruijn graph, or a text index over the concatenation of the individual genomes.  The parts of the genome common to all strains are called core; the parts common to several strains are called peripheral; and the parts in only one strain are called unique.  Given a set of DNA reads from an unidentified strain, we may want to identify it (if it is known) or find the closest strain in our database (if it is not), by identifying reads from unique or peripheral genomes (i.e., those that occur rarely) and listing the corresponding strains. This boils down to document listing and counting problems. In turn, top-$k$
retrieval is at the core of information retrieval systems, since the {\em term
frequency} $\tf$ (i.e., the number of times a pattern appears in a document) is a basic criterion to establish the relevance of a document for a query
\citep{BCC10,BYRN11}. On multi-term queries, it is usually combined with the
document frequency, $\df$, to compute \tfidf, a simple and popular
relevance model. Document counting is also important for data mining 
applications on strings (or {\em string mining} \citep{DPT12}), where the value
$\df/d$ of a given pattern, $d$ being the total number of documents, is its 
{\em support} in the collection. Finally, we will show that the best choice of 
document listing and top-$k$ retrieval algorithms in practice strongly depends 
on the $\df/\occ$ ratio, where $\occ$ is the number of times the pattern appears
in the collection, and thus the ability to compute $\df$ quickly allows for 
the efficient selection of an appropriate listing or top-$k$ algorithm at query
time. \cite{NavACMcs14} lists several other applications of these queries.

In the case of natural language, there exist various proposals to
reduce the inverted index size by exploiting the text
repetitiveness \citep{AF92,BEFHLMQS06,HYS09,HZS10,He:Sigir12,CFMNis16}.
For general string collections, the situation is much worse. Most of the
indexing structures designed for repetitive string collections
\citep{MNSV09,CFMPN10,CN11,CN12,KNtcs12,GGKNP12,GGKNP14,DJSS14,BCGPR15} support
only {\em pattern matching}, that is, they count or list the $\occ$ 
occurrences of a pattern $P$ in the whole collection. Of course one
can retrieve the $\occ$ occurrences and then answer any of our three document
retrieval queries, but the time will be $\Omega(\occ)$. Instead, there are 
optimal-time indexes for string collections that solve document listing in time
$\Oh{\abs{P}+\df}$ \citep{Mut02}, top-$k$ retrieval in time $\Oh{\abs{P}+k}$ 
\citep{NN12}, and document counting in time $\Oh{\abs{P}}$ \citep{Sad07}.
The first two solutions, however, use a lot of space even for classical, 
non-repetitive collections. While more compact representations have been
studied \citep{HPSTV13,NavACMcs14}, none of those is tailored to the repetitive 
scenario, except for a grammar-based index that solves document listing 
\citep{CM13}.

In this article we develop several novel solutions for the three document
retrieval queries of interest, tailored to repetitive string collections.
Our first idea, called {\em interleaved LCPs (ILCP)} stores the longest
common prefix (LCP) array of the documents, interleaved in the order of the
global LCP array. The ILCP turns out to have a number of
interesting properties that make it compressible on repetitive collections,
and useful for document listing and counting. Our second idea, {\em 
precomputed document lists (PDL)}, samples some nodes in the global suffix 
tree of the collection and stores precomputed answers on those. It then
applies grammar compression on the stored answers, which is effective when the
collection is repetitive. PDL yields very efficient solutions for document
listing and top-$k$ retrieval. Third, we show that a solution for document
counting \citep{Sad07} that uses just two bits per symbol (bps) in the worst 
case (which is unacceptably high in the repetitive scenario) turns out to be 
highly compressible when the collection is repetitive, and becomes the most
attractive solution for document counting. 
Finally, we show how the different components of our solutions can be
assembled to offer \tfidf\ ranked conjunctive and disjunctive multi-term
queries on repetitive string collections. 

We implement and experimentally compare several
variants of our solutions with the state of the art, including the solution
for repetitive string collections \citep{CM13} and some relevant solutions
for general string collections \citep{FN13,GNalenex15}. We consider various
kinds of real-life repetitiveness scenarios, and show which solutions are
the best depending on the kind and amount of repetitiveness, and the space 
reduction that 
can be achieved. For example, on very repetitive collections of up to
1~GB we perform 
document listing and top-$k$ retrieval in 10--100 microseconds per result
and using 1--2 bits per symbol. For counting, we use as little as
0.1 bits per symbol and answer queries in less than a microsecond.
Multi-term top-$k$ queries can be solved with a throughput of 100-200 
queries per second, which we show to be similar to that of
a state-of-the-art inverted index. Of course, we do not aim to compete with
inverted indexes in the scenarios where they can be applied (mainly, in
natural language text collections), but to offer similar functionality in
the case of generic string collections, where inverted indexes cannot be used.

This article collects our earlier results appearing in {\em CPM 2013}
\citep{GKNPS13}, {\em ESA 2014} \citep{NPS14}, and {\em DCC 2015}
\citep{GHKNPS15}, where we focused on exploiting repetitiveness in different
ways to handle different document retrieval problems. Here we present them 
in a unified form, considering the application of two new techniques (ILCP
and PDL) and an existing one \citep{Sad07} to the three problems (document listing,
top-$k$ retrieval, and document counting), and showing how they interact (e.g.,
the need to use fast document counting to choose the best document listing 
method). In this article we also consider a more complex document retrieval
problem we had not addressed before: top-$k$ retrieval of multi-word queries. 
We present an algorithm that uses our (single-term) top-$k$ retrieval and 
document counting structures to solve ranked multi-term conjunctive and 
disjunctive queries under the \tfidf\ relevance model.

The article is organized as follows (see Table~\ref{tab:org}). 
In Section~\ref{sec:prelim} we introduce
the concepts needed to follow the presentation. In Section~\ref{sec:ilcp} we
introduce the Interleaved LCP (ILCP) structure and show 
how it can be used for document 
listing and, with a different representation, for document counting. In
Section~\ref{sec:pdl} we introduce our second structure, Precomputed Document
Lists (PDL), and describe how it can be used for document listing and, with
some reordering of the lists, for top-$k$ retrieval. Section~\ref{sec:count}
then returns to the problem of document counting, not to propose a new data
structure but to study a known one \citep{Sad07}, which is found to be 
compressible in a
repetitiveness scenario (and, curiously, on totally random texts as well).
Section~\ref{sec:tfidf} shows how our developments can be combined to build
a document retrieval index that handles multi-term queries.
Section~\ref{sec:exp} empirically studies the performance of our solutions
on the three document retrieval problems, also comparing them with the
state of the art for generic string collections, repetitive or not, and
giving recommendations on which structure to use in each case. Finally,
Section~\ref{sec:concl} concludes and gives some future work directions.

\begin{table}[b]
\begin{center}
\begin{tabular}{l|c|c|c}
Problem & ILCP                         & PDL                        & Sadakane\\
\hline
Listing & Section~\ref{sec:repet}      & Section~\ref{sec:pdl-list} & \\
Top-$k$ &                              & Section~\ref{sec:topk}     & \\
Counting& Section~\ref{sec:ilcp-count} &                            & Section~\ref{sec:count}
\end{tabular}
\end{center}
\caption{The techniques we study and the document retrieval problems we
solve with them.}
\label{tab:org}
\end{table}

\section{Preliminaries} \label{sec:prelim}

\subsection{Suffix Trees and Arrays}

A large number of solutions for pattern matching or document retrieval on
string collections rely on the suffix tree \citep{Wei73} or the suffix array
\citep{MM93}. Assume that we have a collection of $d$ strings, each terminated with
a special symbol ``\$'' (which we consider to be lexicographically smaller than any other symbol), and let $T[1..n]$ be their concatenation. The
suffix tree of $T$ is a compacted digital tree where all the suffixes $T[i..n]$ are
inserted. Collecting the leaves of the suffix tree yields the suffix array, 
$\SA[1..n]$, which is an array of pointers to all the suffixes sorted in 
increasing lexicographic order, that is, $T[\SA[i]..n] < T[\SA[i+1]..n]$ for 
all $1 \le i < n$. To find all the $\occ$ occurrences of a string $P[1..m]$ in 
the collection, we traverse the suffix tree following the symbols of $P$ and
output the leaves of the node we arrive at, called the {\em locus} of $P$,
in time $\Oh{m+\occ}$. On a suffix
array, we obtain the range $\SA[\ell..r]$ of the leaves (i.e., of the suffixes
prefixed by $P$) by binary search, and then list the contents of the range,
in total time $\Oh{m\log n+\occ}$.

We will make use of compressed suffix arrays \citep{NM07}, which we will call
generically $\CSA$s. Their size in bits is denoted $\abs{\CSA}$, their time to
find $\ell$ and $r$ is denoted $\search{m}$, and their time to access any
cell $\SA[i]$ is denoted $\lookup{n}$. 
A particular version of the $\CSA$ that is tailored
for repetitive collections is the Run-Length Compressed Suffix Array (RLCSA) \citep{MNSV09}.

\subsection{Rank and Select on Sequences} \label{sec:ranksel}

Let $S[1..n]$ be a sequence over an alphabet $[1..\sigma]$. When $\sigma=2$ we
use $0$ and $1$ as the two symbols, and the sequence is called a bitvector.
Two operations of interest on $S$ are $\rank_c(S,i)$, which counts the number
of occurrences of symbol $c$ in $S[1..i]$, and $\select_c(S,j)$, which gives
the position of the $j$th occurrence of symbol $c$ in $S$. For bitvectors, one
can compute both functions in $\Oh{1}$ time using $o(n)$ bits on top of $S$
\citep{Cla96}. If $S$ contains $m$ $1$s, we can also represent it using
$m\lg\frac{n}{m}+\Oh{m}$ bits, so that $\rank$ takes $\Oh{\lg\frac{n}{m}}$ time
and $\select$ takes $\Oh{1}$ \citep{OS07}\footnote{This is achieved by using a 
constant-time $\rank$/$\select$ solution \citep{Cla96} to represent their 
internal bitvector $H$.}.

The wavelet tree \citep{GGV03} is a tool for extending bitvector representations
to sequences. It is a binary tree where the alphabet $[1..\sigma]$ is
recursively partitioned. The root represents $S$ and stores a
bitvector $W[1..n]$ where $W[i]=0$ iff symbol $S[i]$ belongs to the left child.
Left and right children represent a subsequence of $S$ formed by the symbols
of $[1..\sigma]$ they handle, so they recursively store a bitvector and so on
until reaching the leaves, which represent a single symbol. By giving
constant-time $\rank$ and $\select$ capabilities to the bitvectors associated
with the nodes, the wavelet tree can compute any $S[i]=c$, $\rank_c(S,i)$,
or $\select_c(S,j)$ in time proportional to the depth of the leaf of $c$.
If the bitvectors are represented in a certain compressed form \citep{RRR07},
then the total space is at most $n\lg\sigma + o(nh)$, where $h$ is the wavelet
tree height, independent of the way the alphabet is partitioned \citep{GGV03}.

\subsection{Document Listing} \label{sec:muthu}

Let us now describe the optimal-time algorithm of \cite{Mut02} for document 
listing. \citeauthor{Mut02} stores the suffix tree of $T$; a
so-called \emph{document array} \(\DA[1..n]\) of $T$, in which each cell \(\DA[i]\)
stores the identifier of the document containing \(T [\SA [i]]\); an array \(C
[1..n]\), in which each cell \(C [i]\) stores the largest value \(h < i\) such
that \(\DA[h] = \DA[i]\), or 0 if there is no such value $h$; and a data
structure supporting range-minimum queries (RMQs) over $C$,
$\rmq_C(i,j) = \arg \min_{i \le k \le j} C[k]$.  These data
structures take a total of $\Oh{n \log n}$ bits.  Given a pattern \(P
[1..m]\), the suffix tree is used to find the interval \(\SA [\ell..r]\) that
contains the starting positions of the suffixes prefixed by $P$.  It follows
that every value $C[i] < \ell$ in $C[\ell..r]$ corresponds to a
distinct document in $\DA[i]$. Thus a recursive algorithm finding all those
positions $i$ starts with $k = \rmq_C(\ell,r)$. If $C[k] \ge \ell$ it
stops. Otherwise it reports document $\DA[k]$ and continues recursively with
the ranges $C[\ell..k-1]$ and $C[k+1..r]$ (the condition $C[k] \ge \ell$ always
uses the original $\ell$ value). In total, the algorithm uses $\Oh{m + \df}$ time, where $\df$ is the number of documents returned.

\cite{Sad07} proposed a space-efficient version of this algorithm, using just
$|\CSA|+\Oh{n}$ bits. The suffix tree is replaced with a $\CSA$. The array 
$\DA$ is replaced with a bitvector $B[1..n]$
such that $B[i]=1$ iff $i$ is the first symbol of a document in $T$. Therefore
$\DA[i] = \rank_1(B,\SA[i])$ can be computed in constant time \citep{Cla96}. The
RMQ data structure is replaced with a variant \citep{FH11} that uses just 
$2n+o(n)$ bits and answers queries in constant time without accessing $C$.
Finally, the comparisons $C[k] \ge \ell$ are replaced by marking the documents
already reported in a bitvector $V[1..d]$ (initially all 0s), so that $V[\DA[i]]=1$ iff document 
$\DA[i]$ has already been reported. If $V[\DA[i]]=1$ the recursion stops, otherwise
it sets $V[\DA[i]]$, reports $\DA[i]$, and continues. This is correct as long as
the RMQ structure returns the leftmost minimum in the range, and the range 
$[\ell..k-1]$ is processed before the range $[k+1..r]$ \citep{NavACMcs14}. The
total time is then $\Oh{\search{m}+\df \cdot \lookup{n}}$.

\section{Interleaved LCP}
\label{sec:ilcp}

We introduce our first structure, the Interleaved LCP
(ILCP). The main idea is to interleave the longest-common-prefix (LCP) 
arrays of the documents, in the order given by the global LCP of the collection.
This yields long runs of equal values on repetitive collections, making the
ILCP structure run-length compressible. Then, we show that the classical 
document listing technique of \citet{Mut02}, designed to work on a completely different 
array, works almost verbatim over the ILCP array, and this yields a new 
document listing technique of independent interest for string 
collections. Finally, we show that a particular representation of the ILCP
array allows us to count the number of documents where a string appears without
having to list them one by one.

\subsection{The ILCP Array}

The longest-common-prefix array \(\LCP_S [1..\abs{S}]\) of a string $S$ is defined such that \(\LCP_S [1] = 0\) and, for \(2 \leq i \leq \abs{S}\), \(\LCP_S [i]\) is the length of the longest common prefix of the lexicographically \((i - 1)\)th and $i$th suffixes of $S$, that is, of $S[\SA_S[i-1]..\abs{S}]$ and $S[\SA_S[i]..\abs{S}]$, where $\SA_S$ is the suffix array of $S$.  We define the interleaved LCP array of $T$, \ILCP, to be the interleaving of the LCP arrays of the individual documents according to the document array.

\begin{definition}
Let $T[1..n] = S_1 \cdot S_2 \dotsm S_d$ be the concatenation of documents $S_j$,
$\DA$ the document array of $T$, and
$\LCP_{S_j}$ the longest-common-prefix array of string $S_j$. Then the
{\em interleaved LCP array of $T$} is defined, for all $1 \le i \le n$, as
\[\ILCP[i] ~~=~~ \LCP_{S_{\DA[i]}}\left[ \rank_{\DA[i]}(\DA,i) \right].\]
\end{definition}

That is, if the suffix $\SA[i]$ belongs to document $S_j$ (i.e., $\DA[i]=j$), 
and this is the $r$th suffix of $\SA$ that belongs to $S_j$ (i.e.,
$r = \rank_j(\DA,i)$), then $\ILCP[i] = \LCP_{S_j}[r]$. Therefore the order
of the individual $\LCP$ arrays is preserved in $\ILCP$.

\paragraph{Example}
Consider the documents $S_1=\texttt{"TATA\$"}$, $S_2=\texttt{"LATA\$"}$, and
$S_3=\texttt{"AAAA\$"}$. Their concatenation is $T=\texttt{"TATA\$LATA\$AAAA\$"}$, its suffix array is $\SA=\langle 15,10,5,14,9,4,13,12,11,7,2,6,8,3,1\rangle$
and its document array is $\DA=\langle
\mathit{3},\mathbf{2},1,\mathit{3},\mathbf{2},1,\mathit{3},\mathit{3},\mathit{3},\mathbf{2},1,\mathbf{2},\mathbf{2},1,1\rangle$.
The LCP arrays of the documents are $\LCP_{S_1}=\langle 0,0,1,0,2\rangle$,
$\LCP_{S_2}=\langle
\mathbf{0},\mathbf{0},\mathbf{1},\mathbf{0},\mathbf{0}\rangle$, and
$\LCP_{S_3}=\langle
\mathit{0},\mathit{0},\mathit{1},\mathit{2},\mathit{3}\rangle$. Therefore,
$\ILCP = \langle
\mathit{0},\mathbf{0},0,\mathit{0},\mathbf{0},0,\mathit{1},\mathit{2},\mathit{3},\mathbf{1},1,\mathbf{0},\mathbf{0},0,2\rangle$
interleaves the $\LCP$ arrays in the order given by $\DA$ (notice the fonts
above).

\medskip

The following property of $\ILCP$ makes it suitable for document retrieval.

\begin{lemma} \label{lem:leftmost}
Let $T[1..n] = S_1 \cdot S_2 \dotsm S_d$ be the concatenation of documents $S_j$,
$\SA$ its suffix array and $\DA$ its document array.
Let \(\SA [\ell..r]\) be the interval that contains the starting positions of suffixes prefixed by a pattern \(P [1..m]\). Then the leftmost occurrences of the distinct document identifiers in $\DA[\ell..r]$ are in the same positions as the values strictly less than $m$ in $\ILCP[\ell..r]$.
\end{lemma}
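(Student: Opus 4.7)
The plan is to fix a position $i \in [\ell..r]$, set $j = \DA[i]$, and analyze $\ILCP[i]$ directly from the definition. The key preliminary observation is that the suffixes of $T$ lying inside document $S_j$ appear in $\SA$ (and hence in $\DA$) in the same relative order as their corresponding suffixes appear in $\SA_{S_j}$; this is a standard consequence of the terminators \$ being unique across documents and lexicographically smallest, so any comparison between two $T$-suffixes that start inside $S_j$ is decided before either of them crosses the \$ at the end of $S_j$. Consequently, the positions of $[\ell..r]$ whose document value is $j$ are, in left-to-right order, in one-to-one correspondence with a contiguous interval $[a_j..b_j]$ of $\SA_{S_j}$, namely the interval of suffixes of $S_j$ prefixed by $P$.

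With this correspondence in hand, I would plug into $\ILCP[i] = \LCP_{S_j}[\rank_j(\DA,i)]$ to translate the claim into one about $\LCP_{S_j}$: if $i$ is the $s$-th position of $[\ell..r]$ with $\DA[i] = j$, then $\ILCP[i] = \LCP_{S_j}[a_j + s - 1]$. The leftmost occurrence of document $j$ in $\DA[\ell..r]$ corresponds to $s = 1$, i.e., to the entry $\LCP_{S_j}[a_j]$.

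The remainder is a two-case check. For $s = 1$: either $a_j = 1$, so $\LCP_{S_j}[1] = 0 < m$, or $a_j > 1$, in which case $\LCP_{S_j}[a_j]$ is the LCP of the suffix at $\SA_{S_j}[a_j-1]$ (which lies just outside the $P$-interval of $\SA_{S_j}$ and is therefore not prefixed by $P$) and the suffix at $\SA_{S_j}[a_j]$ (which is prefixed by $P$), so the LCP is strictly less than $m$. For $s \geq 2$: $\LCP_{S_j}[a_j + s - 1]$ measures two consecutive $S_j$-suffixes that are both prefixed by $P$, so the value is at least $m$. Combining both cases yields $\ILCP[i] < m$ if and only if $i$ is the leftmost position of $[\ell..r]$ carrying document $\DA[i]$, which is exactly the claim.

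The only nontrivial step is the preliminary order-preservation property and, as a corollary, the fact that the $j$-positions of $[\ell..r]$ correspond to a contiguous interval of $\SA_{S_j}$ — specifically, the $P$-prefixed interval. Once that reduction is in place, the proof is a direct reading of two adjacent $\LCP_{S_j}$ entries against the pattern length $m$, with no further technical difficulty.
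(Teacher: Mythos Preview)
Your proposal is correct and follows essentially the same route as the paper's proof: both establish that the order of the $j$-labeled positions in $\DA$ matches the order in $\SA_{S_j}$ (using the \$ terminators), identify the $P$-interval $[a_j..b_j]$ in $\SA_{S_j}$ (the paper calls it $[\ell_j..r_j]$), and then read off $\LCP_{S_j}[a_j] < m$ versus $\LCP_{S_j}[a_j+s-1] \ge m$ for $s\ge 2$. Your write-up is arguably a touch more careful in handling the boundary case $a_j=1$ explicitly, but the argument is the same.
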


\begin{proof}
Let $\SA_{S_j}[\ell_j..r_j]$ be the interval of all the suffixes of $S_j$
starting with $P[1..m]$. Then $\LCP_{S_j}[\ell_j] < m$, as
otherwise $S_j[\SA[\ell_j-1]..\SA[\ell_j-1]+m-1] = S_j[\SA[\ell_j]..\SA[\ell_j]+m-1] = P$ as well,
contradicting the definition of $\ell_j$. For the same reason, it holds that
$\LCP_{S_j}[\ell_j+k] \ge m$ for all $1 \le k \le r_j-\ell_j$.

Now let $S_j$ start at position $p_j+1$ in $T$, where $p_j = \abs{S_1 \dotsm S_{j-1}}$.
Because each $S_j$ is terminated by ``\$'', the lexicographic
ordering between the suffixes $S_j[k..]$ in $\SA_{S_j}$ is the same as that of the
corresponding suffixes $T[p_j+k..]$ in $\SA$. Hence
$\langle \SA[i] \mid \DA[i] = j, 1\le i\le n \rangle =
\langle p_j+\SA_{S_j}[i] \mid 1\le i\le \abs{S_j} \rangle$. Or, put
another way, $\SA[i] = p_j + \SA_{S_j}[\rank_j(\DA,i)]$ whenever $\DA[i]=j$.

Now let $f_j$ be the leftmost occurrence of $j$ in $\DA[\ell..r]$. This
means that $\SA[f_j]$ is the lexicographically first suffix of $S_j$ that
starts with $P$. By the definition of $\ell_j$, it holds that
$\ell_j = \rank_j(\DA,f_j)$. Thus, by definition of $\ILCP$, it holds that
$\ILCP[f_j] = \LCP_{S_j}[\rank_j(\DA,f_j)] = \LCP_{S_j}[\ell_j] < m$, whereas
all the other $\ILCP[k]$ values, for $\ell \le k \le r$, where $\DA[k]=j$,
must be $\ge m$.
\qed
\end{proof}

\paragraph{Example}
In the example above, if we search for $P[1..2]=\texttt{"TA"}$, the resulting 
range is $\SA[13..15]=\langle 8,3,1 \rangle$. The corresponding range
$\DA[13..15]=\langle 2,1,1\rangle$ indicates that the occurrence at 
$\SA[13]$ is in $S_2$ and those in $\SA[14..15]$ are in $S_1$. According to
the lemma, it is sufficient to report the documents $\DA[13]=2$ and $\DA[14]=1$,
as those are the positions in $\ILCP[13..15] = \langle 0,0,2 \rangle$ with
values less than $\abs{P} = 2$.

\medskip

Therefore, for the purposes of document listing, we can replace the $C$ array
by \ILCP\ in Muthukrishnan's algorithm (Section~\ref{sec:muthu}): 
instead of recursing until we have listed
all the positions $k$ such that $C[k] < \ell$, we recurse until we list all
the positions $k$ such that $\ILCP[k] < m$. Instead of using it directly,
however, we will design a variant that exploits repetitiveness in the string
collection.

\subsection{ILCP on Repetitive Collections}\label{sec:runs}

The array $\ILCP$ has yet another property, which makes it
attractive for repetitive collections: it contains long runs of equal values.
We give an analytic proof of this fact under a model where 
a base document $S$ is generated at random under the very general A2 
probabilistic model of \cite{Szp93}\footnote{This model states that the 
statistical dependence of a symbol from previous ones tends to zero as 
the distance towards them tends to infinity. 
The A2 model includes, in particular, the Bernoulli 
model (where each symbol is generated independently of the context), stationary
Markov chains (where the probability of each symbol depends on the previous 
one), and $k$th order models (where each symbol depends on the $k$ previous 
ones, for a fixed $k$).}, and the 
collection is formed by performing some edits on $d$ copies of $S$.

\begin{lemma} \label{lem:rho}
Let $S[1..r]$ be a string generated under Szpankowski's A2 model.
Let $T$ be formed by concatenating $d$ copies
of $S$, each terminated with the special symbol ``\$'', and then carrying
out $s$ edits (symbol insertions, deletions, or substitutions) at
arbitrary positions in $T$ (excluding the `\$'s).
Then, almost surely (a.s.\footnote{This
is a very strong kind of convergence. A sequence $X_n$ tends to a value $\beta$
almost surely if, for every $\epsilon>0$, the probability that
$\abs{X_N/\beta-1} > \epsilon$ for some $N>n$ tends to zero as $n$ tends to infinity,
$\lim_{n\rightarrow \infty}\sup_{N>n} \mathrm{Pr}(\abs{X_N/\beta-1}>\epsilon)=0$.}),
the $\ILCP$ array of $T$ is formed by
$\rho \le r + \Oh{s\log (r+s)}$ runs of equal values.
\end{lemma}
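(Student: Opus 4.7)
My plan is to split the bound into two parts: a structural argument showing that the unperturbed $\ILCP$ array has at most $r+1$ runs, and a perturbation argument charging $\Oh{\log(r+s)}$ new runs to each of the $s$ edits.

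For the base case $s=0$, where $T = (S\$)^d$, I would observe that the suffix array decomposes into $\abs{\SA_S}$ contiguous blocks, one per rank $k$ of $\SA_S$: block $k$ collects the $d$ suffixes $T[p_j + \SA_S[k]..]$, one from each document, which all share the prefix $S[\SA_S[k]..r]\$$. Inside this block $\DA$ is a permutation of $[1..d]$ and, since exactly $k-1$ suffixes of each document have already appeared in previous blocks, $\rank_{\DA[i]}(\DA,i)$ is constantly $k$. By definition of $\ILCP$, the block is constantly $\LCP_S[k]$ and contributes a single run, giving $\rho \le r+1$ for $s=0$.

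For the perturbation argument, I would invoke the standard consequence of Szpankowski's A2 analysis that the height of the suffix tree of a length-$N$ A2 string (and hence every entry of its LCP array) is $\Oh{\log N}$ almost surely; applied with $N = rd + \Theta(s)$, this yields a uniform a.s.\ bound $h = \Oh{\log(r+s)}$ on every $\ILCP$ entry, since each $\ILCP[i]$ is an LCP inside an individual document, which is a substring of $T$. I would then introduce the edits one at a time. An edit at position $p$ changes the content of only the $\Oh{h}$ \emph{affected} suffixes, namely those whose first $h$ characters intersect the edit point; any two unaffected suffixes agree with their pre-edit content within their first $h$ characters and so preserve their relative lexicographic order, their document label, and their $\rank_{\DA[i]}(\DA,i)$ value. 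Hence the new $\SA$, $\DA$, and $\ILCP$ differ from the old ones only through the reinsertion of the $\Oh{h}$ affected suffixes into the sorted order.

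The main obstacle will be making this last claim quantitative, i.e., showing that each reinsertion contributes only $\Oh{1}$ new run boundaries to $\ILCP$. Moving an affected suffix from position $a$ to position $b$ in $\SA$ shifts $\rank_j(\DA,i)$ by $\pm 1$ throughout the range between $a$ and $b$ for its document $j$, so every $\ILCP[i]$ with $\DA[i]=j$ in that range gets replaced by a neighboring entry of $\LCP_{S_j}$. The subtle point is that the positions with $\DA[i]=j$ provide a sequential scan of $\LCP_{S_j}$, and the shift merely slides this scan by one index, which can change the run count contributed by document $j$ only at the two endpoints, while new boundaries between positions with $\DA[i]=j$ and $\DA[i]\neq j$ can appear only next to the $\Oh{h}$ affected positions themselves. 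Adding $\Oh{h}$ new runs per edit to the initial count $r+1$ then yields $\rho \le r + \Oh{s\log(r+s)}$ almost surely.
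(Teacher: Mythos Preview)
Your proposal follows essentially the same two-step approach as the paper: first establish the $r+1$ runs for the unedited $T=(S\$)^d$ via the block decomposition of $\SA$ into $d$-tuples of equal-offset suffixes, and then charge $\Oh{\log(r+s)}$ broken runs to each of the $s$ edits by invoking Szpankowski's a.s.\ bound on suffix-tree depth to limit the number of suffixes whose position in $\SA$ is disturbed. You are more explicit than the paper on the perturbation step---the paper simply asserts that only $\Oh{\log(r+s)}$ suffixes move in $\SA$ per edit and that ``possibly the corresponding runs in \ILCP{} are broken,'' whereas you spell out the rank-shift interaction with $\LCP_{S_j}$---but the underlying argument and the level of rigor are the same.
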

\begin{proof}
Before applying the edit operations, we have $T = S_1 \dotsm S_d$
and $S_j=S\$$ for all $j$. At this point, $\ILCP$ is formed by at most $r+1$ runs of equal
values, since the $d$ equal suffixes $S_j[\SA_{S_j}[i]..r+1]$ must be contiguous
in the suffix array $\SA$ of $T$, in the area $\SA[(i-1)d+1..id]$.
Since the values $l=\LCP_{S_j}[i]$ are also equal, and $\ILCP$ values are
the $\LCP_{S_j}$ values listed in the order of $\SA$, it follows that
$\ILCP[(i-1)d+1..id]=l$ forms
a run, and thus there are $r+1=n/d$ runs in $\ILCP$.
Now, if we carry out $s$ edit operations on $T$, any $S_j$ will be of length
at most $r+s+1$.
Consider an arbitrary edit operation at $T[k]$. It changes all the suffixes
$T[k-h..n]$ for all $0 \le h < k$. However, since a.s. the string depth of a
leaf in the suffix tree of $S$ is $\Oh{\log (r+s)}$ \citep{Szp93}, the suffix
will possibly be moved in $\SA$ only for $h = \Oh{\log (r+s)}$. Thus, a.s., only
$\Oh{\log (r+s)}$ suffixes are moved in $\SA$, and possibly the corresponding
runs in $\ILCP$ are broken.
Hence $\rho \le r+\Oh{s\log (r+s)}$ a.s.
\qed
\end{proof}

Therefore, the number of runs depends linearly on the size of the base document
and the number of edits, not on the total collection size.
The proof generalizes the arguments of \cite{MNSV09}, which hold for uniformly
distributed strings $S$. There is also experimental evidence \citep{MNSV09} 
that, in real-life text collections, a small change to a string usually causes 
only a small change to its $\LCP$ array. Next we design a document listing 
data structure whose size is bounded in terms of $\rho$.

\subsection{Document Listing}\label{sec:repet}

Let \(\LILCP [1..\rho]\) be the array containing the partial sums of the
lengths of the $\rho$ runs in \ILCP, and let \(\VILCP [1..\rho]\) be the array containing the values in those runs. We can store \LILCP\ as a bitvector $L[1..n]$ with $\rho$ 1s, so that $\LILCP[i]=\select(L,i)$. Then $L$ can be stored 
using the structure of \cite{OS07} that requires $\rho\log(n/\rho)+\Oh{\rho}$ 
bits.

With this representation, it holds that $\ILCP[i] = \VILCP[\rank_1(L,i)]$. 
We can map from any position $i$ to its run $i'=\rank_1(L,i)$ in time 
$\Oh{\log(n/\rho)}$, and from any run $i'$ to its starting position in 
$\ILCP$, $i=\select(L,i')$, in constant time.

\paragraph{Example.} Consider the array
$\ILCP[1..15] = \langle 0,0,0,0,0,0,1,2,3,1,1,0,0,0,2\rangle$ of our running 
example. It has $\rho=7$ runs, so we represent it with 
$\VILCP[1..7]=\langle 0,1,2,3,1,0,2\rangle$ and
$L[1..15]= 100000111101001$.

\medskip

This is sufficient to emulate the document listing algorithm of \cite{Sad07} 
(Section~\ref{sec:muthu}) on a repetitive collection. We will use RLCSA as the 
$\CSA$.
The sparse bitvector $B[1..n]$ marking the document beginnings in $T$ will be
represented in the same way as $L$, so that it requires $d\log(n/d)+\Oh{d}$ bits and
lets us compute any value $\DA[i]=\rank_1(B,\SA[i])$ in time
$\Oh{\lookup{n}}$. Finally, we build the compact RMQ data structure 
\citep{FH11} on $\VILCP$, requiring $2\rho+o(\rho)$ bits. We note that this RMQ
structure does not need access to $\VILCP$ to answer queries.

Assume that we have already found the range $\SA[\ell..r]$ in
$\Oh{\search{m}}$ time. We compute \(\ell' = \rank_1 (L,\ell)\) and \(r' =
\rank_1 (L,r)\), which are the endpoints of the interval \(\VILCP [\ell'..r']\) containing the values in the runs in \(\ILCP [\ell..r]\).
Now we run \citeauthor{Sad07}'s algorithm on $\VILCP[\ell'..r']$. Each time we
find a minimum at $\VILCP[i']$, we remap it to the run $\ILCP[i..j]$, where
$i=\max(\ell,\select(L,i'))$ and $j=\min(r,\select(L,i'+1)-1)$.
For each $i \le k \le j$, we compute $\DA[k]$ using $B$ and RLCSA as
explained, mark it in $V[\DA[k]] \leftarrow 1$, and report it. If, however,
it already holds that $V[\DA[k]]=1$, we stop the recursion.
Figure~\ref{fig:ilcp} gives the pseudocode.

\begin{figure}[t]\centering
\begin{minipage}[t]{0.5\linewidth}
\begin{tabbing}
mm\=mm\=mm\=mm\= \kill
\textbf{function} $\operatorname{listDocuments}(\ell, r)$ \\
\> $(\ell',r') \leftarrow (\rank_1(L,\ell),\rank_1(L,r))$ \\
\> \textbf{return} $\operatorname{list}(\ell', r')$ \\
\\
\textbf{function} $\operatorname{list}(\ell', r')$ \\
\> \textbf{if} $\ell'>r'$: \textbf{return} $\emptyset$ \\
\> $i' \leftarrow \rmq_\VILCP(\ell',r')$ \\
\> $i \leftarrow \max(\ell,\select(L,i'))$ \\
\> $j \leftarrow \min(r,\select(L,i'+1)-1)$ \\
\> $res \leftarrow \emptyset$ \\
\> \textbf{for} $k \leftarrow i\ldots j$: \\
\> \> $g \leftarrow \rank_1(B,\SA[k])$ \\
\> \> \textbf{if} $V[g] = 1$: \textbf{return} $res$ \\ 
\> \> $V[g] \leftarrow 1$ \\
\> \> $res \leftarrow res \cup \{ g \}$ \\
\> \textbf{return} $res \cup \operatorname{list}(\ell',i'-1)
			\cup \operatorname{list}(i'+1,r')$
\end{tabbing}
\end{minipage}
\caption{Pseudocode for document listing using the $\ILCP$ array.
Function $\operatorname{listDocuments}(\ell, r)$ lists the documents from interval $\SA[\ell..r]$; $\operatorname{list}(\ell',r')$ returns the distinct documents mentioned in the runs $\ell'$ to $r'$ that also belong to $\DA[\ell..r]$. We assume that in the beginning it holds $V[k]=0$ for all $k$; this can be arranged by resetting to 0 the same positions after the query or by using initializable arrays. All the unions on $res$ are known to be disjoint.} 
\label{fig:ilcp}
\end{figure}

We show next that this is correct as long as RMQ returns the
leftmost minimum in the range and that we recurse first to the left and then
to the right of each minimum $\VILCP[i']$ found.

\begin{lemma}
Using the procedure described, we correctly find all the positions $\ell \le
k\le r$ such that $\ILCP[k] < m$.
\end{lemma}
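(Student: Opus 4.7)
The plan is to show that the algorithm reports exactly the distinct documents in $\DA[\ell..r]$, which by Lemma~\ref{lem:leftmost} is in one-to-one correspondence (via leftmost occurrence) with the set of positions $k \in [\ell..r]$ satisfying $\ILCP[k]<m$. Correctness then splits into (i) every such document is reported and (ii) no spurious document is reported.

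Part (ii) is immediate: the for loop only visits positions in $[\ell..r]$, so every reported document belongs to $\DA[\ell..r]$, and the $V$-check prevents any document from being reported twice. For part (i), fix a document $d$ in $\DA[\ell..r]$ and let $k_d$ be its leftmost occurrence, so $\ILCP[k_d]<m$ by Lemma~\ref{lem:leftmost}; I call $k_d$'s run a \emph{low-value} run. I would first show that the recursion always reaches this run as the leftmost minimum of some call. The key observation is that a low-value run can never sit inside a short-circuiting sub-range: short-circuit requires that the sub-range's $\rmq$-value be $\ge m$, but $\rmq$-values are monotone along the recursion, so any sub-range containing $k_d$'s run has $\rmq$-value $\le \ILCP[k_d]<m$. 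Hence every ancestor sub-range of $k_d$'s run remains low-value, and by induction on sub-range size the algorithm descends until $k_d$'s run is itself the leftmost minimum of some $\operatorname{list}(\ell',r')$.

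Inside that call, the for loop iterates through the run's SA positions intersected with $[\ell..r]$, which by Lemma~\ref{lem:leftmost} applied within the run are leftmost occurrences of \emph{distinct} documents. So $V[\DA[\cdot]]=0$ at every iteration: were $V[\DA[k']]$ already $1$ for some $k'$ in the run, an earlier-visited position would carry the same document, but leftmost uniqueness in $[\ell..r]$ forces the two positions to coincide. The for loop therefore does not short-circuit and in particular reports $d$ upon reaching $k_d$. The main technical obstacle is the inductive coupling between the RMQ pre-order traversal on $\VILCP$ and Lemma~\ref{lem:leftmost}: one must simultaneously verify that no low-value run is buried inside a short-circuited sub-range and that no short-circuit fires inside a low-value for loop, both handled by the monotonicity of $\rmq$-values and the uniqueness of leftmost occurrences.
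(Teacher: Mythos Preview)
Your overall decomposition into ``no spurious reports'' and ``every document is reported'' is sound, and you correctly tie the problem to Lemma~\ref{lem:leftmost}. However, there is a genuine gap in part~(i): you never invoke the two properties that the procedure description explicitly requires, namely that RMQ returns the \emph{leftmost} minimum and that the left sub-range is recursed into \emph{before} the right one.

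Concretely, your key claim ``short-circuit requires that the sub-range's $\rmq$-value be $\ge m$'' is not a property of the algorithm --- the algorithm short-circuits on $V[g]=1$, not on a threshold test --- and it is precisely what has to be proved. You then try to derive it from ``monotonicity of $\rmq$-values'' and ``uniqueness of leftmost occurrences'', but these alone are insufficient. In your step for the for-loop you argue: if $V[\DA[k']]=1$ with $k'$ in a low-value run, some earlier-visited $k''$ carries the same document, and ``leftmost uniqueness forces the two positions to coincide''. But leftmost uniqueness only gives $k'' > k'$ (since $k'$ is the unique leftmost occurrence); it does \emph{not} say that an earlier-visited position must lie to the left of $k'$. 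A high-value run in another branch could in principle be processed first, report $\DA[k']$ from a non-leftmost position $k''>k'$, and then your low-value run would short-circuit before reaching $k_d$. Your argument becomes circular: ruling this out needs the very claim you are trying to establish.

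The paper closes this gap directly with the traversal-order property: because RMQ is leftmost-minimum and the recursion is left-first, any other occurrence $k'$ of a document $j$ (which satisfies both $k'>k$ and $\ILCP[k']>\ILCP[k]$) is necessarily visited after the leftmost occurrence $k$. Hence the first time the algorithm meets document $j$ it is at $k$, with $V[j]=0$ and $\ILCP[k]<m$; and whenever $V[j]=1$ is encountered, the current position is not leftmost, so its $\ILCP$ value (and that of the whole run, hence the sub-range minimum) is $\ge m$, making the short-circuit correct. You should add this order argument explicitly; once you do, your two claims (b) and (c) follow immediately and the circularity disappears.
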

\begin{proof}
Let $j=\DA[k]$ be the leftmost occurrence of document $j$ in $\DA[\ell..r]$.
By Lemma~\ref{lem:leftmost}, among all the positions where $\DA[k']=j$ in
$\DA[\ell..r]$, $k$ is the only one where $\ILCP[k]<m$. Since we find a
minimum $\ILCP$ value in the range, and then explore the left subrange
before the right subrange, it is not possible to find first another occurrence
$\DA[k']=j$, since it has a larger $\ILCP$ value and is to the right of $k$.
Therefore, when $V[\DA[k]]=0$, that is, the first time we find a $\DA[k]=j$,
it must hold that $\ILCP[k]<m$,
and the same is true for all the other $\ILCP$ values in the run. Hence it
is correct to list all those documents and mark them in $V$. Conversely,
whenever we find a $V[\DA[k']]=1$, the document has already been reported. Thus
this is not its leftmost occurrence and then $\ILCP[k'] \ge m$ holds, as well
as for the whole run. Hence it is correct to avoid reporting the whole run and
to stop the recursion in the range, as the minimum value is already at least $m$.
\qed
\end{proof}

Note that we are not storing $\VILCP$ at all.
We have obtained our first result for document listing, where we recall that
$\rho$ is small on repetitive collections (Lemma~\ref{lem:rho}):

\begin{theorem} \label{thm:ILCP RMQ listing}
Let $T=S_1 \cdot S_2 \dotsm S_d$ be the concatenation of $d$ documents $S_j$,
and $\CSA$ be a compressed suffix array on $T$, searching for any pattern $P[1..m]$
in time $\search{m}$ and accessing $\SA[i]$ in time $\lookup{n}$.
Let $\rho$ be the number of runs in the $\ILCP$ array of $T$.
We can store $T$ in $\abs{\CSA} + \rho \log (n / \rho) + \Oh{\rho} + d \log (n
/ d) + \Oh{d} = \abs{\CSA} + \Oh{(\rho+d)\lg n}$ 
bits such that document listing takes $\Oh{\search{m} + \df
\cdot (\lookup{n}+\lg n)}$ time.
\end{theorem}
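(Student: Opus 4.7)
The plan is to treat this as an assembly theorem, bundling the components developed in this subsection and verifying that the space and time accounting works out. Correctness is already handled by the preceding lemma and by the discussion of left-to-right recursion on leftmost minima, so the proof itself will consist of a space inventory followed by a recursion analysis.

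For space, I would add up four contributions: $\abs{\CSA}$ for the compressed suffix array; $\rho\lg(n/\rho)+\Oh{\rho}$ bits for the bitvector $L$ marking run endpoints in $\ILCP$ using the Okanohara--Sadakane representation \citep{OS07}, which gives $\Oh{1}$ $\select$ and $\Oh{\lg(n/\rho)}$ $\rank$; $d\lg(n/d)+\Oh{d}$ bits for $B$ under the same representation; and $2\rho+o(\rho)$ bits for the RMQ structure of \cite{FH11} on $\VILCP$. The key observation is that this RMQ structure answers queries without accessing $\VILCP$, so $\VILCP$ itself is never stored; together with the $\Oh{d}$ bits for the marker $V$, everything collapses into $\abs{\CSA}+\Oh{(\rho+d)\lg n}$.

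For time, the CSA yields $\SA[\ell..r]$ in $\Oh{\search{m}}$, and the initial two $\rank_1$ operations on $L$ cost $\Oh{\lg(n/\rho)}\subseteq\Oh{\lg n}$. The main step is to bound the recursion. I would argue that within any run of $\ILCP$, distinct positions necessarily carry distinct document identifiers, for otherwise Lemma~\ref{lem:leftmost} would produce two leftmost occurrences of the same document in $\DA[\ell..r]$. Consequently, a run is either entirely composed of leftmost occurrences in the query range (if its value is below $m$) or entirely composed of non-leftmost ones, so each recursive call either exits on an empty subrange, or inspects the first position of the minimum run and returns immediately because that document is already marked, or reports at least one new document and then recurses. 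Only calls of the third kind have children, so there are at most $\df$ such calls and $\Oh{\df}$ calls in total; each probe costs $\Oh{\lookup{n}}$ for $\SA[k]$ plus $\Oh{\lg(n/d)}\subseteq\Oh{\lg n}$ for $\rank_1$ on $B$, and each call's bookkeeping (one RMQ plus two $\select$s on $L$) is $\Oh{1}$, yielding the claimed $\Oh{\search{m}+\df\cdot(\lookup{n}+\lg n)}$ time.

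The main obstacle will be making the recursion-counting argument airtight, in particular justifying that the pseudocode is safe to omit an explicit $\VILCP[i']<m$ test. This hinges on the within-run document-distinctness observation above, together with the fact that a run whose value is $\ge m$ can be visited only after all its documents have already been reported elsewhere as leftmost occurrences, so its first probe always finds $V[g]=1$; once this is in place, everything else reduces to the routine summations sketched in the previous paragraph.
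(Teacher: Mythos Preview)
Your proposal is correct and matches the paper's treatment: the theorem is stated there as a summary of the preceding construction (the data structures, the algorithm of Figure~\ref{fig:ilcp}, and the correctness lemma) with no separate proof block, so your space inventory plus recursion-tree counting is exactly the intended argument, just made more explicit than in the text. One phrasing point to tighten: the assertion that ``within any run of \ILCP, distinct positions necessarily carry distinct document identifiers'' is false as a universal statement (for instance, a run of value~$0$ at the very left of \ILCP\ can contain two suffixes of the same document, since that range need not correspond to any pattern of length~$\ge 1$); what you actually need, and what Lemma~\ref{lem:leftmost} does give, is the restricted version for the portion of a run inside $[\ell..r]$ whose value is below~$m$, and your trichotomy and probe count go through unchanged with that correction.
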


\subsection{Document Counting} \label{sec:ilcp-count}

Array $\ILCP$ also allows us to efficiently count the number of distinct
documents where $P$ appears, without listing them all. This time we will
explicitly represent $\VILCP$, in the following convenient way: consider 
a skewed wavelet tree (Section~\ref{sec:ranksel}), where the
leftmost leaf is at depth 1, the next 2 leaves are at depth 3, the next 4
leaves are at depth 5, and in general the $2^{d-1}$th to $(2^d-1)$th leftmost
leaves are at depth $2d-1$. Then the $i$th leftmost leaf is at depth
$1+2\lfloor \lg i \rfloor = \Oh{\log i}$. The number of wavelet tree nodes
up to depth $d$ is $\sum_{i=1}^{(d+1)/2} 2^i = 2(2^{(d+1)/2}-1)$. The
number of nodes up to the depth of the $m$th leftmost leaf is maximized
when $m$ is of the form $m=2^{d-1}$, reaching $2(2^d-1)=4m-2=\Oh{m}$.
See Figure~\ref{fig:wtree}.

\begin{figure}[t]
\centerline{\includegraphics[width=0.7\textwidth]{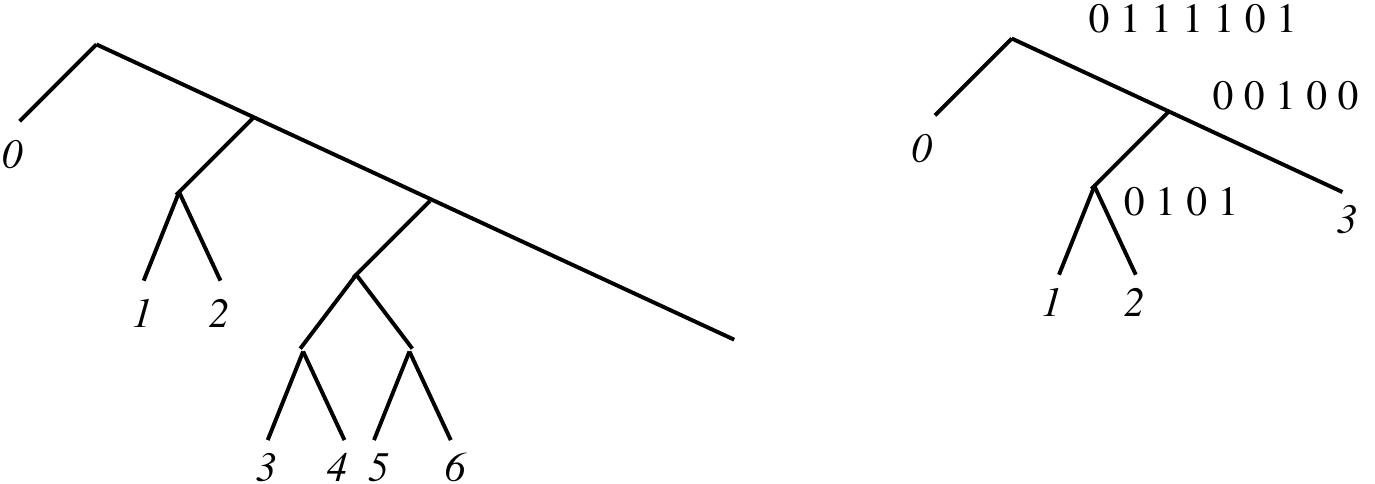}}
\caption{On the left, the schematic view of our skewed wavelet tree; on the
right, the case of our running example where it represents $\VILCP=\langle
0,1,2,3,1,0,2\rangle$.}
\label{fig:wtree}
\end{figure}

Let $\lambda$ be the maximum value in the $\ILCP$ array. Then the height of 
the wavelet tree is $\Oh{\lg \lambda}$ and the representation of $\VILCP$ 
takes at most $\rho\lg \lambda + o(\rho \lg \lambda)$ bits. 
If the documents $S$ are generated
using the A2 probabilistic model of \cite{Szp93}, then $\lambda=\Oh{\lg \abs{S}}=
\Oh{\lg n}$, and $\VILCP$ uses $\rho\lg\lg n (1+o(1))$ bits. The same happens 
under the model used in Section~\ref{sec:runs}.

The number of documents where $P$ appears, $\df$, is the number of times
a value smaller than $m$ occurs in $\ILCP[\ell..r]$. An algorithm to find
all those values in a wavelet tree of $\ILCP$ is as follows
\citep{GNP11}. Start at the
root with the range $[\ell..r]$ and its bitvector $W$. Go to the left child 
with the interval $[\rank_0(W,\ell-1)+1..\rank_0(W,r)]$ and to the right
child with the interval $[\rank_1(W,\ell-1)+1..\rank_1(W,r)]$, stopping the
recursion on empty intervals. This method arrives at all the wavelet tree
leaves corresponding to the distinct values in $\ILCP[\ell..r]$. Moreover,
if it arrives at a leaf $l$ with interval $[\ell_l..r_l]$, then there are
$r_l-\ell_l+1$ occurrences of the symbol of that leaf in $\ILCP[\ell..r]$.

Now, in the skewed wavelet tree of $\VILCP$, we are interested in the 
occurrences of symbols $0$ to $m-1$. Thus we apply the above algorithm but
we do not enter into subtrees handling an interval of values that is 
disjoint with $[0..m-1]$. Therefore, we only arrive at the $m$ leftmost leaves
of the wavelet tree, and thus traverse only $\Oh{m}$ wavelet tree nodes, in 
time $\Oh{m}$. 

A complication is that $\VILCP$ is the array of run length heads, so when
we start at $\VILCP[\ell'..r']$ and arrive at each leaf $l$ with interval 
$[\ell'_l..r'_l]$, we only know that $\VILCP[\ell'..r']$ contains from the 
$\ell'_l$th to the $r'_l$th occurrences of value $l$ in $\VILCP[\ell'..r']$. 
We store a reordering of
the run lengths so that the runs corresponding to each value $l$ are collected
left to right in $\ILCP$ and stored aligned to the wavelet tree leaf $l$.
Those are concatenated into another bitmap $L'[1..n]$ with $\rho$ 1s, similar
to $L$, which allows us, using $\select(L',\cdot)$, to count the total length
spanned by the $\ell'_l$th to $r'_l$th runs in leaf $l$. By adding the areas
spanned over the $m$ leaves, we count the total number of documents where $P$
occurs.  Note that we need to correct the lengths of runs $\ell'$ and $r'$,
as they may overlap the original interval $\ILCP[\ell..r]$.
Figure~\ref{fig:ilcp-count} gives the pseudocode.

\begin{figure}[t]\centering
\begin{minipage}[t]{0.5\linewidth}
\begin{tabbing}
mm\=mm\=mm\=mm\= \kill
\textbf{function} $\operatorname{countDocuments}(\ell, r)$ \\
\> $(\ell',r') \leftarrow (\rank_1(L,\ell),\rank_1(L,r))$ \\
\> $l \leftarrow m$ \\
\> $c\leftarrow\operatorname{count}(root,\ell', r')$ \\
\> \textbf{if} $\VILCP[\ell']<m$: $c \leftarrow c-(\ell-\select(L,\ell'))$\\
\> \textbf{if} $\VILCP[r']<m$: $c \leftarrow c-(\select(L,r'+1)-1-r)$\\
\> \textbf{return} $c$ \\
\\
\textbf{function} $\operatorname{count}(v,\ell', r')$ \\
\> \textbf{if} $l=0$: \textbf{return} $0$ \\
\> \textbf{if} $v$ is a leaf: \\
\>\> $l \leftarrow l-1$ \\
\>\> \textbf{if} $\ell'>r'$: \textbf{return} $0$ \\
\>\> \textbf{return} $\select(L',r'+1)-\select(L',\ell')$ \\
\> $(\ell_1,r_1) \leftarrow (\rank_1(v.W,\ell'-1)+1,\rank_1(v.W,r'))$ \\
\> \textbf{return} $\operatorname{count}(v.left,\ell'-\ell_1+1,r'-r_1) +
		    \operatorname{count}(v.right,\ell_1,r_1)$
\end{tabbing}
\end{minipage}
\caption{Document counting with the $\ILCP$ array.
Function $\operatorname{countDocuments}(\ell, r)$ counts the distinct documents from interval $\SA[\ell..r]$; $\operatorname{count}(v,\ell',r')$ returns the number of documents mentioned in the runs $\ell'$ to $r'$ under wavelet tree node $v$ that also belong to $\DA[\ell..r]$. We assume that the wavelet tree root node is $root$, and that any internal wavelet tree node $v$ has fields $v.W$ (bitvector), $v.left$ (left child), and $v.right$ (right child). Global variable $l$ is used to traverse the first $m$ leaves. The access to $\VILCP$ is also done with
the wavelet tree.}
\label{fig:ilcp-count}
\end{figure}

\begin{theorem} \label{thm:ILCP wavelet tree counting}
Let $T=S_1 \cdot S_2 \dotsm S_d$ be the concatenation of $d$ documents $S_j$,
and $\CSA$ a compressed suffix array on $T$ that searches for any pattern
$P[1..m]$ in time $\search{m}$.
Let $\rho$ be the number of runs in the $\ILCP$ array of $T$ and $\lambda$ be the
maximum length of a repeated substring inside any $S_j$.
Then we can store $T$ in \(\abs{\CSA} + \rho(\log \lambda + 2\log (n / \rho) +
\Oh{1}) = \abs{\CSA}+\Oh{\rho\lg n}\) bits such that the number of documents
where a pattern $P[1..m]$ occurs can be computed in time $\Oh{m+\search{m}}$.
\end{theorem}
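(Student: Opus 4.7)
The plan is to assemble the data structures already described in the section and then verify (i) the space bound, (ii) the correctness of the procedure of Figure~\ref{fig:ilcp-count}, and (iii) the time bound. The stored components are the $\CSA$ (for searching); the sparse bitvector $L[1..n]$ with $\rho$ ones marking run starts in $\ILCP$; a second sparse bitvector $L'[1..n]$ with $\rho$ ones concatenating, in wavelet-tree-leaf order, the lengths of the runs regrouped by value; and the skewed wavelet tree over $\VILCP[1..\rho]$. Using the representation of \cite{OS07}, each of $L$ and $L'$ fits in $\rho\log(n/\rho)+\Oh{\rho}$ bits with $\Oh{1}$ $\select$. The skewed wavelet tree stores $\rho$ symbols over the alphabet $[0..\lambda]$, and since the leaf for value $v$ sits at depth $1+2\lfloor\lg v\rfloor$, each element contributes $\Oh{\log v}$ bits of bitvector, summing to $\rho\log\lambda+\Oh{\rho}$ bits overall. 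Adding these pieces gives $\abs{\CSA}+\rho(\log\lambda+2\log(n/\rho)+\Oh{1})=\abs{\CSA}+\Oh{\rho\lg n}$.

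For correctness, by Lemma~\ref{lem:leftmost}, $\df$ equals the number of positions $k\in[\ell..r]$ with $\ILCP[k]<m$. After mapping $[\ell..r]$ to the run interval $[\ell'..r']=[\rank_1(L,\ell),\rank_1(L,r)]$, I would descend in the skewed wavelet tree only into subtrees whose value range meets $[0..m-1]$, i.e., only into ancestors of the first $m$ leaves. At each such leaf $l<m$, the standard descent returns an interval $[\ell'_l..r'_l]$ that identifies precisely the $\ell'_l$th through $r'_l$th runs of value $l$ in $\VILCP[\ell'..r']$; by the construction of $L'$ (runs aligned with their wavelet-tree leaves and preserving $\ILCP$ order within each value), the combined length of these runs in $\ILCP$ is recovered in $\Oh{1}$ as $\select(L',r'_l+1)-\select(L',\ell'_l)$. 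Summing over the (at most $m$) leaves yields the number of positions in $\ILCP[\ell..r]$ carrying a value $<m$, up to an overcount from the two boundary runs of $\VILCP[\ell'..r']$, which are patched in $\Oh{1}$ by subtracting the prefix $\ell-\select(L,\ell')$ (resp.\ suffix $\select(L,r'+1)-1-r$) whenever the boundary run has value $<m$, as coded in Figure~\ref{fig:ilcp-count}.

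For time, the $\CSA$ search costs $\search{m}$; the two $\rank_1$ queries on $L$ are absorbed into this term. The skewed wavelet tree traversal visits only internal nodes that are ancestors of the first $m$ leaves, and by the node-counting argument immediately preceding the theorem this is at most $4m-2=\Oh{m}$ nodes, each handled in $\Oh{1}$ via constant-time $\rank$ on its bitvector. Leaf work and the boundary correction are $\Oh{1}$ per leaf. The total is $\Oh{\search{m}+m}$, as claimed. The main obstacle I anticipate is making the $L'$-indexing airtight: one has to check that the recursion really preserves the identity between the interval returned at leaf $l$ and the aligned block of $L'$ storing the lengths of value-$l$ runs, so that a single $\select$-difference on $L'$ yields exactly the total $\ILCP$-length those runs cover inside $[\ell..r]$. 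Everything else reduces to the node-count bound for the skewed wavelet tree and the space accounting of the two sparse bitvectors.
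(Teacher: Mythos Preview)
Your proposal is correct and follows essentially the same route as the paper: the theorem has no separate proof environment there, and the argument is precisely the preceding discussion in Section~\ref{sec:ilcp-count} (the skewed wavelet tree with $\Oh{m}$ nodes up to the $m$th leaf, the two sparse bitvectors $L$ and $L'$ of \cite{OS07}, Lemma~\ref{lem:leftmost} for correctness, and the boundary correction of Figure~\ref{fig:ilcp-count}), which you have reassembled faithfully and in slightly more detail than the paper itself. The only point to tighten is your remark that the two $\rank_1$ queries on $L$ are ``absorbed'': with the \cite{OS07} representation these cost $\Oh{\log(n/\rho)}$ each (and the two $\VILCP$ accesses for the boundary test cost $\Oh{\log\lambda}$), so strictly speaking an additive $\Oh{\log n}$ appears; the paper glosses over this as well, and in any realistic $\CSA$ it is dominated by $\search{m}$.
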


\section{Precomputed Document Lists}
\label{sec:pdl}

In this section we introduce the idea of precomputing the answers of document 
retrieval queries for a sample of suffix tree nodes, and then
exploit repetitiveness by grammar-compressing the resulting sets of answers.
Such grammar compression is effective when the underlying collection is repetitive. The queries
are then extremely fast on the sampled nodes, whereas on the others we have a
way to bound the amount of work performed. The resulting structure is called
PDL (Precomputed Document Lists), for which we develop a variant for document 
listing and another for top-$k$ retrieval queries.

\subsection{Document Listing} \label{sec:pdl-list}

Let $v$ be a suffix tree node. We write $\SA_{v}$ to denote the interval of
the suffix array covered by node $v$, and $D_{v}$ to denote the set of
distinct document identifiers occurring in the same interval of the document
array. Given a block size $b$ and a constant $\beta \ge 1$, we build a sampled suffix tree that allows us to answer document listing queries efficiently. For any suffix tree node $v$, it holds that:
\begin{enumerate}
\item node $v$ is sampled and thus set $D_v$ is directly stored; or
\item $\abs{\SA_{v}} < b$, and thus documents can be listed in time $\Oh{b
\cdot \lookup{n}}$ by using a $\CSA$ and the bitvectors $B$ and $V$ of
Section~\ref{sec:muthu}; or
\item we can compute the set $D_{v}$ as the union of stored sets $D_{u_{1}}, \dotsc, D_{u_{k}}$ of total size at most $\beta \cdot \abs{D_{v}}$, where nodes $u_{1}, \dotsc, u_{k}$ are the children of $v$ in the sampled suffix tree.
\end{enumerate}

The purpose of rule 2 is to ensure that suffix array intervals solved by brute
force are not longer than $b$. The purpose of rule 3 is to ensure that, if we
have to rebuild an answer by merging a list of answers precomputed at descendant
sampled suffix tree nodes, then the merging costs no more than $\beta$ per result.
That is, we can discard answers of nodes that are close to being the union of
the answers of their descendant nodes, since we do not waste too much work in
performing the unions of those descendants. Instead, if the answers of the
descendants have many documents in common, then it is worth storing
the answer at the node too; otherwise merging will require much work because the
same document will be found many times (more than $\beta$ on average).

We start by selecting suffix tree nodes $v_{1}, \dotsc, v_{L}$, so that no
selected node is an ancestor of another, and the intervals $\SA_{v_{i}}$ of
the selected nodes cover the entire suffix array. Given node $v$ and its
parent $w$, we select $v$ if $\abs{\SA_{v}} \le b$ and $\abs{\SA_{w}} > b$,
and store $D_{v}$ with the node. These nodes $v$ become the leaves of the
sampled suffix tree, and we assume that they are numbered from left to right.
We then assume that all the ancestors of those leaves belong to the sampled
suffix tree, and proceed upward in the suffix tree removing some of them. Let $v$ be an internal node, $u_{1}, \dotsc, u_{k}$ its children, and $w$ its parent. If the total size of sets $D_{u_{1}}, \dotsc, D_{u_{k}}$ is at most $\beta \cdot \abs{D_{v}}$, we remove node $v$ from the tree, and add nodes $u_{1}, \dotsc, u_{k}$ to the children of node $w$. Otherwise we keep node $v$ in the sampled suffix tree, and store $D_{v}$ there.

When the document collection is repetitive, the document array $\DA[1..n]$ is also repetitive. This property has been used in the past to compress $\DA$ using grammars \citep{NPVjea13}. We can apply a similar idea on the $D_v$ sets stored at the sampled suffix tree nodes, since $D_v$ is a function of the range $\DA[\ell..r]$ that corresponds to node $v$.

Let $v_{1}, \dotsc, v_{L}$ be the leaf nodes and $v_{L+1}, \dotsc, v_{L+I}$ the internal nodes of the sampled suffix tree. We use grammar-based compression to replace frequent subsets in sets $D_{v_{1}}, \dotsc, D_{v_{L+I}}$ with grammar rules expanding to those subsets. Given a set $Z$ and a grammar rule $X \to Y$, where $Y \subseteq \set{1, \dotsc, d}$, we can replace $Z$ with $(Z \cup \set{X}) \setminus Y$, if $Y \subseteq Z$. As long as $\abs{Y} \ge 2$ for all grammar rules $X \to Y$, each set $D_{v_{i}}$ can be decompressed in $\Oh{\abs{D_{v_{i}}}}$ time.

To choose the replacements,
consider the bipartite graph with vertex sets $\set{v_{1}, \dotsc, v_{L+I}}$ and $\set{1, \dotsc, d}$, with an edge from $v_{i}$ to $j$ if $j \in D_{v_{i}}$. Let $X \to Y$ be a grammar rule, and let $V$ be the set of nodes $v_{i}$ such that rule $X \to Y$ can be applied to set $D_{v_{i}}$. As $Y \subseteq D_{v_{i}}$ for all $v_{i} \in V$, the induced subgraph with vertex sets $V$ and $Y$ is a complete bipartite graph or a \emph{biclique}. Many Web graph compression algorithms are based on finding bicliques or other dense subgraphs \citep{HNspire12.3}, and we can use these algorithms to find a good grammar compressing the precomputed document lists.

When all rules have been applied, we store the reduced sets $D_{v_{1}},
\dotsc, D_{v_{L+I}}$ as an array $A$ of document and rule
identifiers. The array takes $\abs{A} \log (d + n_{R})$ bits of space, where
$n_{R}$ is the total number of rules. We mark the first cell in the encoding
of each set with a $1$ in a bitvector $B_{A}[1..\abs{A}]$, so that set
$D_{v_{i}}$ can be retrieved by decompressing $A[\select(B_{A}, i)..\select(B_{A}, i+1) - 1]$. The bitvector takes $\abs{A}(1+o(1))$ bits of space
and answers $\select$ queries in $\Oh{1}$ time. The grammar rules
are stored similarly, in an array $G$ taking $\abs{G} \log d$ bits, with a
bitvector $B_{G}[1..\abs{G}]$ of $\abs{G}(1+o(1))$ bits separating the array
into rules (note that right hand sides of rules are formed only by terminals).

In addition to the sets and the grammar, we must also store the sampled
suffix tree. A bitvector $B_{L}[1..n]$ marks the first cell of interval
$\SA_{v_{i}}$ for all leaf nodes $v_{i}$, allowing us to convert interval
$\SA[\ell..r]$ into a range of nodes $[ln..rn] = [\rank_1(B_{L}, \ell) ..
\rank_1(B_{L}, r+1) - 1]$. Using the format of \cite{OS07} for $B_L$, the bitvector
takes $L \log (n/L) + \Oh{L}$ bits, and answers $\rank$ queries 
in $\Oh{\log (n/L)}$ time and $\select$ queries in constant time. A second 
bitvector $B_{F}[1..L+I]$, using $(L+I)(1+o(1))$ bits and supporting $\rank$
queries in constant time, marks the nodes that are the first 
children of their parents. An array $F[1..I]$ of $I \log I$ bits stores 
pointers from first children to their parent nodes, so that if node $v_{i}$ is 
a first child, its parent node is $v_{j}$, where $j = L + F[\rank_1(B_{F}, i)]$. 
Finally, array $N[1..I]$ of $I \log L$ bits stores a pointer to the leaf node 
following those below each internal node. 

\begin{figure}[t]\centering
\begin{minipage}[t]{0.5\linewidth}
\begin{tabbing}
mm\=mm\=mm\=mm\= \kill
\textbf{function} $\operatorname{listDocuments}(\ell, r)$ \\
\> $(res, ln) \leftarrow (\emptyset, \rank_1(B_{L}, \ell))$ \\
\> \textbf{if} $\select(B_{L}, ln) < \ell$: \\
\> \> $r' \leftarrow \min(\select(B_{L}, ln+1) - 1, r)~~~$ \\
\> \> $(res, ln) \leftarrow (\operatorname{list}(\ell, r'), ln + 1)$ \\
\> \> \textbf{if} $r' = r$: \textbf{return} $res$ \\
\> $rn \leftarrow \rank_1(B_{L}, r+1) - 1$ \\
\> \textbf{if} $\select(B_{L}, rn+1) \le r$: \\
\> \> $\ell' \leftarrow \select(B_{L}, rn+1)$ \\
\> \> $res \leftarrow res \cup \operatorname{list}(\ell', r)$ \\
\> \textbf{return} $res \cup \operatorname{decompress}(ln, rn)$ \\
\\
\textbf{function} $\operatorname{decompress}(\ell, r)$ \\
\> $(res, i) \leftarrow (\emptyset, \ell)$ \\
\> \textbf{while} $i \le r$: \\
\> \> $next \leftarrow i + 1$ \\
\> \> \textbf{while} $B_{F}[i] = 1$: \\ 
\> \> \> $(i', next') \leftarrow \operatorname{parent}(i)$ \\
\> \> \> \textbf{if} $next' > r + 1$: \textbf{break} \\
\> \> \> $(i, next) \leftarrow (i', next')$ \\
\> \> $res \leftarrow res \cup \operatorname{set}(i)$ \\
\> \> $i \leftarrow next$ \\
\> \textbf{return} $res$
\end{tabbing}
\end{minipage}
\begin{minipage}[t]{.5\linewidth}
\begin{tabbing}
mm\=mm\=mm\=mm\= \kill
\textbf{function} $\operatorname{parent}(i)$\\
\> $par \leftarrow F[\rank_1(B_{F}, i)]$\\
\> \textbf{return} $(par + L, N[par])$\\
\\
\textbf{function} $\operatorname{set}(i)$\\
\> $res \leftarrow \emptyset$\\
\> $\ell \leftarrow \select(B_{A}, i)$\\
\> $r \leftarrow \select(B_{A}, i+1) - 1$\\
\> \textbf{for} $j \leftarrow \ell$ \textbf{to} $r$:\\
\> \> \textbf{if} $A[j] \le d$: $res \leftarrow res \cup \set{A[j]}$\\
\> \> \textbf{else}: $res \leftarrow res \cup \operatorname{rule}(A[j] - d)$\\
\> \textbf{return} $res$\\
\\
\textbf{function} $\operatorname{rule}(i)$\\
\> $\ell \leftarrow \select(B_{G}, i)$\\
\> $r \leftarrow \select(B_{G}, i+1) - 1$\\
\> \textbf{return} $G[\ell..r]$\\
\\
\textbf{function} $\operatorname{list}(\ell, r)$\\
\> $res \leftarrow \emptyset$ \\
\> \textbf{for} $i \leftarrow \ell$ \textbf{to} $r$:\\
\> \> $res \leftarrow res \cup \set{\rank_1(B, \SA[i])}$\\
\> \textbf{return} $res$
\end{tabbing}
\end{minipage}
\caption{Document listing using precomputed answers.
Function $\operatorname{listDocuments}(\ell, r)$ lists the documents from interval $\SA[\ell..r]$; $\operatorname{decompress}(\ell, r)$ decompresses the sets stored in nodes $v_{\ell}, \dotsc, v_{r}$; $\operatorname{parent}(i)$ returns the parent node and the leaf node following it for a first child $v_{i}$; $\operatorname{set}(i)$ decompresses the set stored in $v_{i}$;  $\operatorname{rule}(i)$ expands the $i$th grammar rule; and $\operatorname{list}(\ell, r)$ lists the documents from interval $\SA[\ell..r]$ by using $\CSA$ and bitvector $B$.}
\label{fig:algorithm}
\end{figure}

Figure~\ref{fig:algorithm} gives the pseudocode for document listing using the
precomputed answers. Function $\operatorname{list}(\ell, r)$ takes $\Oh{(r + 1
- \ell)\,\lookup{n}}$ time, $\operatorname{set}(i)$ takes
$\Oh{\abs{D_{v_{i}}}}$ time, and $\operatorname{parent}(i)$ takes $\Oh{1}$
time. Function $\operatorname{decompress}(\ell, r)$ produces set $res$ in time
$\Oh{\abs{res}\cdot \beta h}$, where $h$ is the height of the sampled suffix
tree: finding each set may take $\Oh{h}$ time, and we may encounter the same
document $\Oh{\beta}$ times. Hence the total time for $\operatorname{listDocuments}(\ell, r)$
is $\Oh{\df \cdot \beta h + \log n}$ for unions of precomputed answers, and
$\Oh{b \cdot \lookup{n}}$ otherwise. If the text follows the A2 model of
\citet{Szp93}, then $h=\Oh{\log n}$ and the total time is on average
$\Oh{\df \cdot \beta\log n + b \cdot \lookup{n}}$.

We do not write the result as a theorem because we cannot upper bound
the space used by the structure in terms of $b$ and $\beta$. In a bad case
like $T=a^{\ell-1}\$b^{\ell-1}\$c^{\ell-1}\$\ldots$, the suffix tree is 
formed by $d$ long paths and the sampled suffix tree contains at least $d(n/d-b)
= \Theta(n)$ nodes (assuming $bd = o(n)$), so the total space is $\Oh{n\log n}$
bits as in a classical suffix tree. In a good case, such as a balanced suffix 
tree
(which also arises on texts following the A2 model), 
the sampled suffix tree has
$\Oh{n/b}$ nodes. Although each such node $v$ may store a list $D_v$ with $b$ 
entries, many of those entries are similar when the collection is repetitive, 
and thus their compression is effective.

\subsection{Top-$k$ Retrieval}
\label{sec:topk}

Since we have the freedom to represent the documents in sets $D_v$ in any order,
we can in particular sort the document identifiers in decreasing order of their
``frequencies'', that is, the number of times the string represented by $v$ 
appears in the
documents. Ties are broken by document identifiers in increasing order. Then a 
top-$k$ query on a node $v$ that stores its list $D_v$ boils down to listing 
the first $k$ elements of $D_v$.

This time we cannot use the set-based grammar compressor, but we need, instead,
a compressor that preserves the order. We use Re-Pair \citep{LM00}, which
produces a grammar where each nonterminal produces two new symbols, 
terminal or nonterminal. As Re-Pair decompression is recursive, decompression
can be slower than in document listing, although it is still fast in practice and takes linear
time in the length of the decompressed sequence.

In order to merge the results from multiple nodes in the sampled suffix tree, we
need to store the frequency of each document. These are stored in the same 
order as the identifiers. Since the frequencies are nonincreasing, with
potentially long runs of small values, we can represent them space-efficiently 
by run-length encoding the sequences and using differential 
encoding for the run heads. A node containing $s$ suffixes in its subtree
has at most $\Oh{\sqrt{s}}$ distinct frequencies, and the frequencies can be
encoded in $\Oh{\sqrt{s} \log s}$ bits.

There are two basic approaches to using the PDL structure for top-$k$ document
retrieval. First, we can store the document lists for all suffix tree nodes above the leaf blocks, producing a structure that is essentially an inverted index for all frequent substrings. This approach is very fast, as we need only decompress the first $k$ document identifiers from the stored sequence, and it works well with repetitive collections thanks to the grammar-compression of the lists. 
Note that this enables {\em incremental top-$k$ queries}, where value $k$ is
not given beforehand, but we extract documents with successively lower scores
and can stop at any time. Note also that, in this version, it is not necessary
to store the frequencies.

Alternatively, we can build the PDL structure as in
Section~\ref{sec:pdl-list}, with some parameter $\beta$, to achieve better space usage. Answering queries will then be slower as we have to decompress multiple document sets, merge the sets, and determine the top $k$ documents.
We tried different heuristics for merging prefixes of the document sequences, stopping when a correct answer to the top-$k$ query could be guaranteed. The heuristics did not generally work well, making brute-force merging the fastest alternative. 

\section{Engineering a Document Counting Structure}
\label{sec:count}

In this section we revisit a generic document counting structure by
\cite{Sad07}, which uses $2n+o(n)$ bits and answers counting queries in 
constant time. We show that
the structure inherits the repetitiveness present in the text collection, 
which can then be exploited to reduce its space occupancy. Surprisingly, the 
structure also becomes repetitive with random and near-random data, such 
as unrelated DNA sequences, which is a result of interest for general string collections.
We show how to take advantage of this redundancy in a number of different ways,
leading to different time/space trade-offs.

\subsection{The Basic Bitvector}

We describe the original document structure of \cite{Sad07}, which computes
$\df$ in constant time given the locus of the pattern $P$ (i.e., the suffix
tree node arrived at when searching for $P$), while using just
$2n+o(n)$ bits of space. 

We start with the suffix tree of the text, and add new internal nodes to it to make it a binary tree. For each internal node $v$ of the binary suffix tree, let
$D_v$ be again the set of distinct document identifiers in the corresponding
range $\DA[\ell..r]$, and let $\countq(v)=|D_v|$ be the size of that set. If node
$v$ has children $u$ and $w$, we define the number of {\em redundant} suffixes as
$h(v) = \abs{D_u \cap D_w}$. This allows us to compute $\df$ recursively: $\countq(v) = \countq(u) + \countq(w) - h(v)$. By using the leaf nodes descending from $v$, 
$[\ell..r]$, as base cases, we can solve the recurrence:
\begin{displaymath}
\countq(v) = \countq(\ell,r) = (r + 1 - \ell) - \sum_{u} h(u),
\end{displaymath}
where the summation goes over the internal nodes of the subtree rooted at $v$.

We form an array $H[1..n-1]$ by traversing the internal nodes in inorder and listing the $h(v)$ values. As the nodes are listed in inorder, subtrees form contiguous ranges in the array. We can therefore rewrite the solution as
\begin{displaymath}
\countq(\ell,r) = (r + 1 - \ell) - \sum_{i=\ell}^{r-1} H[i].
\end{displaymath}
To speed up the computation, we encode the array in unary as bitvector $H'$. Each cell $H[i]$ is encoded as a 1-bit, followed by $H[i]$ 0s. We can now compute the sum by counting the number of 0s between the 1s of ranks $\ell$ and $r$:
\begin{displaymath}
\countq(\ell,r) = 2(r - \ell) - (\select_{1}(H',r) - \select_{1}(H',\ell)) + 1.
\end{displaymath}
As there are $n-1$ 1s and $n-d$ 0s, bitvector $H'$ takes at most $2n+o(n)$ bits.

\subsection{Compressing the Bitvector}
\label{section:new}

The original bitvector requires $2n + o(n)$ bits,
regardless of the underlying data. This can be a considerable
overhead with highly compressible collections, taking significantly more space
than the $\CSA$ (on top of which the structure operates).
Fortunately, as we now show, the bitvector $H'$ used in \citeauthor{Sad07}'s method is highly compressible. There are five main ways of compressing the bitvector, with different combinations of them working better with different datasets.

\begin{enumerate}

\item Let $V_{v}$ be the set of nodes of the binary suffix tree corresponding to node $v$ of the original suffix tree. As we only need to compute $\countq()$ for the nodes of the original suffix tree, the individual values of $h(u)$, $u \in V_{v}$, do not matter, as long as the sum $\sum_{u \in V_{v}} h(u)$ remains the same. We can therefore make bitvector $H'$ more compressible by setting $H[i] = \sum_{u \in V_{v}} h(u)$, where $i$ is the inorder rank of node $v$, and $H[j] = 0$ for the rest of the nodes. As there are no real drawbacks in this reordering, we will use it with all of our variants of Sadakane's method.

\item \emph{Run-length encoding} works well with versioned collections and
collections of random documents. When a pattern occurs in many documents, but no more than once in each, the corresponding subtree will be encoded as a run of 1s in $H'$.

\item When the documents in the collection have a versioned structure, we can reasonably expect \emph{grammar compression} to be effective. To see this, consider a substring $x$ that occurs in many documents, but at most once in each document. If each occurrence of substring $x$ is preceded by symbol $a$, the subtrees of the binary suffix tree corresponding to patterns $x$ and $ax$ have an identical structure, and the corresponding areas in $D$ are identical. Hence the subtrees are encoded identically in bitvector $H'$.

\item If the documents are internally repetitive but unrelated to each other,
the suffix tree has many subtrees with suffixes from just one document. We can
prune these subtrees into leaves in the binary suffix tree, using a
\emph{filter} bitvector $F[1..n-1]$ to mark the remaining nodes. Let $v$ be a
node of the binary suffix tree with inorder rank $i$. We will set $F[i] = 1$
iff $\countq(v) > 1$. Given a range $[\ell..r-1]$ of nodes in the binary suffix tree, the corresponding subtree of the pruned tree is $[\rank_{1}(F,\ell) .. \rank_{1}(F, r-1)]$. The filtered structure consists of bitvector $H'$ for the pruned tree and a compressed encoding of $F$.

\item We can also use filters based on the values in  array $H$ instead of the sizes of the document sets. If $H[i] = 0$ for most cells, we can use a \emph{sparse filter} $F_{S}[1..n-1]$, where $F_{S}[i] = 1$ iff $H[i] > 0$, and build bitvector $H'$ only for those nodes. We can also encode positions with $H[i] = 1$ separately with a \emph{$1$-filter} $F_{1}[1..n-1]$, where $F_{1}[i] = 1$ iff $H[i] = 1$. With a $1$-filter, we do not write 0s in $H'$ for nodes with $H[i] = 1$, but instead subtract the number of 1s in $F_{1}[\ell..r-1]$ from the result of the query. It is also possible to use a sparse filter and a $1$-filter simultaneously. In that case, we set $F_{S}[i] = 1$ iff $H[i] > 1$.

\end{enumerate}

\subsection{Analysis}

We analyze the number of runs of 1s in bitvector $H'$ in the expected
case. Assume that our document collection consists of $d$ documents,
each of length $r$, over an alphabet of size $\sigma$. We call string $S$ \emph{unique},
if it occurs at most once in every document. The subtree of the binary suffix tree
corresponding to a unique string is encoded as a run of 1s in
bitvector $H'$. If we can cover all leaves of the tree with $u$ unique substrings, bitvector
$H'$ has at most $2u$ runs of 1s.

Consider a random string of length $k$. Suppose the probability that the
string occurs at least twice in a given document is at most
$r^2 / (2 \sigma^{2 k})$, which is the case if, e.g., we choose each document
randomly or we choose one document randomly and generate the others by copying
it and randomly substituting some symbols.  By the union bound, the probability the string is non-unique is at most
$d r^2 / (2 \sigma^{2 k})$.
Let $N(i)$ be the number of non-unique strings of length
$k_{i} = \log_{\sigma} (r \sqrt{d}) + i$. As there are $\sigma^{k_{i}}$ strings of length
$k_{i}$, the expected value of $N(i)$ is at most $r \sqrt{d} / (2 \sigma^{i})$. The expected
size of the smallest cover of unique strings is therefore at most
\begin{displaymath}
(\sigma^{k_{0}} - N(0)) + \sum_{i=1}^{\infty} (\sigma N(i-1) - N(i)) =
r \sqrt{d} + (\sigma - 1) \sum_{i=0}^{\infty} N(i) \le
\left( \frac{\sigma}{2} + 1 \right) r \sqrt{d},
\end{displaymath}
where $\sigma N(i-1) - N(i)$ is the number of strings that become unique at length $k_{i}$.
The number of runs of 1s in $H'$ is therefore sublinear in the size of the
collection ($dr$). 
See Figure~\ref{figure:runs} for an experimental confirmation of this analysis.

\begin{figure}[t]
\centering
\includegraphics[width=0.8\textwidth]{./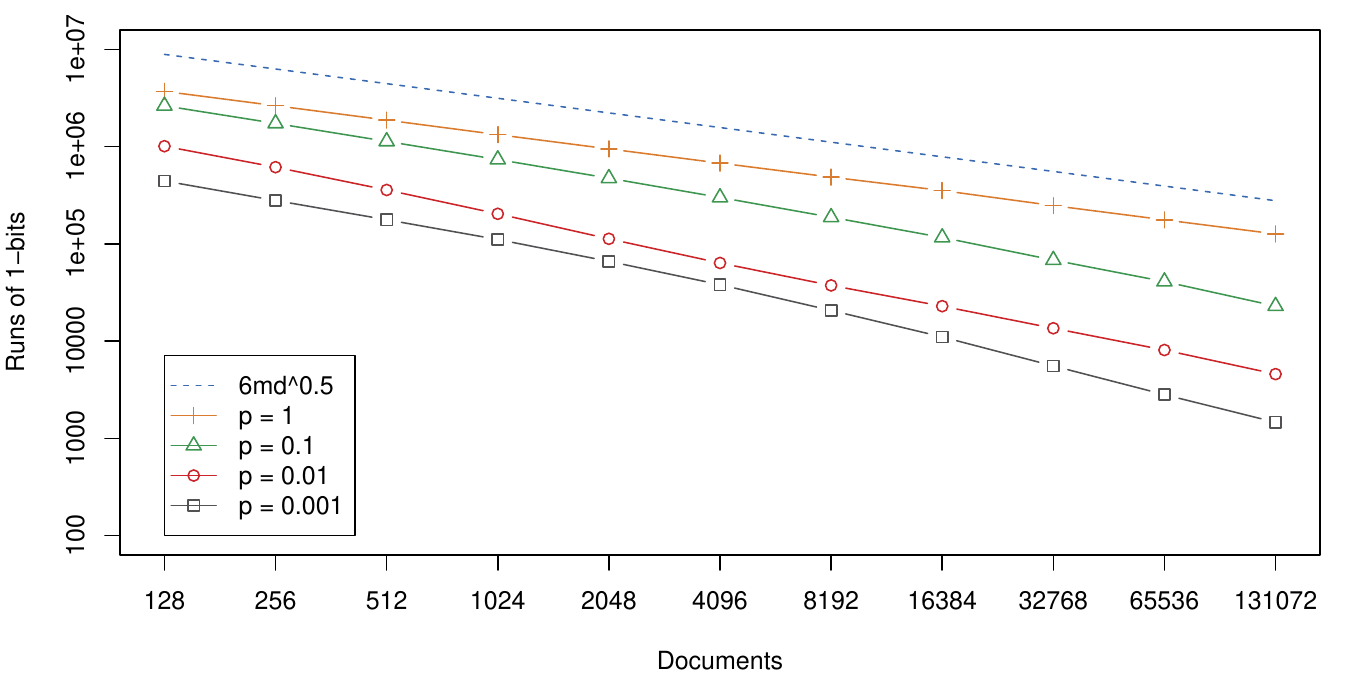}

\caption{The number of runs of $1$-bits in Sadakane's bitvector $H'$ on synthetic collections of DNA sequences ($\sigma = 4$). Each collection has been generated by taking a random sequence of length $m = 2^{7}$ to $2^{17}$, duplicating it $d = 2^{17}$ to $2^{7}$ times (making the total size of the collection $2^{24}$), and mutating the sequences with random point mutations at probability $p = 0.001$ to $1$. The mutations preserve zero-order empirical entropy by replacing the mutated symbol with a randomly chosen symbol according to the distribution in the original sequence. The dashed line represents the expected case upper bound for $p = 1$.}\label{figure:runs}
\end{figure}

\section{A Multi-term Index}
\label{sec:tfidf}

The queries we defined in the Introduction are {\em single-term}, that is, the 
query pattern $P$ is a single string. In this section we show how our indexes 
for single-term retrieval can be used for {\em ranked multi-term} queries on
repetitive text collections. The key idea is to regard our incremental top-$k$
algorithm of Section~\ref{sec:topk} as an {\em abstract representation} of the
inverted lists of the individual query terms, sorted by decreasing weight, and
then apply any algorithm that traverses those lists sequentially. Since our
relevance score will depend on the term frequency and the document frequency
of the terms, we will integrate a document counting structure as well
(Sections~\ref{sec:ilcp-count} or \ref{sec:count}).

Let $Q = \langle q_{1}, \dots, q_{m} \rangle$ be a query consisting of $m$
patterns $q_{i}$. We support ranked queries, which return the $k$ documents
with the highest scores among the documents matching the query. A
\emph{disjunctive} or \emph{ranked-OR} query matches document $D$ if at least
one of the patterns occurs in it, while a \emph{conjunctive} or
\emph{ranked-AND} query matches $D$ if all query patterns occur in it. Our index supports both conjunctive and disjunctive queries with \tfidf-like scores
$$
w(D, Q) = \sum_{i=1}^{m} w(D, q_{i}) = \sum_{i=1}^{m} f(\tf(D, q_{i})) \cdot g(\df(q_{i})),
$$
where $f \ge 0$ is an increasing function, $\tf(D, q_{i})$ is the \emph{term
frequency} (the number of occurrences) of pattern $q_{i}$ in document $D$, $g
\ge 0$ is a decreasing function, and $\df(q_{i})$ is the \emph{document
frequency} of pattern $q_{i}$. For example, the standard \tfidf\ scoring scheme corresponds to using
$f(\tf) = \tf$ and $g(\df) = \log (d / \max(\df, 1))$.

From Section~\ref{sec:topk}, we use the incremental variant, which stores the 
full answers for all the suffix tree nodes above leaves. 
The query algorithm uses $\CSA$ to find the lexicographic range
$[\ell_{i}..r_{i}]$ matching each pattern $q_{i}$. We then use \PDL{} to find
the sparse suffix tree node $v_{i}$ corresponding to range
$[\ell_{i}..r_{i}]$ and fetch its list $D_{v_i}$, which is stored in decreasing term frequency
order. If $v_i$ is not in the sparse suffix tree, we use instead the $\CSA$ to 
build $D_{v_i}$ by brute force from $\SA[\ell_{i}..r_{i}]$.
We also compute $\df(q_{i})=\countq(v_i)$ for all query
patterns $q_{i}$ with our document counting structure. The algorithm then iterates the following loop with $k' = 2k, 4k, 8k, \dotsc$:
\begin{enumerate}

\item Extract $k'$ more documents from the document list of $v_{i}$ for each
pattern $q_{i}$.

\item If the query is conjunctive, filter out extracted documents that do not
match the query patterns with completely decompressed document lists.

\item Determine a lower bound for $w(D, Q)$ for all documents $D$ extracted so far. If document $D$ has not been encountered in the document list of $v_{i}$, use $0$ as a lower bound for $w(D, q_{i})$.

\item Determine an upper bound for $w(D, Q)$ for all documents $D$. If
document $D$ has not been encountered in the document list of $v_{i}$, use
$\tf(D', q_{i})$, where $D'$ is the next unextracted document for pattern $q_{i}$, as an upper bound for $\tf(D, q_{i})$.

\item If the query is disjunctive, filter out extracted documents $D$ with smaller upper bounds for $w(D, Q)$ than the lower bounds for the current top-$k$ documents. Stop if the top-$k$ set cannot change further.

\item If the query is conjunctive, stop if the top-$k$ documents match all
query patterns and the upper bounds for the remaining documents are lower than the lower bounds for the top-$k$ documents.

\end{enumerate}
The algorithm always finds a correct top-$k$ set, although the scores may be incorrect if a disjunctive query stops early.

\section{Experiments and Discussion}
\label{sec:exp}

\subsection{Experimental Setup}

\subsubsection{Document Collections}

We performed extensive experiments with both real and synthetic
collections.\footnote{See \url{http://jltsiren.kapsi.fi/rlcsa}
for the datasets and full results.}
Most of our document collections were relatively small, around 100~MB in size, as some of the implementations \citep{NPVjea13} use 32-bit libraries. We also used larger versions of some collections, up to 1~GB in size, to see how the collection size affects the results. In general, collection size is more important in top-$k$ document retrieval. Increasing the number of documents generally increases the $\df/k$ ratio, and thus makes brute-force solutions based on document listing less appealing. In document listing, the size of the documents is more important than collection size, as a large $\occ/\df$ ratio makes brute-force solutions based on pattern matching less appealing.

The performance of various solutions depends both on the repetitiveness of the collection
and the type of the repetitiveness. Hence we used a fair number of real and synthetic collections with different characteristics for our experiments. We describe
them next, and summarize their statistics in Table~\ref{table:collections}.

\begin{sidewaystable}[p]
\centering
\caption{Statistics for document collections (small, medium, and large variants). Collection size, RLCSA size without suffix array samples, number of documents, average document length, number of patterns, average number of occurrences and document occurrences, and the ratio of occurrences to document occurrences. For the synthetic collections (second group), most of the statistics vary greatly.}\label{table:collections}

\begin{tabular}{lcccccccc}
\hline
\noalign{\smallskip}
Collection & \multicolumn{1}{c}{Size} & \multicolumn{1}{c}{CSA size} & \multicolumn{1}{c}{Documents} & \multicolumn{1}{c}{Avg.\ doc size} & \multicolumn{1}{c}{Patterns} & \multicolumn{1}{c}{Occurrences} & \multicolumn{1}{c}{Document occs} & \multicolumn{1}{c}{Occs per doc} \\
	   & \multicolumn{1}{c}{($n$)} & \multicolumn{1}{c}{(RLCSA)} & \multicolumn{1}{c}{($d$)} & \multicolumn{1}{c}{($n/d$)} & & \multicolumn{1}{c}{($\avg{\occ}$)} & \multicolumn{1}{c}{($\avg{\df}$)} & \multicolumn{1}{c}{($\avg{\occ/\df}$)} \\
\noalign{\smallskip}
\hline
\noalign{\smallskip}
\Page      &  110 MB &   2.58 MB &       60 & 1{,}919{,}382 &  7{,}658 &   781 &     3 & 242.75 \\
           & 641 MB  &   9.00 MB &      190 & 3{,}534{,}921 & 14{,}286 & 2{,}601 &     6 & 444.79 \\
           & 1037 MB &  17.45 MB &      280 & 3{,}883{,}145 & 20{,}536 &  2{,}889 &     7 & 429.04 \\
\noalign{\smallskip}
\Revision  &  110 MB &   2.59 MB &     8{,}834 &   13{,}005 &  7{,}658 &   776 &   371 &   2.09 \\
           &  640 MB &   9.04 MB &    31{,}208 &   21{,}490 & 14{,}284 & 2{,}592 &  1{,}065 &   2.43 \\
           & 1035 MB &  17.55 MB &    65{,}565 &   16{,}552 & 20{,}536 &  2{,}876 &  1{,}188 &   2.42 \\
\noalign{\smallskip}
\Enwiki    &  113 MB &  49.44 MB &     7{,}000 &   16{,}932 & 18{,}935 &  1{,}904 &   505 &   3.77 \\
           &  639 MB & 309.31 MB &    44{,}000 &   15{,}236 & 19{,}628 & 10{,}316 &  2{,}856 &   3.61 \\
           & 1034 MB & 482.16 MB &    90{,}000 &   12{,}050 & 19{,}805 & 17{,}092 &  4{,}976 &   3.44 \\
\noalign{\smallskip}
\Influenza &  137 MB &   5.52 MB &   100{,}000 &    1{,}436 &  1{,}000 & 24{,}975 & 18{,}547 &   1.35 \\
           &  321 MB &  10.53 MB &   227{,}356 &    1{,}480 &  1{,}000 & 59{,}997 & 44{,}012 &   1.36 \\
\noalign{\smallskip}
\Swissprot &   54 MB &  25.19 MB &   143{,}244 &        398 & 10{,}000 &   160 &   121 &   1.33 \\
\noalign{\smallskip}
\Wiki      & 1432 MB &  42.90 MB &   103{,}190 &   14{,}540 \\
\noalign{\smallskip}
\hline
\noalign{\smallskip}
\DNA       &   95 MB &           &   100{,}000 &            & 889--1{,}000 \\
\noalign{\smallskip}
\Concat    &   95 MB &           & 10--1{,}000 &            & 7{,}538--15{,}272 \\
\noalign{\smallskip}
\Version   &   95 MB &           &    10{,}000 &            & 7{,}537--15{,}271 \\
\noalign{\smallskip}
\hline
\end{tabular}
\end{sidewaystable}

\paragraph{A note on collection size.}
The index structures evaluated in this paper should be understood as promising
algorithmic ideas. In most implementations, the construction algorithms do not
scale up for collections larger than a couple of gigabytes. This is often
intentional. In this line of research, being able to easily evaluate
variations of the fundamental idea is more important than the speed or memory
usage of construction. As a result, many of the construction algorithms build an explicit suffix tree for the collection and store various kinds of additional information in the nodes. Better construction algorithms can be designed once the most promising ideas have been identified. See Appendix~\ref{appendix:construction} for further discussion on index construction.

\paragraph{Real collections.} 
We use various document collections from real-life repetitive scenarios. Some
collections come in small, medium, and large variants.
\Page{} and \Revision{} are repetitive collections generated from a Finnish-language Wikipedia archive with full version history. There are $60$ (small), $190$ (medium), or $280$ (large) pages with a total of $8{,}834$, $31{,}208$, or $65{,}565$ revisions. In $\Page$, all the revisions of a page form a single document, while each revision becomes a separate document in $\Revision$.
\Enwiki{} is a non-repetitive collection of $7{,}000$, $44{,}000$, or $90{,}000$ pages from a snapshot of the English-language Wikipedia.
\Influenza{} is a repetitive collection containing $100{,}000$ or $227{,}356$ sequences from influenza virus genomes (we only have small and large variants).
\Swissprot{} is a non-repetitive collection of $143{,}244$ protein sequences used
in many document retrieval papers (e.g., \cite{NPVjea13}). As the full collection is only 54~MB, only the small version of \Swissprot\ exists.
\Wiki{} is a repetitive collection similar to \Revision. It is generated by sampling all revisions of 1\% of pages from the English-language versions of Wikibooks, Wikinews, Wikiquote, and Wikivoyage.

\paragraph{Synthetic collections.}
To explore the effect of
collection repetitiveness on document retrieval performance in more detail, we generated three types of
synthetic collections, using files from the Pizza \& Chili corpus%
\footnote{\url{http://pizzachili.dcc.uchile.cl}}.
\DNA{} is similar to \Influenza. Each collection has $d=1$, $10$, $100$, or $1{,}000$ base documents, $100{,}000/d$ variants of each base document, and mutation rate $p = 0.001$, $0.003$, $0.01$, $0.03$, or $0.1$. We take a prefix of length $1{,}000$ from the Pizza \& Chili DNA file and generate the base documents by mutating the prefix at probability $10p$ under the same model as in Figure~\ref{figure:runs}. We then generate the variants in the same way with mutation rate $p$.
\Concat{} and \Version{} are similar to \Page{} and \Revision{}, respectively. We read $d=10$, $100$, or $1{,}000$ base documents of length $10{,}000$ from the Pizza \& Chili English file, and generate $10{,}000/d$ variants of each base document with mutation rates $0.001$, $0.003$, $0.01$, $0.03$, and $0.1$, as above. Each variant becomes a separate document in \Version{}, while all variants of the same base document are concatenated into a single document in \Concat.

\subsubsection{Queries}

\paragraph{Real collections.}
For \Page{} and \Revision{}, we downloaded a list of Finnish words from the Institute for the Languages in Finland, and chose all words of length $\ge 5$ that occur in the collection.
For \Enwiki{}, we used search terms from an MSN query log with stopwords filtered out. We generated $20{,}000$ patterns according to term frequencies, and selected those that occur in the collection.
For \Influenza{}, we extracted $100{,}000$ random substrings of length $7$, filtered out duplicates, and kept the $1{,}000$ patterns with the largest $\occ/\df$ ratios.
For \Swissprot{}, we extracted $200{,}000$ random substrings of length $5$, filtered out duplicates, and kept the $10{,}000$ patterns with the largest $\occ/\df$ ratios.
For \Wiki{}, we used the TREC 2006 Terabyte Track efficiency queries\footnote{\url{http://trec.nist.gov/data/terabyte06.html}} consisting of $411{,}394$ terms in $100{,}000$ queries.

\paragraph{Synthetic collections.}
We generated the patterns for \DNA{} with a similar process as for \Influenza{} and \Swissprot. We extracted $100{,}000$ substrings of length $7$, filtered out duplicates, and chose the $1{,}000$ with the largest $\occ/\df$ ratios.
For \Concat{} and \Version{}, patterns were generated from the MSN query log in the same way as for \Enwiki.

\subsubsection{Test Environment}

We used two separate systems for the experiments. For document listing and document counting, our test environment had two 2.40~GHz quad-core Intel Xeon E5620 processors and 96~GB memory. Only one core was used for the queries. The operating system was Ubuntu 12.04 with Linux kernel 3.2.0. All code was written in C++. We used g++ version 4.6.3 for the document listing experiments and version 4.8.1 for the document counting experiments.

For the top-$k$ retrieval and \tfidf\ experiments, we used another system with two 16-core AMD Opteron 6378 processors and 256~GB memory. We used only a single core for the single-term queries and up to 32~cores for the multi-term queries. The operating system was Ubuntu 12.04 with Linux kernel 3.2.0. All code was written in C++ and compiled with g++ version 4.9.2.

We executed the query benchmarks in the following way:
\begin{enumerate}

\item Load the RLCSA with the desired sample period for the current collection into memory.

\item Load the query patterns corresponding to the collection into memory and execute $\findq$ queries in the RLCSA. Store the resulting lexicographic ranges $[\ell..r]$ in vector $V$.

\item Load the index to be benchmarked into memory.

\item Iterate through vector $V$ once using a single thread and execute the desired query for each range $[\ell..r]$. Measure the total wall clock time for executing the queries.\label{step:queries}

\end{enumerate}

We divided the measured time by the number of patterns, and listed the average time per query in milliseconds or microseconds and the size of the index structure in bits per symbol. There were certain exceptions:
\begin{itemize}

\item \LZ{} and \Grammar{} do not use a \CSA. With them, we iterated through the vector of patterns as in step~\ref{step:queries}, once the index and the patterns had been loaded into memory. The average time required to get the range $[\ell..r]$ in \CSA-based indexes (4 to 6~microseconds, depending on the collection) was negligible compared to the average query times of \LZ{} (at least 170~microseconds) and \Grammar{} (at least 760~microseconds).

\item We used the existing benchmark code with \SURF{}. The code first loads the index into memory and then iterates through the pattern file by reading one line at a time. To reduce the overhead from reading the patterns, we cached them by using \texttt{cat > /dev/null}. Because \SURF{} queries were based on the pattern instead of the corresponding range $[\ell..r]$, we executed $\findq$ queries first and subtracted the time used for them from the subsequent top-$k$ queries.

\item In our \tfidf{} index, we parallelized step~\ref{step:queries} using the OpenMP \texttt{parallel for} construct.

\item We used the existing benchmark code with Terrier. We cached the queries as with \SURF, set \texttt{trec.querying.outputformat} to \texttt{NullOutputFormat}, and set the logging level to \texttt{off}.

\end{itemize}

\subsection{Document Listing}\label{sec:doclist-experiments}

We compare our new proposals from Sections~\ref{sec:repet} and \ref{sec:pdl-list}
to the existing document listing solutions. We also aim to determine when these
sophisticated approaches are better than brute-force 
solutions based on pattern matching.

\subsubsection{Indexes}\label{section:algorithms}

\paragraph{Brute force (\Brute).} These algorithms simply sort the document identifiers in the range $\DA[\ell..r]$ and report each of them once. \BruteD{} stores $\DA$ in $n \log d$ bits, while \BruteL{} retrieves the range $\SA[\ell..r]$ with the $\locateq$ functionality of the $\CSA$ and uses bitvector $B$ to convert it to $\DA[\ell..r]$. 

\paragraph{Sadakane (\Sada).}
This family of algorithms is based on the improvements of \cite{Sad07} to the algorithm of \cite{Mut02}. \SadaCL{} is the original algorithm, while \SadaCD{} uses an explicit document array $\DA$ instead of retrieving the document identifiers with $\locateq$. 

\paragraph{ILCP (\textsf{ILCP}).}
This is our proposal in Section~\ref{sec:repet}. The algorithms are the same
as those of \cite{Sad07}, but they run on the run-length encoded \ILCP\ array.
As for \Sada, \SadaIL{} obtains the document identifiers using
$\locateq$ on the $\CSA$, whereas \SadaID{} stores array $\DA$ explicitly.

\paragraph{Wavelet tree (\WT).} This index stores the document array in a
wavelet tree (Section~\ref{sec:ranksel}) to efficiently find the distinct elements in $\DA[\ell..r]$ \citep{VM07}. The best known implementation of this idea \citep{NPVjea13} uses plain, entropy-compressed, and grammar-compressed bitvectors in the wavelet tree -- depending on the level. Our \WT{} implementation uses a heuristic similar to the original WT-alpha \citep{NPVjea13}, multiplying the size of the plain bitvector by $0.81$ and the size of the entropy-compressed bitvector by $0.9$, before choosing the smallest one for each level of the tree. These constants were determined by experimental tuning.

\paragraph{Precomputed document lists (\PDL).} This is our proposal in
Section~\ref{sec:pdl-list}. Our implementation resorts to \BruteL\ to handle
the short regions that the index does not cover. The variant \PDLBC{} compresses
sets of equal documents using a Web graph compressor \citep{HNspire12.3}.
\PDLRP{} uses Re-Pair compression \citep{LM00} as implemented by Navarro\footnote{\url{http://www.dcc.uchile.cl/gnavarro/software}} and stores the dictionary in plain form. We use block size $b=256$ and storing factor $\beta=16$, which
have proved to be good general-purpose parameter values. 

\paragraph{Grammar-based (\Grammar).} This index~\citep{CM13} is an adaptation
of a grammar-compressed self-index \citep{CN12} to document listing.
Conceptually similar to \PDL, \Grammar{} uses Re-Pair to parse the collection. For each nonterminal symbol in the grammar, it stores the set of identifiers of the documents whose encoding contains the symbol. A second round of Re-Pair is used to compress the sets. Unlike most of the other solutions, \Grammar{} is an independent index and needs no $\CSA$ to operate.

\paragraph{Lempel-Ziv (\LZ).}
This index~\citep{FN13} is an adaptation of a pattern-matching index based on LZ78 
parsing \citep{Nav03} to document listing. Like \Grammar, \LZ{} does not need a $\CSA$.

\medskip

We implemented \Brute, \Sada, \ILCP, and the \PDL{} variants ourselves\footnote{\url{http://jltsiren.kapsi.fi/rlcsa}} and modified existing implementations of \WT, \Grammar, and \LZ{} for our purposes. 
We always used the RLCSA~\citep{MNSV09} as the $\CSA$, as it performs well on repetitive collections. The $\locateq$ support in RLCSA includes optimizations for long query ranges and repetitive collections, which is important for \BruteL{} and \SadaIL. We used suffix array sample periods $8, 16, 32, 64, 128$ for non-repetitive collections and $32, 64, 128, 256, 512$ for repetitive ones.

When a document listing solution uses a $\CSA$, we start the queries from the lexicographic range $[\ell..r]$ instead of the pattern $P$. This allows us to see the performance differences between the fastest solutions better. The average time required for obtaining the ranges was $4$ to $6$ microseconds per pattern, depending on the collection, which is negligible compared to the average time used by \Grammar{} (at least $760$ microseconds) and \LZ{} (at least $170$ microseconds).

\subsubsection{Results}

\paragraph{Real collections.}
Figures~\ref{figure:doclist-small} and \ref{figure:doclist-large} contain the results for document listing with small and large real collections, respectively. For most of the indexes, the time/space trade-off is given by the RLCSA sample period. The trade-off of \LZ{} comes from a parameter specific to that structure involving RMQs \citep{FN13}. \Grammar{} has no trade-off.

\BruteL{} always uses the least amount of space, but it is also the slowest solution.
In collections with many short documents (i.e., all except \Page), we have $\occ/\df < 4$
on the average. The additional effort made by \SadaCL{} and \SadaIL{} to report each
document only once does not pay off, and the space used by the RMQ structure is better
spent on increasing the number of suffix array samples for \BruteL.
The difference is, however,
very noticeable on \Page, where the documents are large and there are hundreds of
occurrences of the pattern in each document. \SadaIL{} 
uses less space than \SadaCL{} when the collection is repetitive and contains many
similar documents (i.e., on \Revision\ and \Influenza); otherwise \SadaCL{} is slightly smaller.

The two \PDL{} alternatives usually achieve similar performance, but in some
cases \PDLBC{}
uses much less space. \PDLBC, in turn, can use significantly more space than \BruteL, \SadaCL, 
and \SadaIL, but is always orders of magnitude faster. 
The document sets of versioned collections such as \Page{} and \Revision{} are
very compressible, making the collections very suitable for \PDL. On the other hand,
grammar-based compression cannot reduce the size of the stored document sets enough
when the collections are non-repetitive.
Repetitive but unstructured collections like \Influenza{} represent an interesting
special case. When the number of revisions of each base document is much larger than the
block size $b$, each leaf block stores an essentially random subset of
the revisions, which cannot be compressed very well.

Among the other indexes, \SadaCD{} and \SadaID{} can be significantly
faster than \PDLBC, but they also use much more space. From the non-$\CSA$-based
indexes, \Grammar{} reaches the Pareto-optimal curve on \Revision{} and
\Influenza{}, while being too slow or too large on the other collections. We did not build \Grammar{} for the large version of \Page, as it would have taken several months.

\begin{figure}[p]
\centering
\minipage{177pt}
  \includegraphics[trim = 0 43.2 0 0, width=\linewidth]{./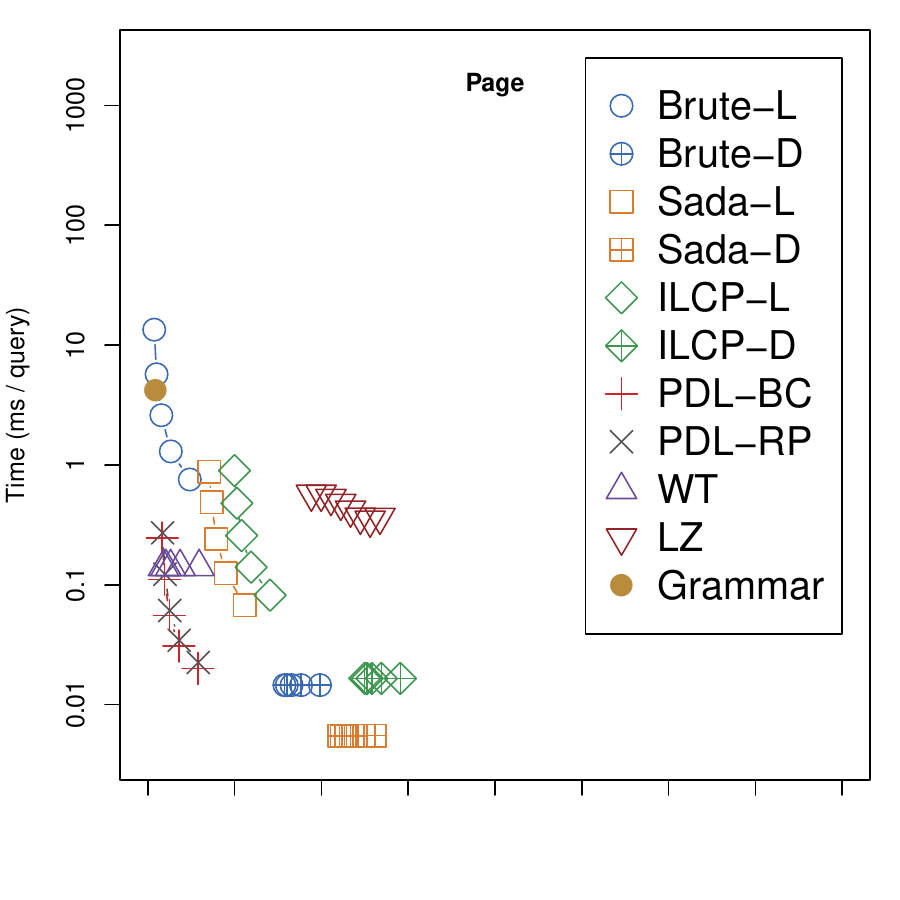}
\endminipage\hfill
\minipage{159.3pt}
  \includegraphics[trim = 43.2 43.2 0 0, width=\linewidth]{./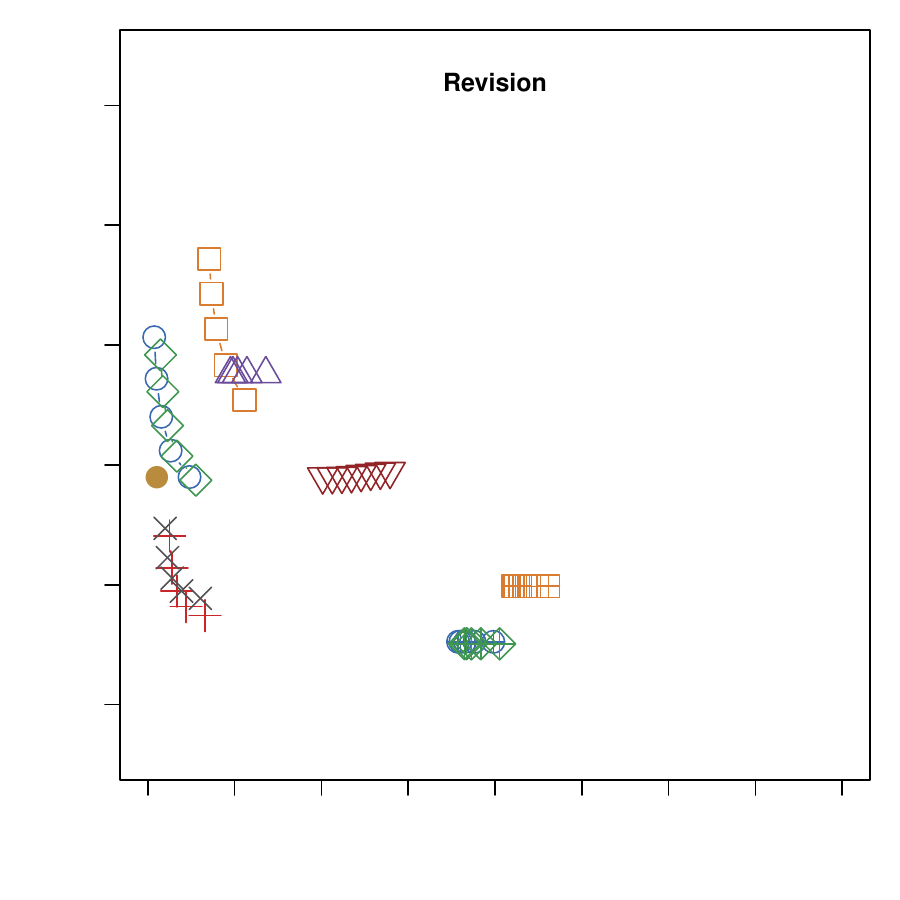}
\endminipage
\newline
\minipage{177pt}
  \includegraphics[trim = 0 43.2 0 0, width=\linewidth]{./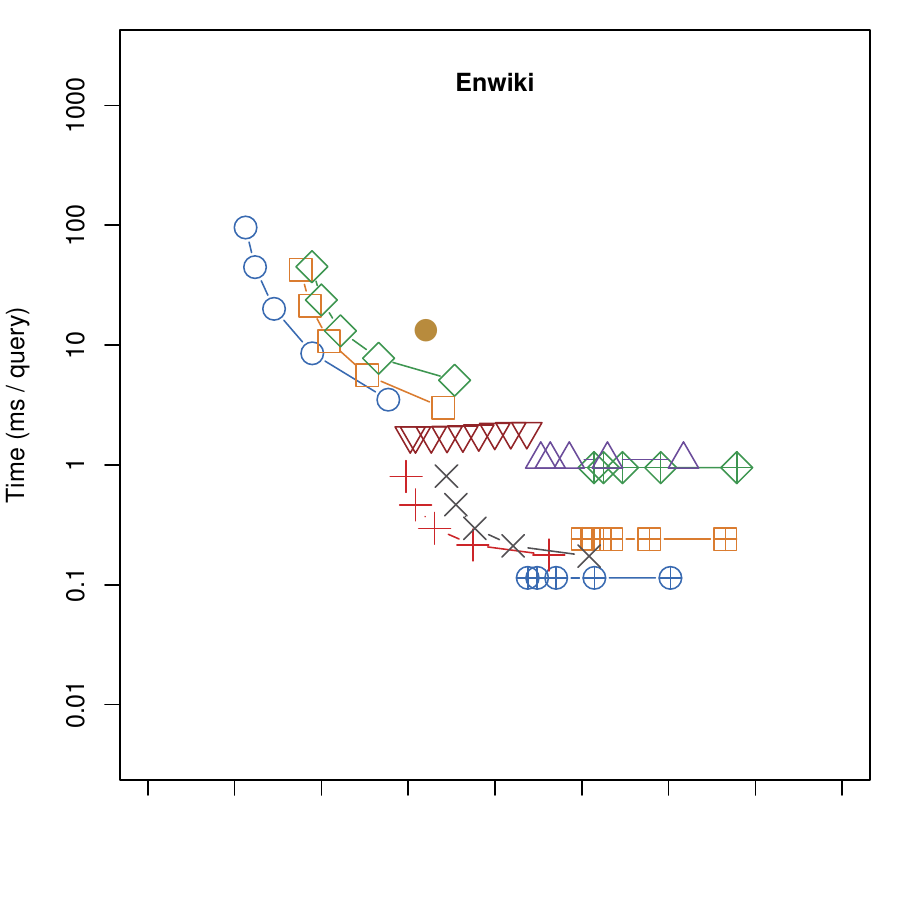}
\endminipage\hfill
\minipage{159.3pt}
  \includegraphics[trim = 43.2 43.2 0 0, width=\linewidth]{./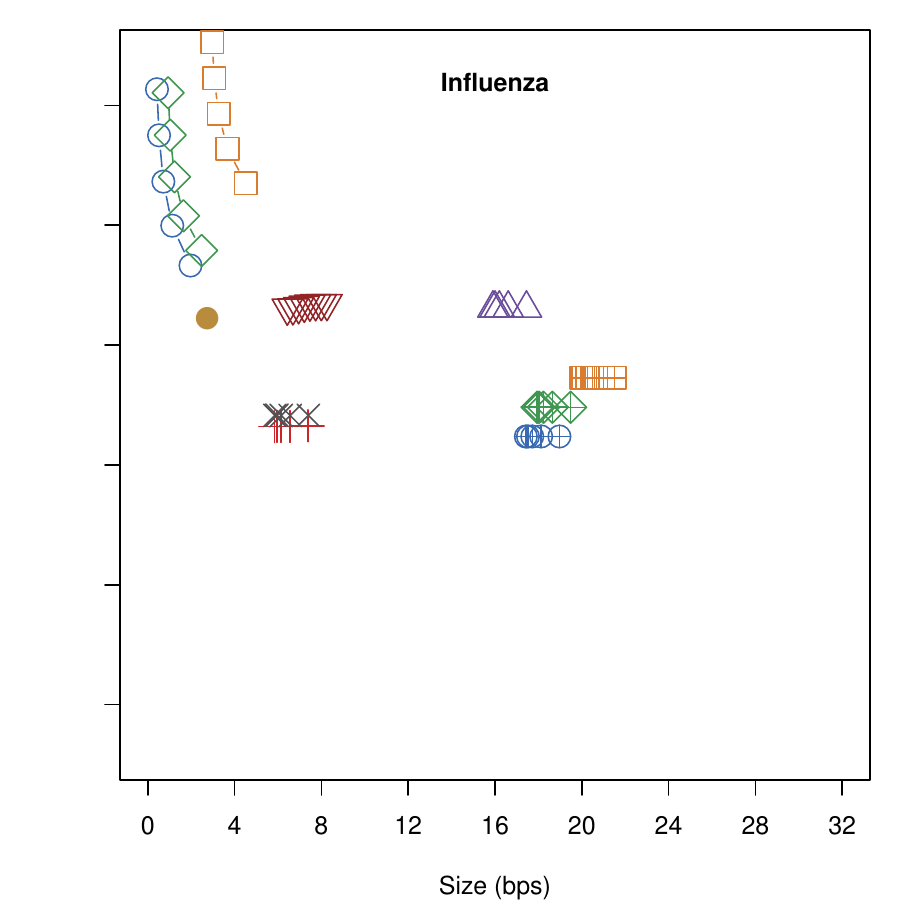}
\endminipage
\newline
\minipage{177pt}
  \includegraphics[trim = 0 0 0 0, width=\linewidth]{./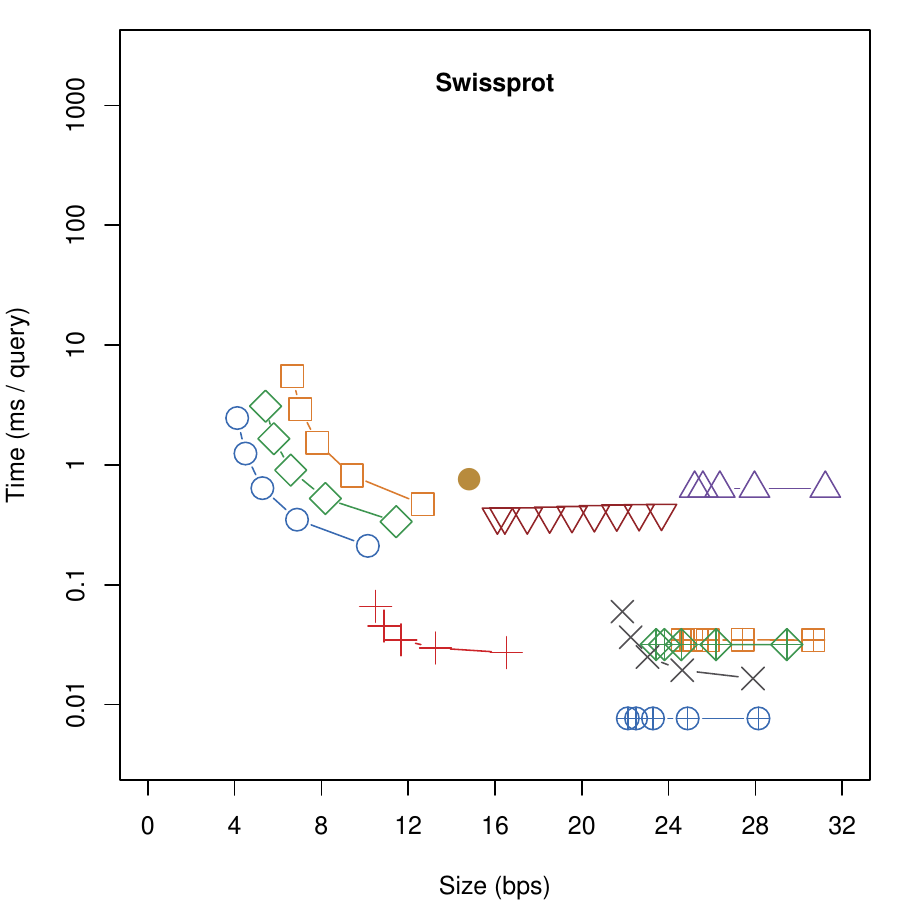}
\endminipage\hfill
\minipage{159.3pt}
\endminipage

\caption{Document listing on small real
collections. The total size of the index in bits per symbol ($x$)
and the average time per query in milliseconds ($y$).}
\label{figure:doclist-small}
\end{figure}

\begin{figure}[t]
\centering
\minipage{177pt}
  \includegraphics[trim = 0 43.2 0 0, width=\linewidth]{./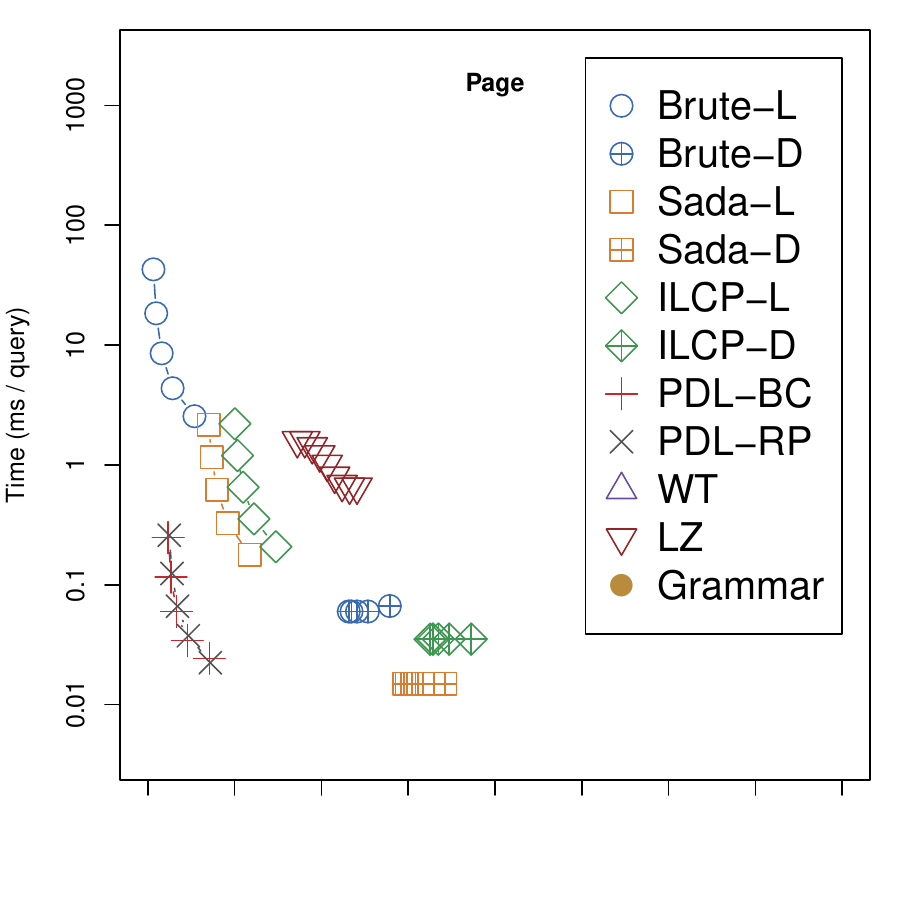}
\endminipage\hfill
\minipage{159.3pt}
  \includegraphics[trim = 43.2 43.2 0 0, width=\linewidth]{./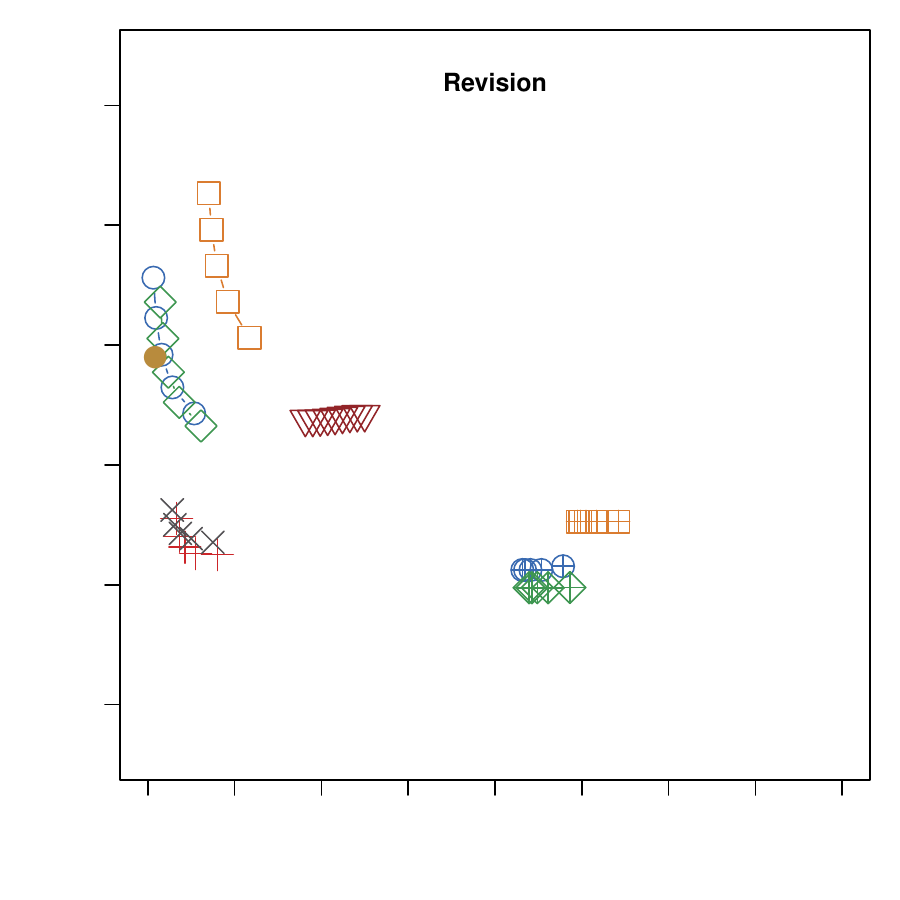}
\endminipage
\newline
\minipage{177pt}
  \includegraphics[trim = 0 0 0 0, width=\linewidth]{./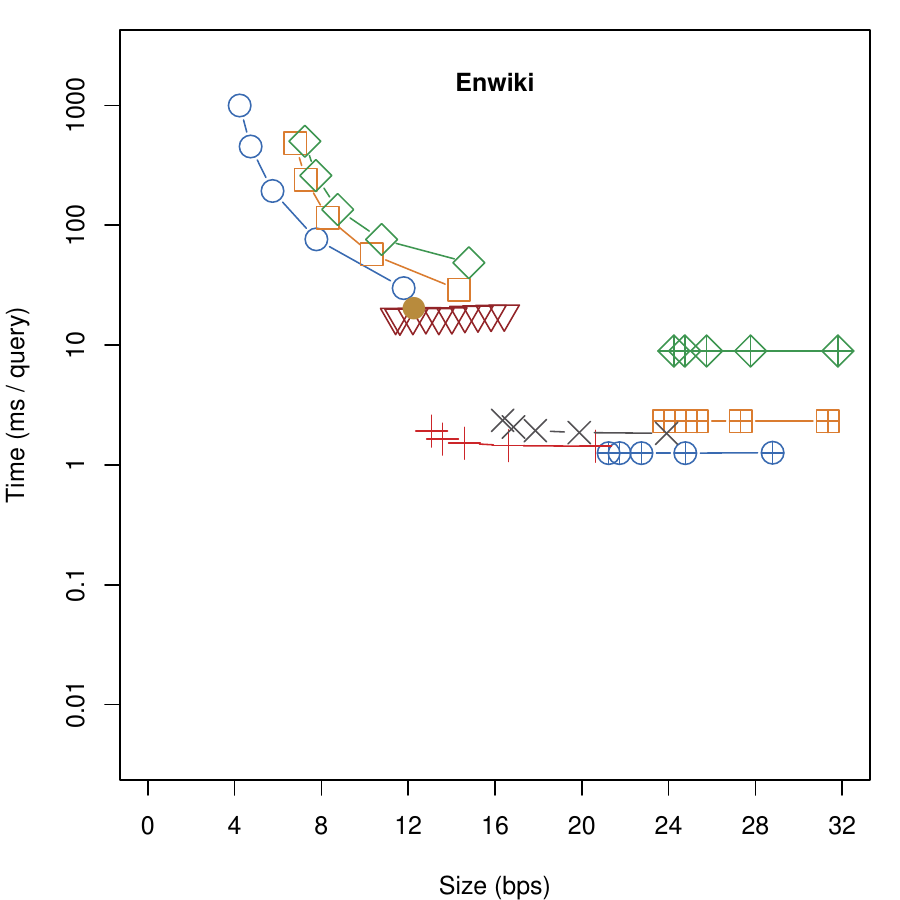}
\endminipage\hfill
\minipage{159.3pt}
  \includegraphics[trim = 43.2 0 0 0, width=\linewidth]{./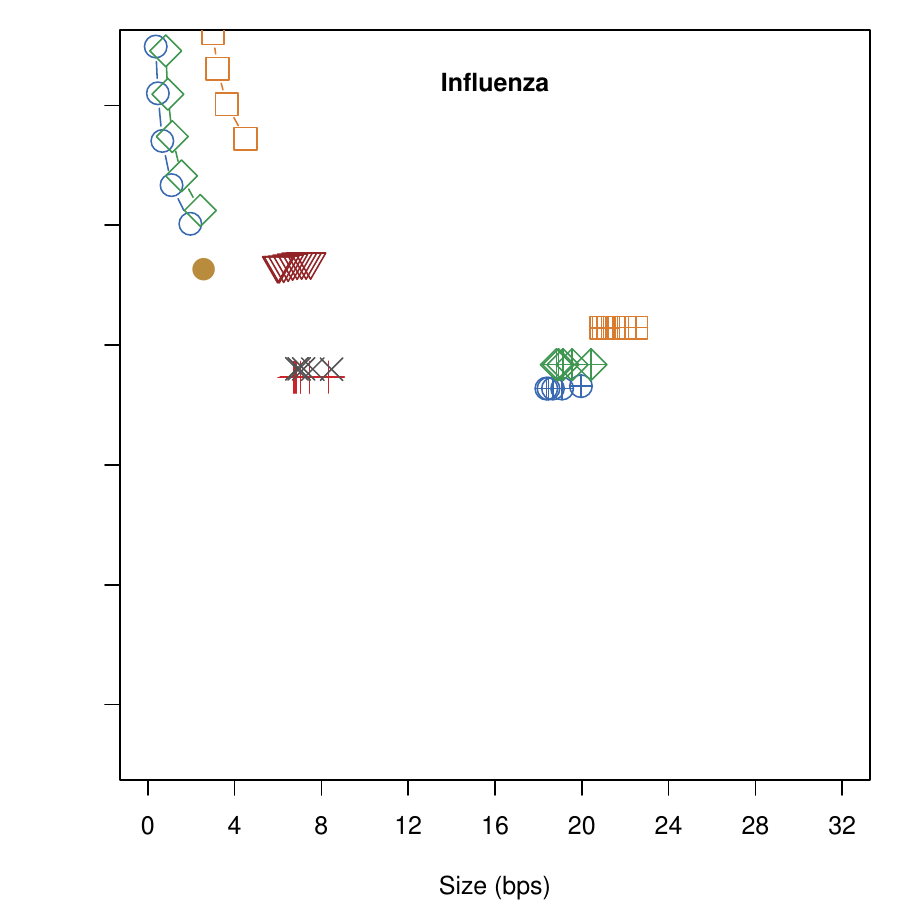}
\endminipage

\caption{Document listing on large real
collections. The total size of the index in bits per symbol ($x$)
and the average time per query in milliseconds ($y$).}
\label{figure:doclist-large}
\end{figure}

In general, we can recommend \PDLBC{} as a medium-space alternative for document
listing. When less space is available, we can use \SadaIL, which offers robust
time and space guarantees. If the documents are small, we can even use
\BruteL. Further,
we can use fast document counting to compare $\df$ with $\occ =
r-\ell+1$, and choose between \SadaIL{} and \BruteL{} according to the results.

\paragraph{Synthetic collections.}

\begin{figure}[t!]
\minipage{177pt}
  \includegraphics[trim = 0 43.2 0 0, width=\linewidth]{./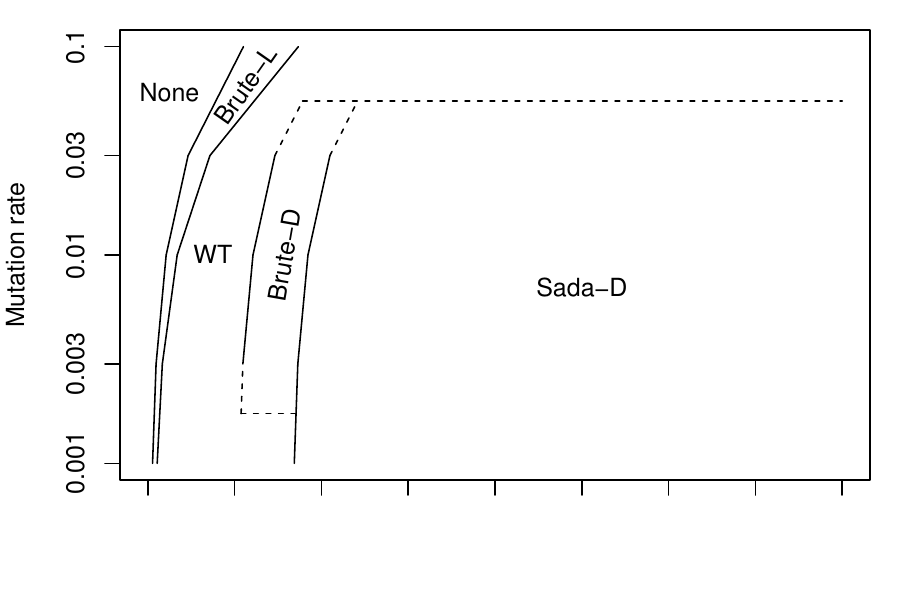}
\endminipage\hfill
\minipage{159.3pt}
  \includegraphics[trim = 43.2 43.2 0 0, width=\linewidth]{./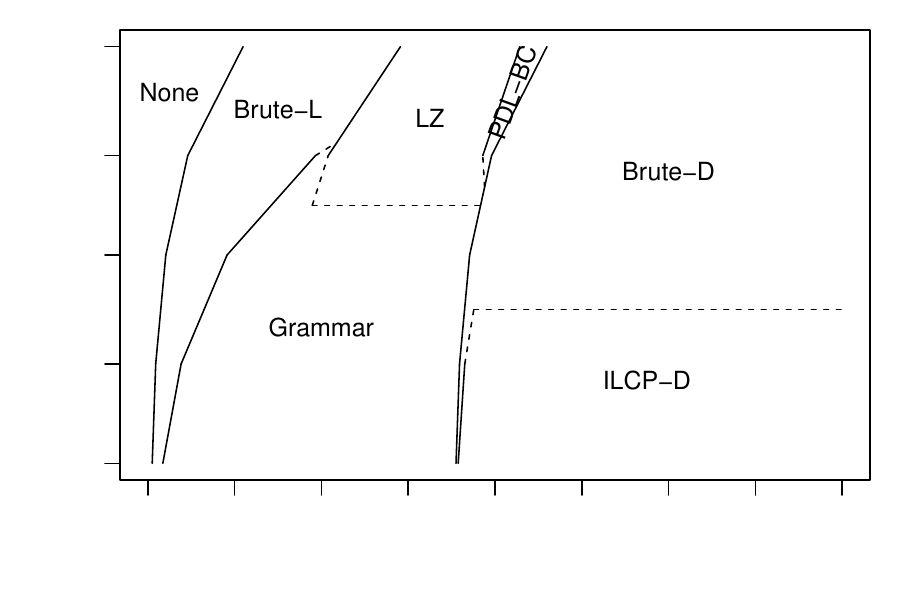}
\endminipage
\newline
\minipage{177pt}
  \includegraphics[trim = 0 43.2 0 0, width=\linewidth]{./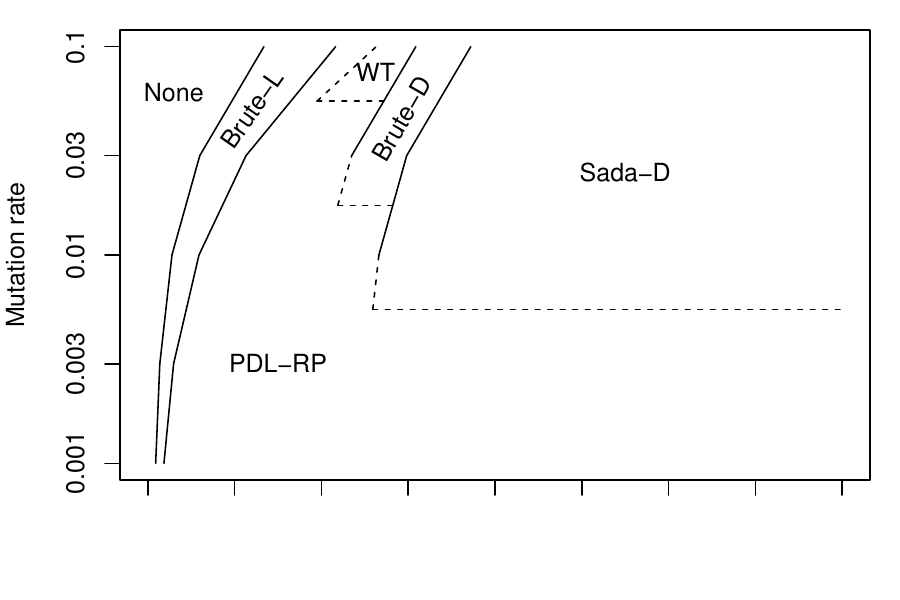}
\endminipage\hfill
\minipage{159.3pt}
  \includegraphics[trim = 43.2 43.2 0 0, width=\linewidth]{./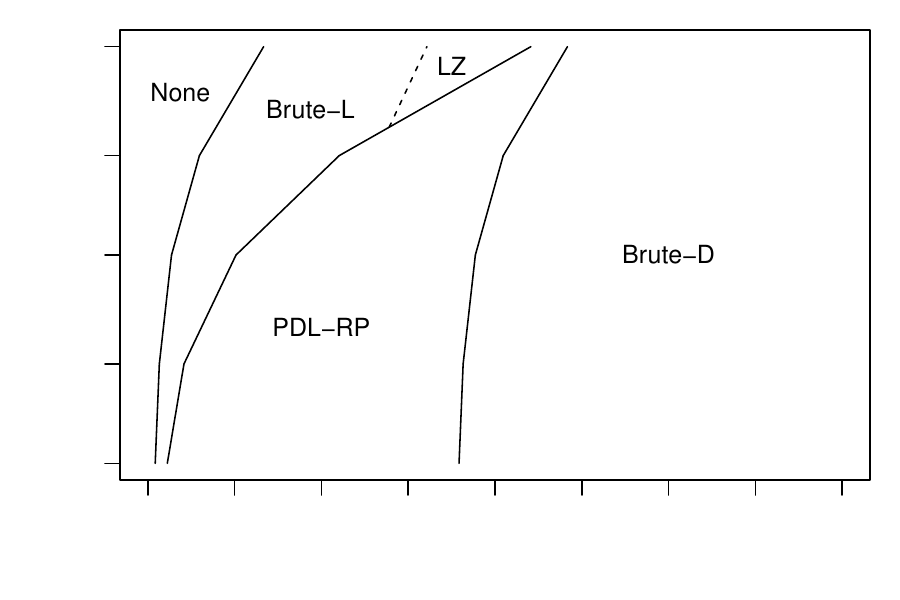}
\endminipage
\newline
\minipage{177pt}
  \includegraphics[trim = 0 0 0 0, width=\linewidth]{./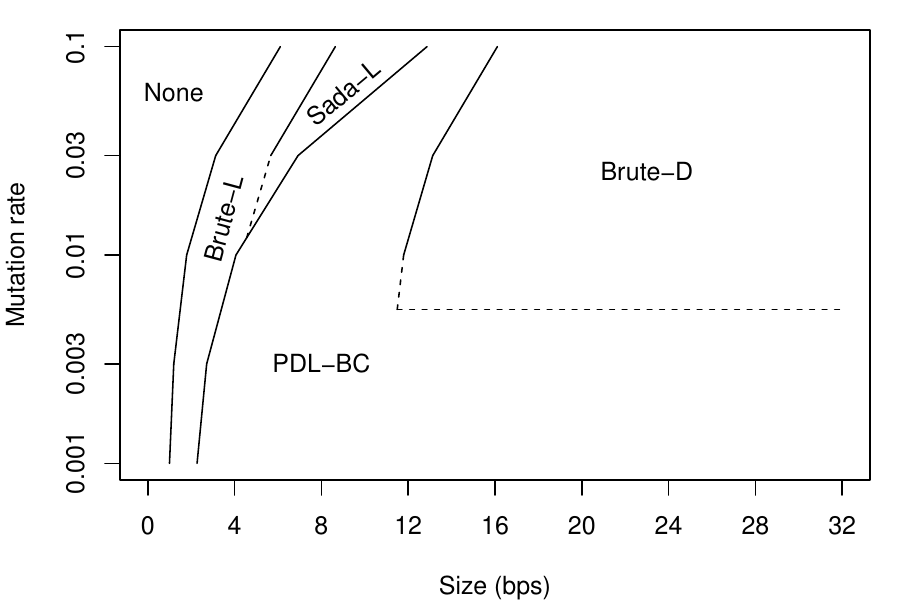}
\endminipage\hfill
\minipage{159.3pt}
  \includegraphics[trim = 43.2 0 0 0, width=\linewidth]{./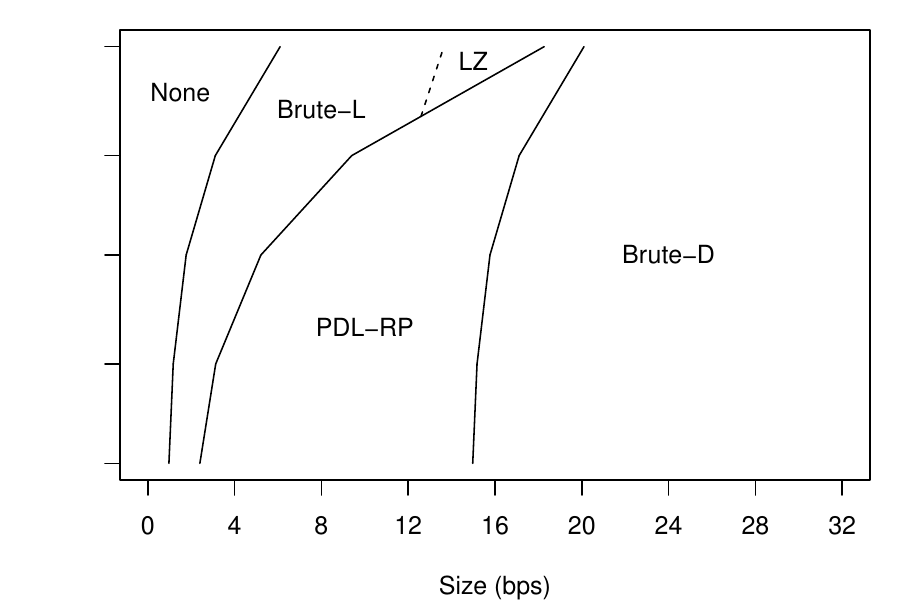}
\endminipage

\caption{Document listing on synthetic collections. The fastest solution for a given size in bits per symbol and a mutation rate. From top to bottom: $10$, $100$, and $1{,}000$ base documents with \Concat{} (left) and \Version{} (right). \textsf{None} denotes that no solution can achieve that size.}
\label{figure:synthetic-wiki}
\end{figure}

\begin{figure}[t!]
\minipage{177pt}
  \includegraphics[trim = 0 43.2 0 0, width=\linewidth]{./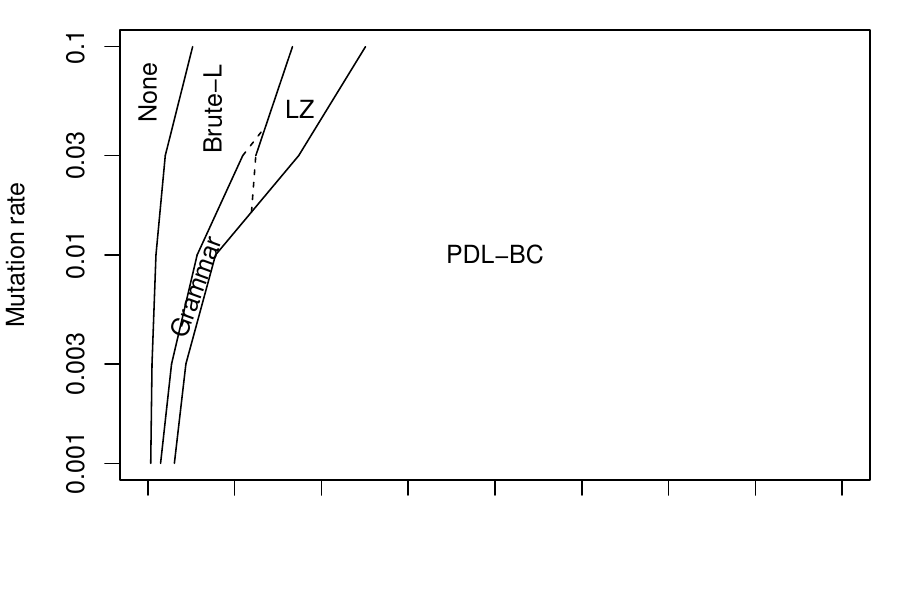}
\endminipage\hfill
\minipage{159.3pt}
  \includegraphics[trim = 43.2 43.2 0 0, width=\linewidth]{./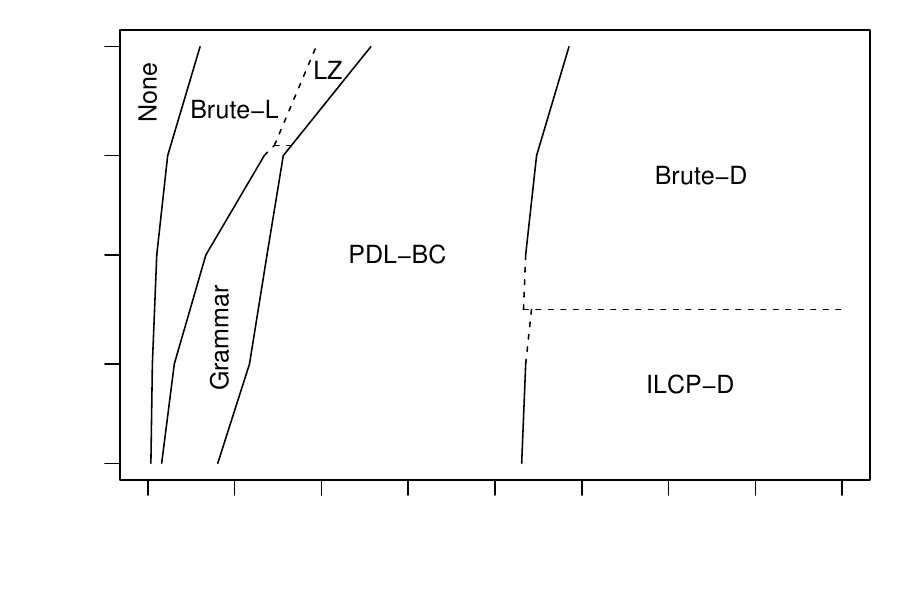}
\endminipage
\newline
\minipage{177pt}
  \includegraphics[trim = 0 0 0 0, width=\linewidth]{./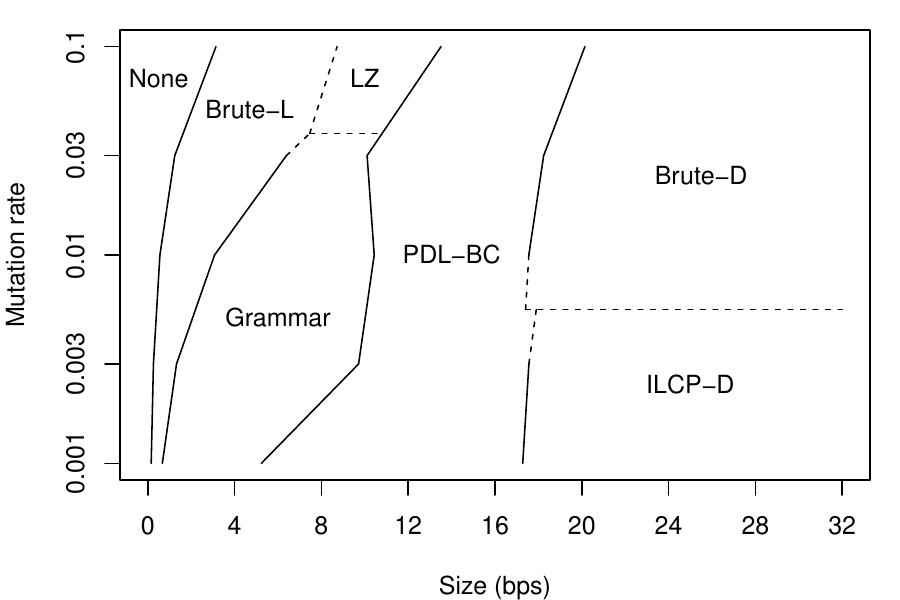}
\endminipage\hfill
\minipage{159.3pt}
  \includegraphics[trim = 43.2 0 0 0, width=\linewidth]{./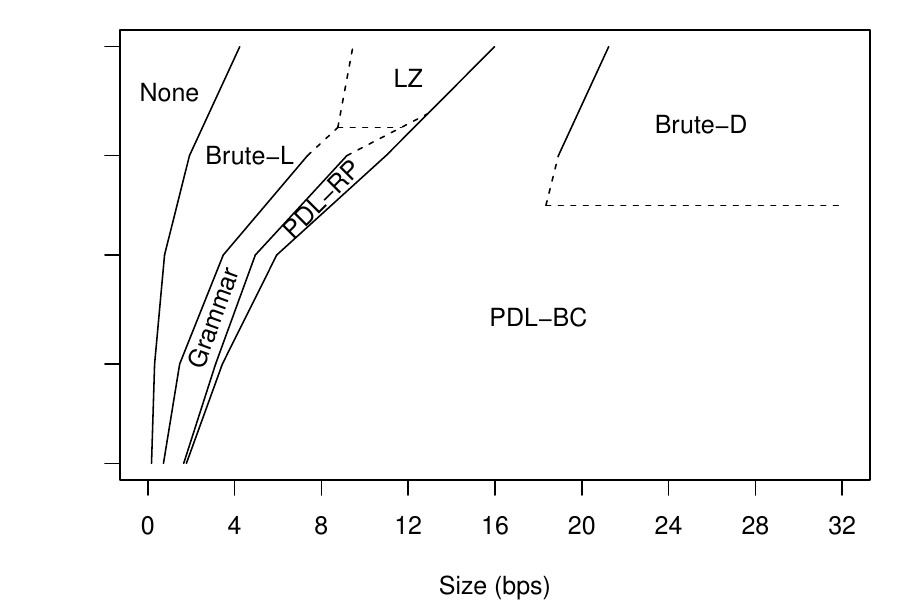}
\endminipage

\caption{Document listing on synthetic collections. The fastest solution for a given size in bits per symbol and a mutation rate. \DNA{} with $1$ (top left), $10$ (top right), $100$ (bottom left), and $1{,}000$ (bottom right) base documents. \textsf{None} denotes that no solution can achieve that size.}
\label{figure:synthetic-dna}
\end{figure}

Figures~\ref{figure:synthetic-wiki} and \ref{figure:synthetic-dna} show our document listing results with synthetic collections. Due to the large number of collections, the results for a given collection type and number of base documents are combined in a single plot, showing the fastest algorithm for a given amount of space and mutation rate. Solid lines connect measurements that are the fastest for their size, while dashed lines are rough interpolations.

The plots were simplified in two ways. Algorithms providing a marginal
and/or inconsistent improvement in speed in a very narrow region (mainly
\SadaCL{} and \SadaIL) were left out. When \PDLBC{} and \PDLRP{} had a very
similar performance, only one of them was chosen for the plot.

On \DNA, \Grammar{} was a good solution for small mutation rates, while \LZ{} was good with larger mutation rates. With more space available, \PDLBC{} became the fastest algorithm. \BruteD{} and \SadaID{} were often slightly faster than \PDL, when there was enough space available to store the document array. On \Concat{} and \Version{}, \PDL{} was usually a good mid-range solution, with \PDLRP{} being usually smaller than \PDLBC{}. The exceptions were the collections with $10$ base documents, where the number of variants ($1{,}000$) was clearly larger than the block size ($256$). With no other structure in the collection, \PDL{} was unable to find a good grammar to compress the sets. At the large end of the size scale, algorithms using an explicit document array $\DA$ were usually the fastest choices.

\subsection{Top-$k$ Retrieval}\label{sec:topk-experiments}

\subsubsection{Indexes}

We compare the following top-$k$ retrieval algorithms. Many of them share names
with the corresponding document listing structures described in
Section~\ref{section:algorithms}.

\paragraph{Brute force (\Brute).} These algorithms correspond to the document listing 
algorithms \BruteD\ and \BruteL. To perform top-$k$ retrieval, we not only
collect the distinct document identifiers after sorting $\DA[\ell..r]$, we also
record the number of times each one appears. The $k$ identifiers appearing
most frequently are then reported.

\paragraph{Precomputed document lists (\PDL).} We use the variant of \PDLRP\ modified for top-$k$ retrieval, as described in Section~\ref{sec:topk}. \PDLtopk{$b$} denotes PDL with block size $b$ and with document sets for all suffix tree nodes above the leaf blocks, while \PDLtopk{$b$+F} is the same with term frequencies. \PDLtopk{$b$--$\beta$} is PDL with block size $b$ and storing factor $\beta$.

\paragraph{Large and fast (\SURF).}
This index~\citep{Gog2015a} is based on a conceptual idea by \cite{NN12}, and improves upon a previous implementation \citep{KN13}. It can answer top-$k$ queries quickly if the pattern occurs at least twice in each reported document. If documents with just one occurrence are needed, \SURF{} uses a variant of \SadaCL{} to find them.

\medskip

We implemented the \Brute{} and \PDL{} variants ourselves\footnote{\url{http://jltsiren.kapsi.fi/rlcsa}} and used the existing implementation of \SURF{}\footnote{\url{https://github.com/simongog/surf/tree/single_term}}.
While \WT{} \citep{NPVjea13} also supports top-$k$ queries, the 32-bit implementation cannot index the large versions of the document collections used in the experiments. As with document listing, we subtracted the time required for finding the lexicographic ranges $[\ell..r]$ using a $\CSA$ from the measured query times. \SURF{} uses a $\CSA$ from the SDSL library \citep{Gog2014b}, while the rest of the indexes use RLCSA.

\subsubsection{Results}

Figure~\ref{figure:single-term} contains the results for top-$k$ retrieval using the large versions of the real collections. We left \Page{} out of the results, as the number of documents ($280$) was too low for meaningful top-$k$ queries. For most of the indexes, the time/space trade-off is given by the RLCSA sample period, while the results for \SURF{} are for the three variants presented in the paper.

\begin{figure}[p]
\minipage{177pt}
  \includegraphics[trim = 0 43.2 0 0, width=\linewidth]{./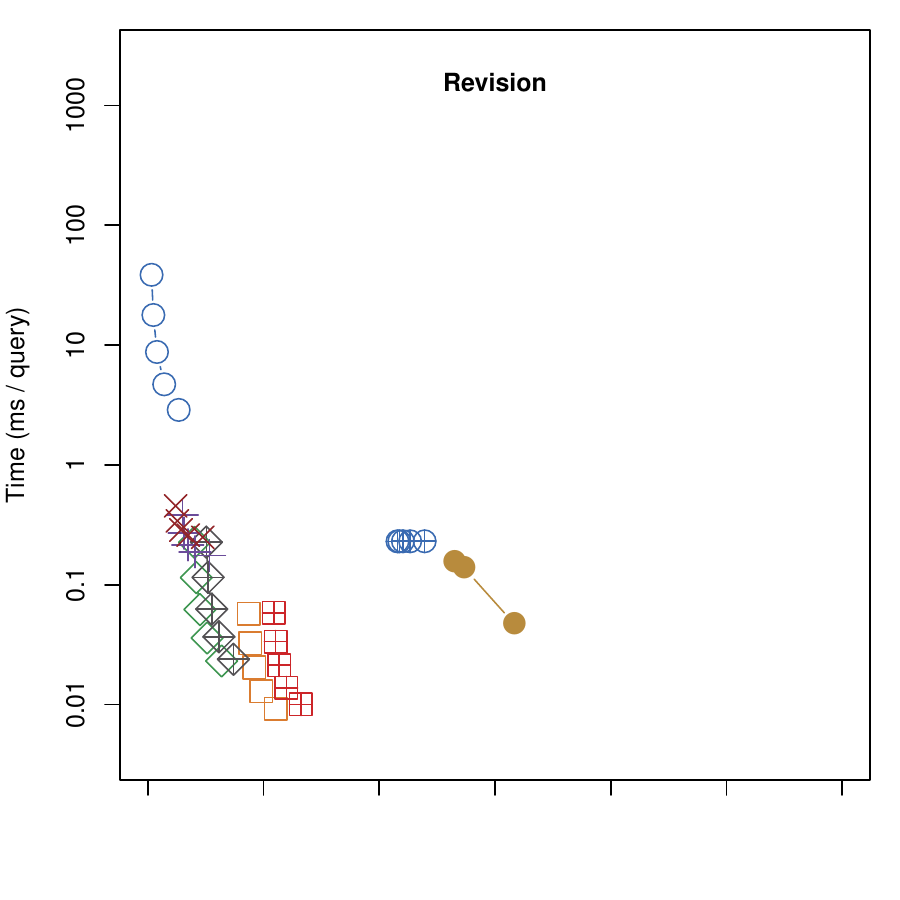}
\endminipage\hfill
\minipage{159.3pt}
  \includegraphics[trim = 43.2 43.2 0 0, width=\linewidth]{./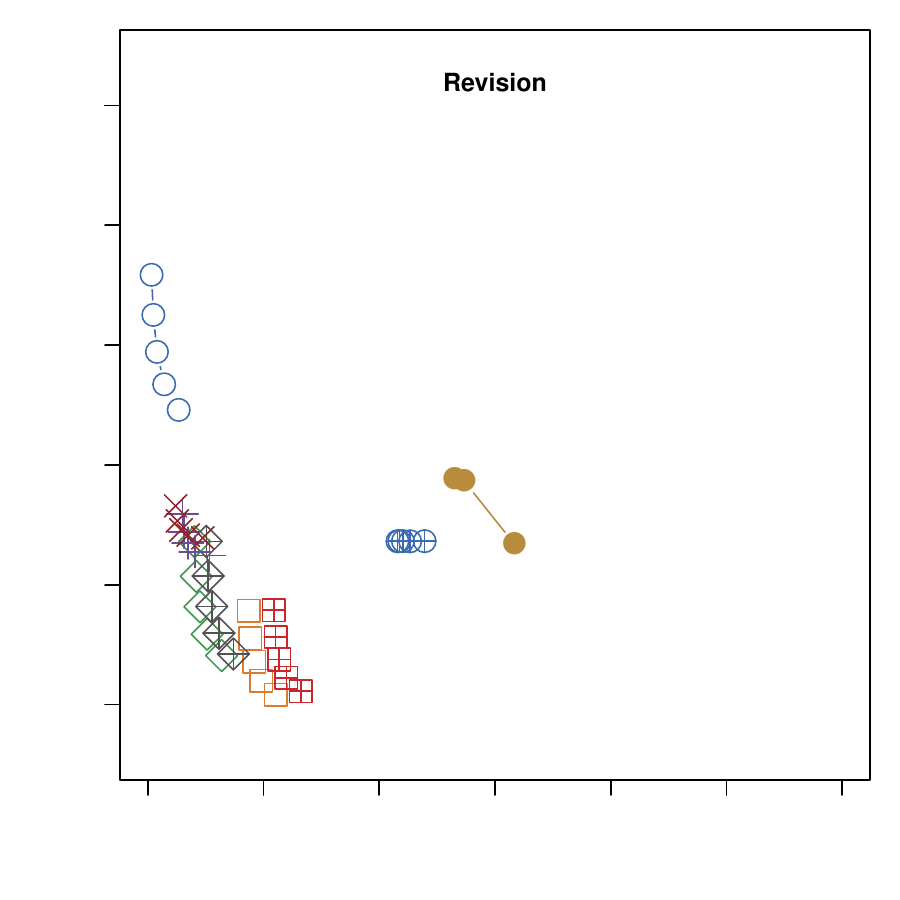}
\endminipage
\newline
\minipage{177pt}
  \includegraphics[trim = 0 43.2 0 0, width=\linewidth]{./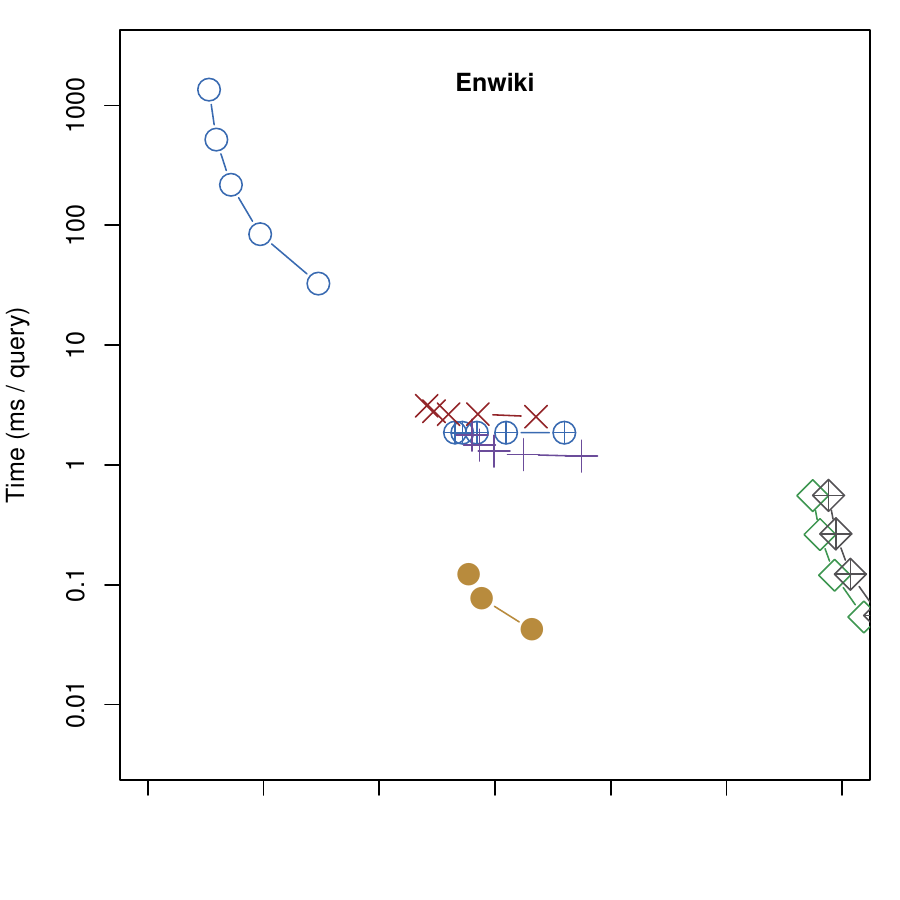}
\endminipage\hfill
\minipage{159.3pt}
  \includegraphics[trim = 43.2 43.2 0 0, width=\linewidth]{./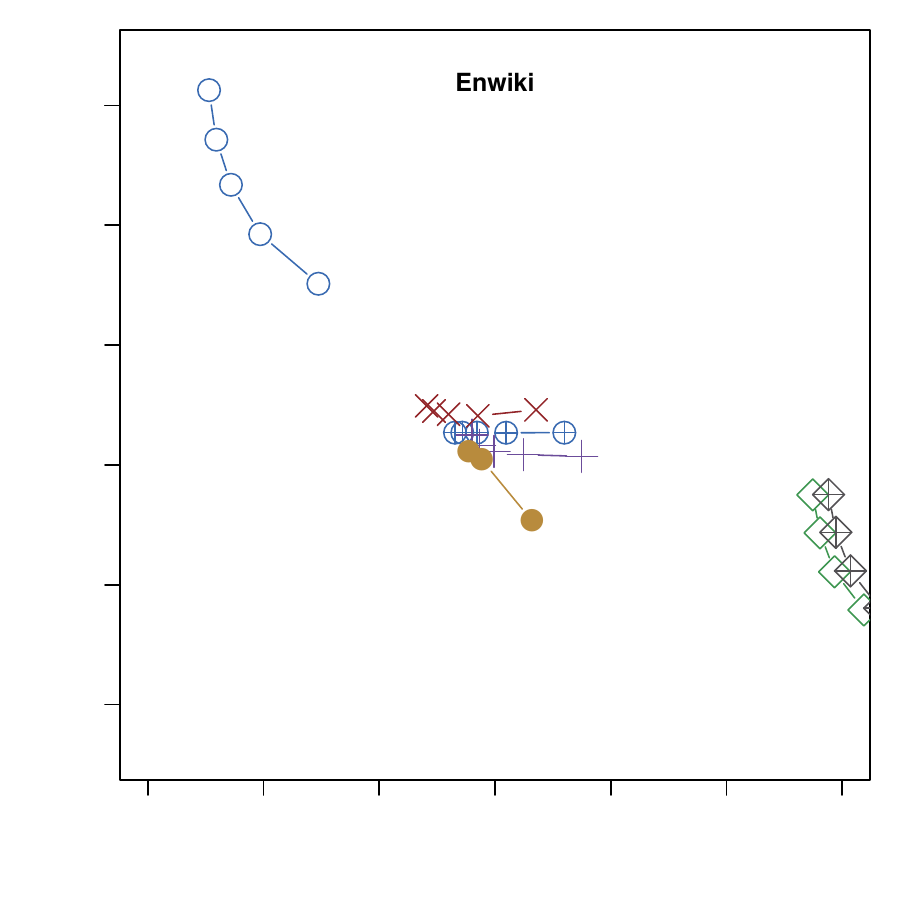}
\endminipage
\newline
\minipage{177pt}
  \includegraphics[trim = 0 0 0 0, width=\linewidth]{./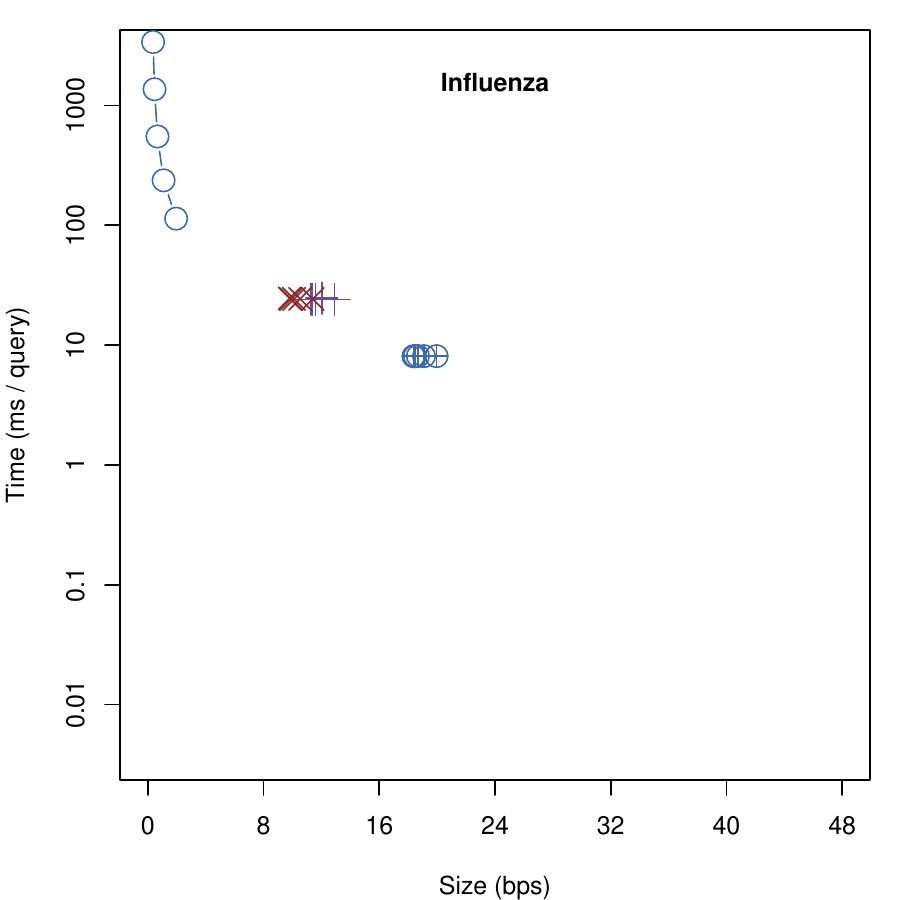}
\endminipage\hfill
\minipage{159.3pt}
  \includegraphics[trim = 43.2 0 0 0, width=\linewidth]{./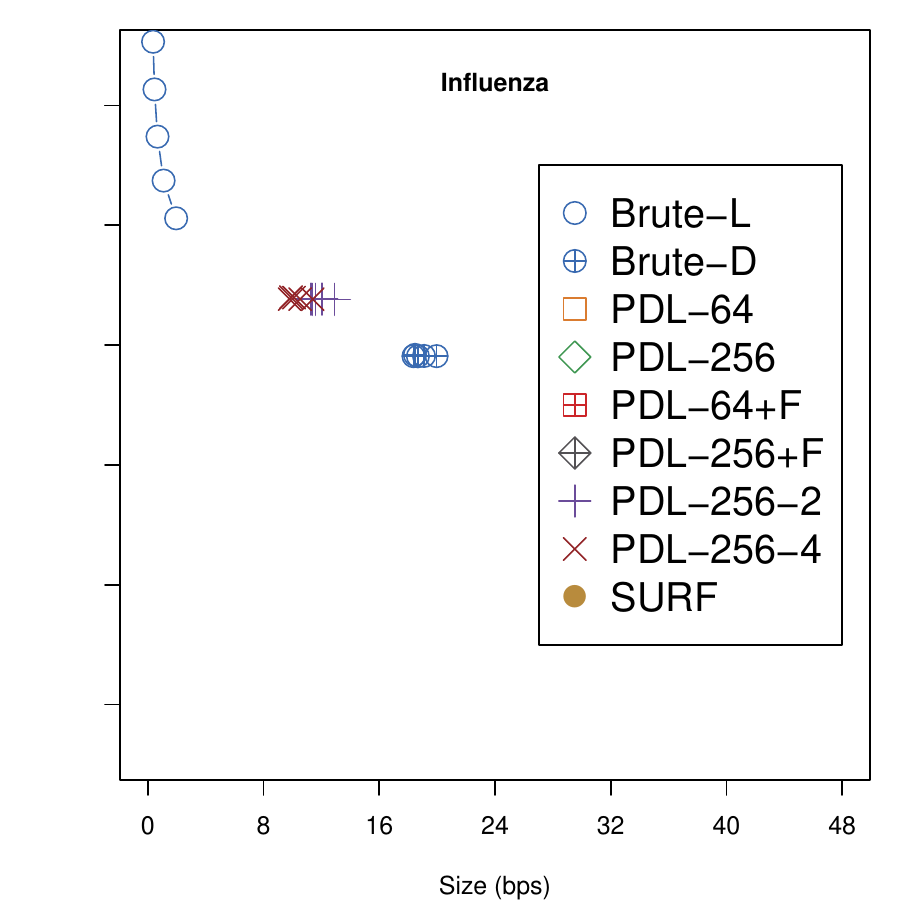}
\endminipage

\caption{Single-term top-$k$ retrieval on real collections with $k = 10$ (left) and $k = 100$ (right). The total size of the index in bits per symbol (x) and the average time per query in milliseconds (y).}
\label{figure:single-term}
\end{figure}

The three collections proved to be very different. With \Revision, the \PDL{} variants were both fast and space-efficient. When storing factor $\beta$ was not set, the total query times were dominated by rare patterns, for which \PDL{} had to resort to using \BruteL. This also made block size $b$ an important time/space trade-off. When the storing factor was set, the index became smaller and slower and the trade-offs became less significant. \SURF{} was larger and faster than \BruteD{} with $k = 10$ but became slow with $k=100$.

On \Enwiki{}, the variants of \PDL{} with storing factor $\beta$ set had a performance similar to \BruteD{}. \SURF{} was faster with roughly the same space usage. \PDL{} with no storing factor was much larger than the other solutions. However, its time performance became competitive for $k = 100$, as it was almost unaffected by the number of documents requested.

The third collection, \Influenza, was the most surprising of the three. \PDL{} with storing factor $\beta$ set was between \BruteL{} and \BruteD{} in both time and space. We could not build \PDL{} without the storing factor, as the document sets were too large for the Re-Pair compressor. The construction of \SURF{} 
also failed with this dataset.

\subsection{Document Counting}\label{sec:doccount-experiments}

\subsubsection{Indexes}

We use two fast document listing algorithms as baseline document counting methods (see Section~\ref{section:algorithms}): \BruteD{} sorts the query range 
$\DA[\ell..r]$ to count the number of distinct document identifiers, and \PDLRP{} 
returns the length of the list of documents obtained. Both indexes use the
RLCSA with suffix array sample period set to $32$ on non-repetitive datasets,
and to $128$ on repetitive datasets.

We also consider a number of encodings of Sadakane's document counting structure (see Section~\ref{sec:count}). The following ones encode the bitvector $H'$ directly in a number of ways:

\begin{itemize}

\item \Sada{} uses a plain bitvector representation.

\item \SadaR{} uses a run-length encoded bitvector as supplied in the RLCSA implementation.
It uses $\delta$-codes to represent run lengths and packs them into blocks of 32 bytes of
encoded data. Each block stores how many bits and 1s are there before it.

\item \sadaR{} uses a run-length encoded bitvector, represented with a sparse
bitmap \citep{OS07} marking the beginnings of the 0-runs and another for the 1-runs.

\item \sadaD{} uses run-length encoding with $\delta$-codes to represent the
lengths. Each block in the bitvector contains the encoding of 128 1-bits, while
three sparse bitmaps are used to mark the number of bits,
1-bits, and starting positions of block encodings.

\item \SadaG{} uses a grammar-compressed bitvector \citep{NO14}.

\end{itemize}

The following encodings use filters in addition to bitvector $H'$:

\begin{itemize}

\item \SadaPG{} uses \Sada{} for $H'$ and a gap-encoded bitvector for the filter
bitvector $F$. The gap-encoded bitvector is also provided in the RLCSA implementation.
It differs from the run-length encoded bitvector by only encoding runs of 0-bits.

\item \SadaPR{} uses \Sada{} for $H'$ and \SadaR{} for $F$.

\item \SadaRG{} uses \SadaR{} for $H'$ and a gap-encoded bitvector for $F$.

\item \SadaRR{} uses \SadaR{} for both $H'$ and $F$.

\item \sadaSS{} uses sparse bitmaps for both $H'$ and the sparse filter $F_{S}$.

\item \sadaS{} is \sadaSS{} with an additional sparse bitmap for the 1-filter $F_{1}$

\item \sadaRS{} uses \sadaR{} for $H'$ and a sparse bitmap for $F_{1}$.

\item \sadaDS{} uses \sadaD{} for $H'$ and a sparse bitmap for $F_{1}$.

\end{itemize}

Finally, \wt{} implements the technique described in
Section~\ref{sec:ilcp-count},
using the same encoding as in \sadaR{} to represent the bitvectors in the wavelet tree.

Our implementations of the above methods can be found online.\footnote{\url{http://jltsiren.kapsi.fi/rlcsa} and \url{https://github.com/ahartik/succinct}}

\subsubsection{Results}

Due to the use of 32-bit variables in some of the implementations, we could not build all structures for the large real collections. Hence we used the medium versions of \Page, \Revision, and \Enwiki{}, the large version of \Influenza, and the only version of \Swissprot{} for the benchmarks. We started the queries from precomputed lexicographic ranges $[\ell..r]$ in order to emphasize the differences between the fastest variants. For the same reason, we also left out of the plots the size of the RLCSA and the possible document retrieval structures. Finally, as it was almost always the fastest method, we scaled the plots to leave out anything much larger than plain \Sada{}. The results can be seen in Figure~\ref{figure:counting}. Table~\ref{table:counting-results} in Appendix~\ref{appendix:results} lists the results in further detail.

\begin{figure}[p]
\includegraphics[trim = 0 43.2 0 0, width=\textwidth]{./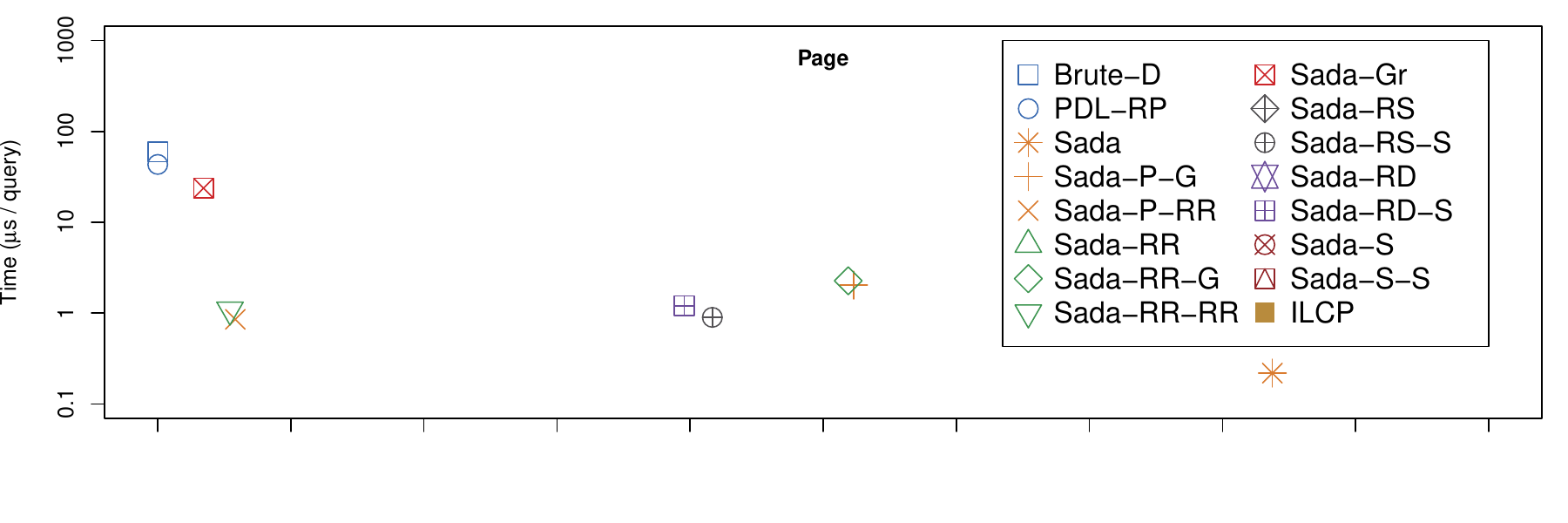}
\includegraphics[trim = 0 43.2 0 0, width=\textwidth]{./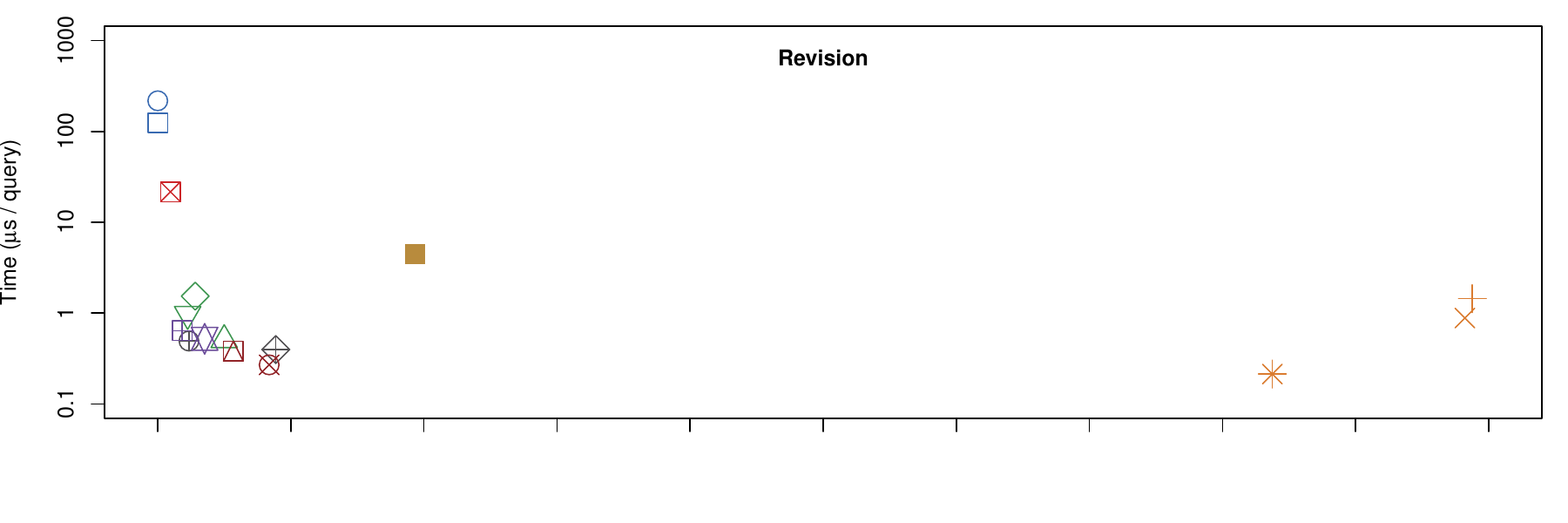}
\includegraphics[trim = 0 43.2 0 0, width=\textwidth]{./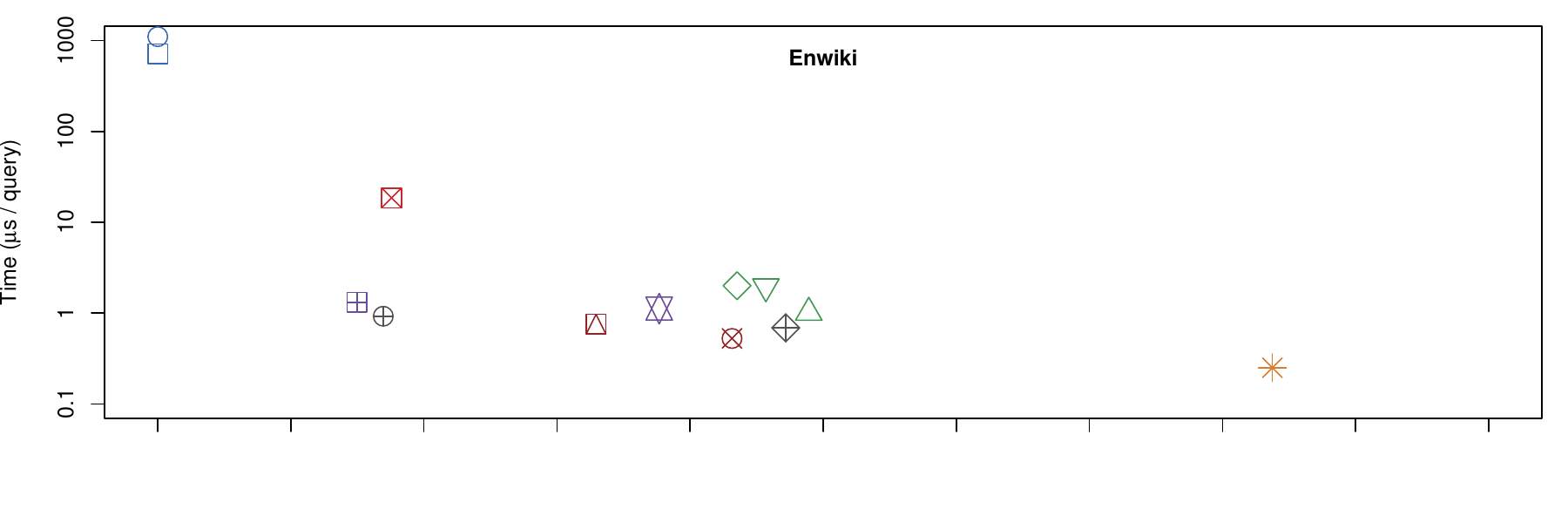}
\includegraphics[trim = 0 43.2 0 0, width=\textwidth]{./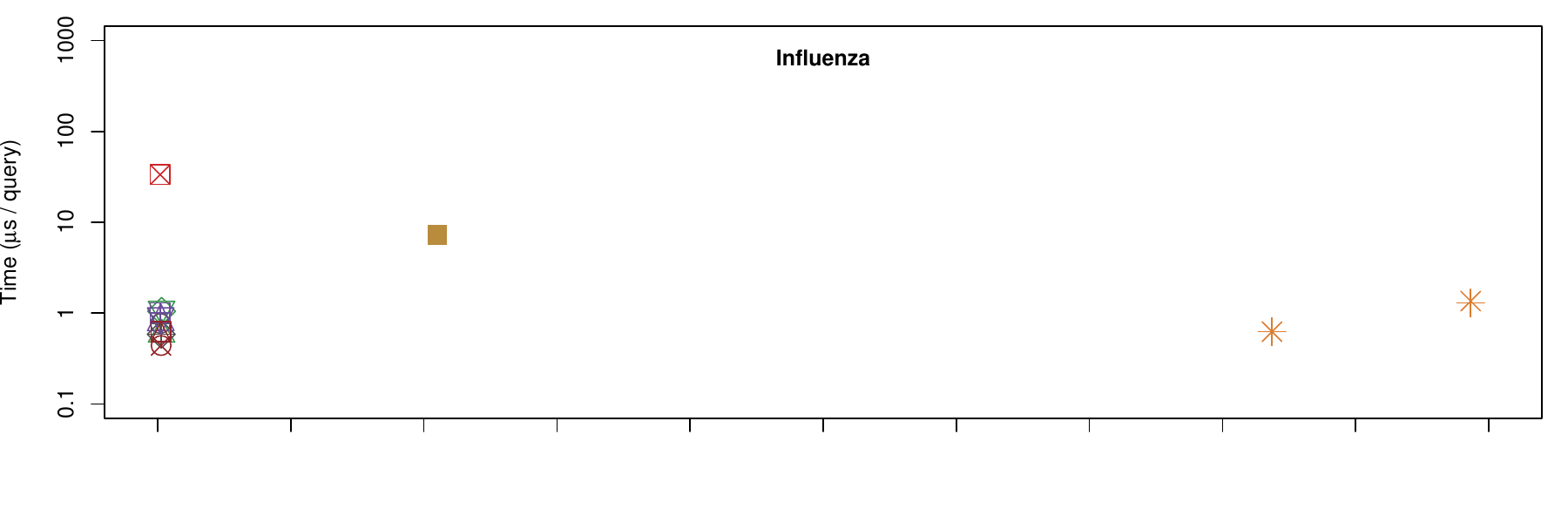}
\includegraphics[width=\textwidth]{./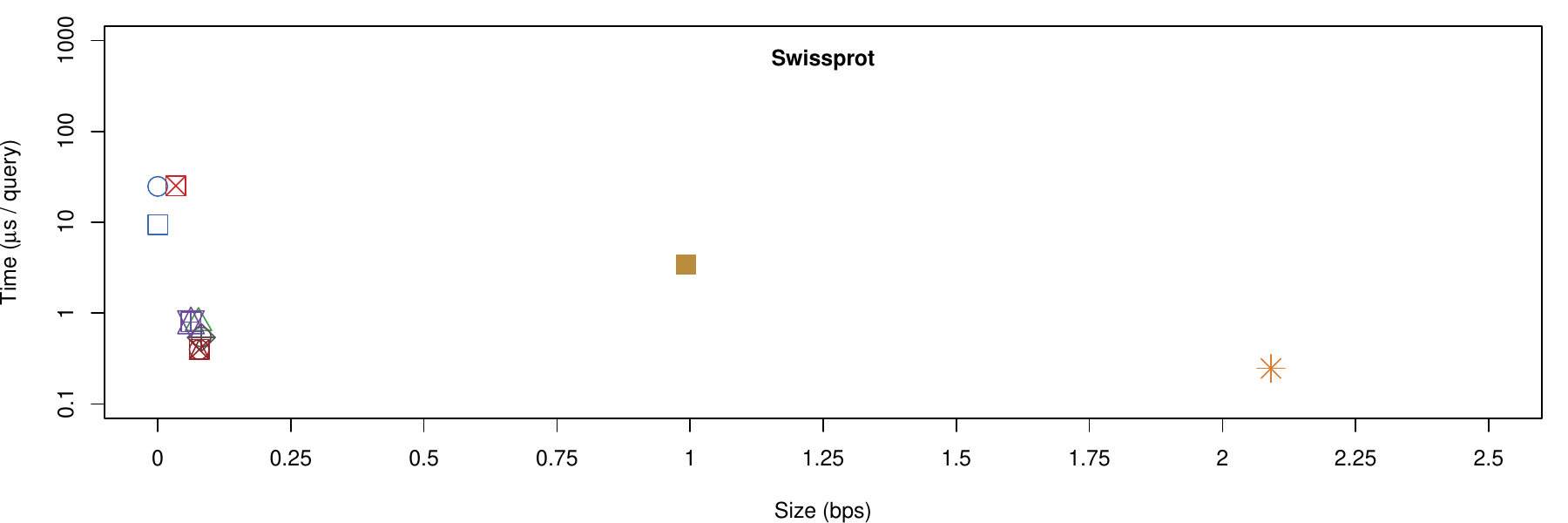}

\caption{Document counting on different datasets. The size of the counting structure in bits per symbol (x) and the average query time in microseconds (y). The baseline document listing methods are presented as having size $0$, as they take advantage of the existing functionalities in the index.}
\label{figure:counting}
\end{figure}

On \Page{}, the filtered methods \SadaPR{} and \SadaRR{} are clearly the best choices, being only slightly larger than the baselines and orders of magnitude faster. Plain \Sada{} is much faster than those, but it takes much more space than all the other indexes.
Only \SadaG{} compresses the structure better, but it is almost as slow
as the baselines.

On \Revision{}, there were many small encodings with similar performance. Among those, \sadaRS{} is the fastest. \sadaSS{} is somewhat larger and faster. As on \Page{}, plain \Sada{} is even faster, but it takes much more space.

The situation changes on the non-repetitive \Enwiki. Only \sadaDS, \sadaRS,
and \SadaG{} can compress the bitvector clearly below $1$~bit per symbol, and \SadaG{} is much slower than the other two. At around $1$~bit per symbol, \sadaSS{} is again the fastest option. Plain \Sada{} requires twice as much space as \sadaSS, but is also twice as fast.

\Influenza{} and \Swissprot{} contain, respectively, RNA and protein sequences, making
each individual document quite random.  Such collections are easy cases for Sadakane's
method, and many encodings compress the bitvector very well. In both cases, \sadaSS{}
was the fastest small encoding. On \Influenza, the small encodings fit in CPU cache,
making them often faster than plain \Sada.

Different compression techniques succeed with different collections, for different reasons, which
complicates a simple recommendation for a best option. Plain \Sada{} is always fast,
while \sadaSS{} is usually smaller without sacrificing too much performance. When more
space-efficient solutions are required, the right choice depends on the type of the collection.
Our ILCP-based structure, \wt, also outperforms \Sada{} in space on most collections, but it
is always significantly larger and slower than compressed variants of \Sada. 

\subsection{The Multi-term \tfidf\ Index}

We implement our multi-term index as follows.
We use RLCSA as the $\CSA$, \PDLtopk{256+F} for single-term top-$k$ retrieval, and \sadaSS{} for document counting. We could have integrated the document counts into the \PDL{} structure, but a separate counting structure makes the index more flexible. Additionally, encoding the number of redundant documents in each internal node of the suffix tree (\Sada) often takes less space than encoding the total number of documents in each node of the sampled suffix tree (\PDL).
We use the basic \tfidf\ scoring scheme.

We tested the resulting performance on the 1432~MB \Wiki{} collection. RLCSA took 0.73~bps with sample period $128$ (the sample period did not have a significant impact on query performance), \PDLtopk{256+F} took 3.37~bps, and \sadaSS{} took 0.13~bps, for a total of 4.23~bps (757~MB). Out of the total of 100{,}000 queries in the query set, there were matches for 31{,}417 conjunctive queries and 97{,}774 disjunctive queries.

The results can be seen in Table~\ref{table:multi-term}. When using a single
query thread, the index can process 136--229 queries per second (around 4--7
milliseconds per query), depending on the query type and the value of $k$.
Disjunctive queries are faster than conjunctive queries, while larger values
of $k$ do not increase query times significantly. Note that our ranked
disjunctive query algorithm preempts the processing of the lists
of the patterns, whereas in the conjunctive ones we are forced to expand the
full document lists for all the patterns; this is why the former are faster.
The speedup from using 32 threads is around 18x.

\begin{table}[t!]\centering
\caption{Ranked multi-term queries on the \Wiki{} collection. Query type, number of documents requested, and the average number of queries per second with 1, 8, 16, and 32 query threads.}\label{table:multi-term}

\begin{tabular}{cccccc}
\hline
\noalign{\smallskip}
Query      & $k$ & 1 thread & 8 threads & 16 threads & 32 threads \\
\noalign{\smallskip}
\hline
\noalign{\smallskip}
Ranked-AND &  10 &      152 &       914 &       1699 & 2668 \\
           & 100 &      136 &       862 &       1523 & 2401 \\
\noalign{\smallskip}
Ranked-OR  &  10 &      229 &      1529 &       2734 & 4179 \\
           & 100 &      163 &      1089 &       1905 & 2919 \\
\noalign{\smallskip}
\hline
\end{tabular}
\end{table}

Since our multi-term index offers a functionality similar to basic inverted
index queries, it seems sensible
to compare it to an inverted index designed for natural language texts. For this purpose, we indexed the \Wiki{} collection using Terrier~\citep{Macdonald2012} version~4.1 with the default settings. See Table~\ref{table:terrier} for a comparison between the two indexes. 

Note that the similarity in the functionality is only superficial: our index
can find {\em any text substring}, whereas the inverted index can only look
for indexed {\em words and phrases}. Thus our index has an index point per
symbol, whereas Terrier has an index point per word (in addition, inverted indexes usually discard words deemed uninteresting, like stopwords). Note that \PDL\ also chooses frequent strings and builds their lists of documents, but since it has many more index points, its posting lists are 200 times longer than those of Terrier, and the number of lists is 300 times larger. Thanks to the compression of its lists, however, \PDL\ uses only 8 times more space than Terrier. On the other hand, both indexes have similar query performance. When logging and output was set to minimum, Terrier could process 231~top-10 queries and 228~top-100 queries per second under the \tfidf\ scoring model using a single query thread.

\begin{table}[t!]\centering
\caption{Our index (PDL) and an inverted index (Terrier) on the \Wiki{} collection. The size of the vocabulary, the posting lists, and the collection in millions of elements, the size of the index in megabytes, and the number of Ranked-OR queries per second with $k = 10$ or $100$ using a single thread.}\label{table:terrier}

\begin{tabular}{ccccccc}
\hline
\noalign{\smallskip}
Index   & Vocabulary & Posting lists & Collection & Size & \multicolumn{2}{c}{Queries / second} \\
\noalign{\smallskip}
\hline
\noalign{\smallskip}
PDL     & 39.2M      & 8840M         & 1500M      & 757  & 229      & 163 \\
        & substrings & documents     & symbols    & MB   & ($k=10$) & ($k=100$) \\
\noalign{\smallskip}
Terrier & 0.134M     & 42.3M         & 133M       & 90.1 & 231    & 228 \\
        & tokens     & documents     & tokens     & MB   & ($k=10$) & ($k=100$) \\
\noalign{\smallskip}
\hline
\end{tabular}
\end{table}

\section{Conclusions}
\label{sec:concl}

We have investigated the space/time tradeoffs involved in indexing highly
repetitive string collections, with the goal of performing information retrieval
tasks on them. Particularly, we considered the problems of document listing,
top-$k$ retrieval, and document counting. We have developed new indexes that
perform particularly well on those types of collections, and studied how other
existing data structures perform in this scenario, and in which cases the 
indexes are actually better than brute-force approaches. As a result, we 
offered recommendations on which structures to use depending on the kind of 
repetitiveness involved and the desired space usage. As a proof of
concept, we have shown how the tools we developed can be assembled to build
an efficient index supporting ranked multi-term queries on repetitive string 
collections.

We do not aim to outperform inverted indexes on natural language text 
collecions, where they are unbeatable, but rather to offer similar capabilities on
generic string collections, where inverted indexes cannot be applied.
Our developments are at the level of algorithmic ideas and prototypes. In
order to have our most promising structures scale up to real-world information 
systems, where inverted indexes are now the norm, various research problems 
must be faced:
\begin{enumerate}
\item Our construction algorithms scale up to a few gigabytes. This limits
the collection sizes we can handle, even if they are repetitive and thus the
final structures are much smaller. For example, our PDL structure first 
builds the classical suffix tree and then samples it. Using
construction space proportional to that of the final structures in the case
of repetitive scenarios, or building efficiently using the disk, is an 
important research problem.
\item When the datasets are sufficiently large, even the compressed structures
will have to operate on disk. Inverted indexes are extremely disk-friendly,
which makes them perform well on huge text collections. We have not yet studied
this aspect of our structures, although PDL seems well-suited to this case: it
traverses one or a few contiguous lists (which should be decompressed in main memory)
or a contiguous area of the suffix array.
\item Our data structures are static, that is, they must be rebuilt from
scratch when documents are inserted in the collection or deleted from it.
Inverted indexes tolerate updates much better, though they are not fully
dynamic either. Instead, since in many scenarios updates are not so frequent,
popular solutions combine a large part of the collection that is indexed and
a small recent part that is traversed sequentially. It is likely that our
structures will also perform well under such a scheme, as long as we manage to
rebuild the index periodically within controlled space and time.
\item We showed that our structures can handle multi-term queries under the
simple tf-idf scoring scheme. While this can be acceptable in some applications
for generic string collections, information retrieval on natural language texts
uses, nowadays, much more sophisticated formulas. Inverted indexes have been
adapted to successfully support those formulas that are used for a first 
filtration step, such as BM25. Studying how to extend our indexes to handle these 
is another interesting research problem.
\item One point where our indexes could outperform inverted indexes is in phrase
queries, where inverted indexes must perform costly list intersections. Our
suffix-array based indexes, instead, need not do anything special. For a 
fair comparison, we should regard the text as a sequence of tokens (i.e., the
terms that are indexed by the inverted index) and build our indexes on them.
The resulting structure would then only answer term and phrase queries, just 
like an inverted index, but would be must faster at phrases.
\end{enumerate}

\begin{acknowledgements}

This work was supported in part by 
Academy of Finland grants 268324, 258308, 250345 (CoECGR), and 134287; 
the Helsinki Doctoral Programme in Computer Science; 
the Jenny and Antti Wihuri Foundation, Finland; 
the Wellcome Trust grant 098051, UK;
Fondecyt grant 1-140796, Chile; 
the Millennium Nucleus for Information and Coordination in Networks (ICM/FIC
P10-024F), Chile; 
Basal Funds FB0001, Conicyt, Chile; and
European Union’s Horizon 2020 research and innovation programme
under the Marie Sklodowska-Curie grant agreement No 690941.
Finally, we thank the reviewers for their useful comments, which helped improve
the presentation, and Meg Gagie for correcting our grammar.
\end{acknowledgements}

\clearpage
\appendix
\section{Detailed Results}\label{appendix:results}

Table~\ref{table:counting-results} shows the precise numerical results
displayed in Figure~\ref{figure:counting}, to allow for a finer-grained 
comparison.

\begin{table}[h!]\centering
\caption{Document counting on different datasets. The average query time in microseconds and the size of the counting structure in bits per symbol. Results on the Pareto frontier have been \textbf{highlighted}. The baseline document listing methods \BruteD{} and \PDLRP{} are presented as having size $0$, as they take advantage of the existing functionalities in the index. We did not build \SadaPG, \SadaPR, \SadaRG, and \SadaRR{} for \Swissprot, because the filter was empty and the remaining structure was equivalent to \Sada{} or \SadaR.}\label{table:counting-results}

\newlength{\uslength}
\settowidth{\uslength}{\textmu{}s}
\newcommand{\busw}{\makebox[\uslength][l]{b}}
\begin{tabular}{rrrrrr}
\hline
\noalign{\smallskip}
 & \multicolumn{1}{c}{\Page} & \multicolumn{1}{c}{\Revision} & \multicolumn{1}{c}{\Enwiki} & \multicolumn{1}{c}{\Influenza} & \multicolumn{1}{c}{\Swissprot} \\
\noalign{\smallskip}
\hline
\noalign{\smallskip}
\BruteD & 59.419 \textmu{}s & \textbf{124.286} \textmu{}s & \textbf{714.481} \textmu{}s & \textbf{4557.310} \textmu{}s & \textbf{9.392} \textmu{}s \\
 & 0.000 \busw & \textbf{0.000} \busw & \textbf{0.000} \busw & \textbf{0.000} \busw & \textbf{0.000} \busw \\
\noalign{\smallskip}
\PDLRP & \textbf{43.356} \textmu{}s & 217.804 \textmu{}s & 1107.470 \textmu{}s & 6221.610 \textmu{}s & 24.848 \textmu{}s \\
 & \textbf{0.000} \busw & 0.000 \busw & 0.000 \busw & 0.000 \busw & 0.000 \busw \\
\noalign{\smallskip}
\hline
\noalign{\smallskip}
\Sada & \textbf{0.218} \textmu{}s & \textbf{0.213} \textmu{}s & \textbf{0.250} \textmu{}s & 0.624 \textmu{}s & \textbf{0.246} \textmu{}s \\
 & \textbf{2.094} \busw & \textbf{2.094} \busw & \textbf{2.094} \busw & 2.093 \busw & \textbf{2.091} \busw \\
\noalign{\smallskip}
\SadaPG & 2.030 \textmu{}s & 1.442 \textmu{}s & 1.608 \textmu{}s & 1.291 \textmu{}s & \multicolumn{1}{c}{--} \\
 & 1.307 \busw & 2.469 \busw & 2.694 \busw & 2.466 \busw & \multicolumn{1}{c}{--} \\
\noalign{\smallskip}
\SadaPR & \textbf{0.852} \textmu{}s & 0.882 \textmu{}s & 1.572 \textmu{}s & 1.356 \textmu{}s & \multicolumn{1}{c}{--} \\
 & \textbf{0.146} \busw & 2.455 \busw & 2.748 \busw & 2.466 \busw & \multicolumn{1}{c}{--} \\
\noalign{\smallskip}
\hline
\noalign{\smallskip}
\SadaR & 1.105 \textmu{}s & 0.506 \textmu{}s & 1.013 \textmu{}s & 0.581 \textmu{}s & \textbf{0.779} \textmu{}s \\
 & 5.885 \busw & 0.125 \busw & 1.223 \busw & 0.007 \busw & \textbf{0.076} \busw \\
\noalign{\smallskip}
\SadaRG & 2.268 \textmu{}s & 1.535 \textmu{}s & 2.001 \textmu{}s & 1.046 \textmu{}s & \multicolumn{1}{c}{--} \\
 & 1.297 \busw & 0.070 \busw & 1.088 \busw & 0.007 \busw & \multicolumn{1}{c}{--} \\
\noalign{\smallskip}
\SadaRR & \textbf{1.088} \textmu{}s & 0.974 \textmu{}s & 1.960 \textmu{}s & 1.108 \textmu{}s & \multicolumn{1}{c}{--} \\
 & \textbf{0.136} \busw & 0.056 \busw & 1.142 \busw & 0.007 \busw & \multicolumn{1}{c}{--} \\
\noalign{\smallskip}
\hline
\noalign{\smallskip}
\SadaG & 23.750 \textmu{}s & \textbf{21.643} \textmu{}s & 18.542 \textmu{}s & 33.502 \textmu{}s & \textbf{25.236} \textmu{}s \\
 & 0.086 \busw & \textbf{0.024} \busw & 0.439 \busw & 0.005 \busw & \textbf{0.034} \busw \\
\noalign{\smallskip}
\hline
\noalign{\smallskip}
\sadaR & 0.742 \textmu{}s & 0.396 \textmu{}s & 0.688 \textmu{}s & 0.584 \textmu{}s & 0.538 \textmu{}s \\
 & 5.991 \busw & 0.222 \busw & 1.180 \busw & 0.006 \busw & 0.082 \busw \\
\noalign{\smallskip}
\sadaRS & 0.897 \textmu{}s & \textbf{0.492} \textmu{}s & \textbf{0.923} \textmu{}s & \textbf{0.767} \textmu{}s & 0.545 \textmu{}s \\
 & 1.042 \busw & \textbf{0.059} \busw & \textbf{0.424} \busw & \textbf{0.005} \busw & 0.082 \busw \\
\noalign{\smallskip}
\hline
\noalign{\smallskip}
\sadaD & 1.019 \textmu{}s & 0.521 \textmu{}s & 1.119 \textmu{}s & 0.856 \textmu{}s & \textbf{0.792} \textmu{}s \\
 & 3.717 \busw & 0.088 \busw & 0.942 \busw & 0.006 \busw & \textbf{0.062} \busw \\
\noalign{\smallskip}
\sadaDS & 1.205 \textmu{}s & \textbf{0.641} \textmu{}s & \textbf{1.316} \textmu{}s & 1.005 \textmu{}s & 0.799 \textmu{}s \\
 & 0.989 \busw & \textbf{0.046} \busw & \textbf{0.374} \busw & 0.005 \busw & 0.062 \busw \\
\noalign{\smallskip}
\hline
\noalign{\smallskip}
\sadaSS & 0.604 \textmu{}s & \textbf{0.269} \textmu{}s & \textbf{0.525} \textmu{}s & \textbf{0.439} \textmu{}s & \textbf{0.396} \textmu{}s \\
 & 5.729 \busw & \textbf{0.209} \busw & \textbf{1.079} \busw & \textbf{0.006} \busw & \textbf{0.078} \busw \\
\noalign{\smallskip}
\sadaS & 0.735 \textmu{}s & \textbf{0.380} \textmu{}s & \textbf{0.755} \textmu{}s & 0.624 \textmu{}s & 0.399 \textmu{}s \\
 & 3.432 \busw & \textbf{0.142} \busw & \textbf{0.823} \busw & 0.006 \busw & 0.078 \busw \\
\noalign{\smallskip}
\hline
\noalign{\smallskip}
\wt & 4.399 \textmu{}s & 4.482 \textmu{}s & 6.033 \textmu{}s & 7.252 \textmu{}s & 3.414 \textmu{}s \\
 & 18.454 \busw & 0.484 \busw & 4.575 \busw & 0.525 \busw & 0.992 \busw \\
\noalign{\smallskip}
\hline
\end{tabular}
\end{table}

\section{Index Construction}\label{appendix:construction}

Our construction algorithms prioritize flexibility over performance. For example, the construction of the \tfidf{} index (Section~\ref{sec:tfidf}) proceeds as follows:
\begin{enumerate}

\item Build RLCSA for the collection.

\item Extract the LCP array and the document array from the RLCSA, traverse the suffix tree by using the LCP array, and build \PDL{} with uncompressed document sets.

\item Compress the document sets using a Re-Pair compressor.

\item Build the \sadaSS{} structure using a similar algorithm as for \PDL{} construction.

\end{enumerate}
See Table~\ref{table:construction} for the time and space requirements of building the index for the \Wiki{} collection.

\begin{table}[t!]\centering
\caption{Building the \tfidf{} index for the \Wiki{} collection. Construction time in minutes and peak memory usage in gigabytes for RLCSA construction, \PDL{} construction, compressing the document sets using Re-Pair, \sadaSS{} construction, and the entire construction.}\label{table:construction}

\begin{tabular}{cccccc}
\hline
\noalign{\smallskip}
       &    RLCSA &     \PDL & Re-Pair &  \sadaSS & Total \\
\noalign{\smallskip}
\hline
\noalign{\smallskip}
Time   & 10.5 min & 39.2 min & 123 min & 74.7 min & 248 min \\
\noalign{\smallskip}
Memory &  19.6 GB &   111 GB &  202 GB &  92.8 GB & 202 GB\\
\noalign{\smallskip}
\hline
\end{tabular}
\end{table}

Scaling the index up for larger collections requires faster and more space-efficient construction algorithms for its components. There are some obvious improvements:
\begin{itemize}

\item RLCSA construction can be done in less memory by building the index in multiple parts and merging the partial indexes \citep{Siren2009}. With 100~parts, the indexing of a repetitive collection proceeds at about 1~MB/s using 2\nobreakdash--3~bits per symbol \citep{Siren2012}. Newer suffix array construction algorithms achieve even better time/space trade-offs \citep{Kaerkkaeinen2015a}.

\item We can use a compressed suffix tree for \PDL{} construction. The SDSL library \citep{Gog2014b} provides fast scalable implementations that require around 2~bytes per symbol.

\item We can write the uncompressed document sets to disk as soon as the traversal returns to the parent node.

\item We can build the $H$ array for \sadaSS{} by keeping track of the lowest common ancestor of the previous occurrence of each document identifier and the current node. If node $v$ is the lowest common ancestor of consecutive occurrences of a document identifier, we increment the corresponding cell of the $H$ array. Storing the array requires about a byte per symbol.

\end{itemize}

The main bottleneck in the construction is Re-Pair compression. Our compressor requires 24~bytes of memory for each integer in the document sets, and the number of integers (8.9~billion) is several times larger than the number of symbols in the collection (1.5~billion). It might be possible to improve compression performance by using a specialized compressor. If interval $\DA[\ell..r]$ corresponds to suffix tree node $u$ and the collection is repetitive, it is likely that the interval $\DA[\ell'..r']$ corresponding to the node reached by taking the suffix link from $u$ is very similar to $\DA[\ell..r]$.

\end{document}